\documentclass[jsl]{asl}

\usepackage{xspace}

\usepackage{url}

\usepackage{latexsym}
\usepackage{amssymb}
\usepackage{xspace}
\usepackage{color}
\usepackage[dvips]{epsfig}

\DeclareMathSizes{10}{9}{6}{5}


\newcommand{\fullguarded}{\textup{FullGuarded}}

\newcommand{\canoninst}{\textup{CanonInst}}
\newcommand{\subinst}{\subseteq^w}

\newcommand{\myparagraph}[1]{{\medskip \bf {#1}.}}
\newcommand{\myeat}[1]{}
\newtheorem{theorem}{Theorem}[section]

\newtheorem{lemma}[theorem]{Lemma}

\newtheorem{corollary}[theorem]{Corollary}

\newtheorem{definition}[theorem]{Definition}
\newtheorem{fact}[theorem]{Fact}

\newtheorem{example}[theorem]{Example}




\newcommand{\frakA}{\mathfrak{A}}
\newcommand{\frakB}{\mathfrak{B}}




\newcommand{\limp}{\rightarrow}


\newcommand{\gnnf}{GN-normal form\xspace}

\newcommand\fo{\ensuremath{\textup{FO}}\xspace}

\newcommand\gfo{\ensuremath{\textup{GFO}}\xspace}

\newcommand{\gnra}{\ensuremath{\textup{GN-RA}}\xspace}
\newcommand{\gf}{\gfo}
\newcommand\gnfo{\ensuremath{\textup{GNFO}}\xspace}
\newcommand\gnf{\gnfo}

\newcommand{\gtgd}{\ensuremath{\textup{GTGD}}\xspace}
\newcommand{\fgtgd}{\ensuremath{\textup{FGTGD}}\xspace}

\newcommand{\complexity}[1]{\textsf{\mdseries\upshape #1}}

\newcommand\twoexptime{\complexity{2ExpTime}}

\newcommand{\adom}{\textup{adom}}

\newcommand{\datalogarrow}{:=}

\newcommand{\chase}{\ensuremath{Chase}}
\newcommand{\goal}{\ensuremath{Goal}}

\renewcommand{\vec}[1]{\mathbf{#1}}

\title{Some Model Theory of Guarded Negation}

\author{Vince B\'ar\'any}
\revauthor{B\'ar\'any, Vince}
\thanks{ B\'ar\'any's work done while affiliated with TU Darmstadt.}
\address{Google Inc., Mountain View, CA}
\author{ Michael Benedikt}
\revauthor{Benedikt, Michael}
\address{Department of Computer Science, University of Oxford}
\thanks{Benedikt was supported by EPSRC grant EP/H017690/1}

\author{Balder ten Cate}
\revauthor{ten Cate, Balder}
\thanks{ten Cate was supported by NSF Grants IIS-0905276 
  IIS-1217869.}

\twoaddress{Google Inc., Mountain View, CA}
{Department of Computer Science, UC-Santa Cruz}
{3}

\usepackage{paralist}
\usepackage{booktabs}
\begin{document}
\clubpenalty=10000
\widowpenalty = 10000
  \abovedisplayskip 1ex
  \belowdisplayskip 1ex
  \abovedisplayshortskip 1ex
 \belowdisplayshortskip 1ex


\date{}

\maketitle

\begin{abstract}
The Guarded Negation Fragment ($\gnf$) is a fragment of first-order logic 
that contains all positive existential formulas, can express
the first-order translations of basic modal logic and of many description logics,
along with many sentences  that arise in databases.
It has been shown that the syntax of $\gnf$ is restrictive enough so that 
computational problems such as validity and satisfiability
are still decidable. This suggests that,
in spite of its expressive power, $\gnf$ formulas are amenable to novel
optimizations. In this paper we study the model theory of $\gnf$ formulas.
Our results include effective preservation theorems for $\gnf$, effective
Craig Interpolation and Beth Definability results, and the ability to
express the certain answers of queries with respect
to a large class of $\gnf$ sentences within very restricted logics.

This version of the paper contains streamlined and corrected versions
of results concerning entailment of a conjunctive query from
a set of ground facts and a theory consisting of $\gnf$ sentences
of a special form (``dependencies'').

\end{abstract}


\section{Introduction} \label{sec:intro}

The guarded negation fragment (\gnfo) is a syntactic fragment of first-order logic, 
introduced in \cite{BtCS11icalp} as an extension to the much-studied 
guarded fragment of first-order logic \cite{AvBN98JPL,Gr99JSL}.
Both fragments restrict the use of certain 
syntactic constructs by requiring the presence of \emph{guards}, with the aim 
of taming the language from an algorithmic point of view, with an acceptable compromise 
on expressiveness. The guarded fragment is obtained by requiring all quantification to 
be guarded. This idea has its roots in modal logic and, accordingly, the 
model theory of the resulting fragment has a very similar flavour to that of modal logic. 
The guarded negation fragment is obtained instead by requiring all use of negation to be guarded.
As it turns out, the latter use of guards is more general than the former. Formally, 
every sentence of the guarded 
fragment can be equivalently expressed in the guarded negation fragment~\cite{BtCS15jacm}.
\gnfo also properly contains the positive 
existential fragment of \fo.

\gfo constitutes a rich 
formalism that captures many of the integrity constraint languages and
schema-mapping languages proposed in databases~\cite{dataint,FKMP05},
 and also many of the description logics~\cite{dl} proposed in knowledge representation. 
But \gnfo is more suitable than \gfo for expressing database \emph{queries}; 
that is, mappings from structures to relations. Indeed, as noted above, \gnfo properly 
contains all positive existential formulas.  These are the most common SQL queries,
built up using the basic SELECT FROM WHERE construct and UNION. 

The defining characteristic of \gnfo formulas is that a subformula $\psi(\vec x)$ 
with free variables $\vec x$ can only be negated when used in conjunction with a 
positive literal $\alpha(\vec x, \vec y)$, i.e.~a relational atomic formula or 
an equality atom, containing all free variables of $\psi$, as in 
$$
  \alpha(\vec x, \vec y) \land \lnot \psi(\vec x) \ ,
$$ 
where order and repetition of variables is irrelevant. 
One says that the literal $\alpha(\vec x, \vec y)$ \emph{guards} the negation. 
Unguarded negations $\neg\phi(x)$ of formulas with at most one free variable 
are also supported; this can be seen as a special case
of guarded negation through the use of a vacuous equality guard $x=x$.

It was shown in \cite{BtCS15jacm} that \gnfo possesses a number of desirable 
computational properties. For example, every satisfiable \gnfo formula has a 
finite model (\emph{finite model property}), as well as a, typically infinite, 
model of bounded tree-width (\emph{tree-like model property}).
It follows that satisfiability and entailment (hence, by the finite model property,
satisfiability and entailment in the finite) of \gnfo formulas are decidable.

In \cite{bbo} the implications  of \gnfo for database theory are explored.
For example, an SQL-based syntax for \gnfo is defined, and an analogously constrained 
variant of stratified Datalog is also presented. Several computational problems 
concerning \gnfo formulas (e.g. the ``boundedness  problem'' for a fragment of the 
fixpoint extension of \gnfo) are shown to be decidable.

In this work we investigate  model-theoretic properties of \gnfo.
We first present results showing that \gnfo formulas satisfying specific semantic
properties can be rewritten into restricted syntactic forms.
For example, we show that every \gnfo formula that is preserved under extensions
can be effectively rewritten as an existential \gnf formula.
We give an analogous result for queries preserved under homomorphisms.

Next we consider \gnfo sentences that can also be expressed as a kind of generalized
Horn sentence known in the database community as a tuple-generating dependencies (TGD). 
We provide a syntactic characterization of the \gnfo sentences that are equivalent to 
a finite set of TGDs and give a similar result for sentences in the guarded fragment.

We then turn to model theoretic results concerning explicit and implicit definability.
The Projective Beth Definability theorem states that for any property that is implicitly 
defined by a first-order theory there is a first-order formula that explicitly defines 
the property. We show the analogous result with first-order replaced by \gnfo.
Following ideas of Marx~\cite{Marx07pods} we establish a Craig Interpolation Theorem for \gnfo 
and from this conclude the Projective Beth Definability theorem for \gnfo. This is in 
contrast with the situation for the Guarded Fragment, which does enjoy the simpler Beth 
definability property~\cite{HMO}. Contradicting claims made in earlier work~\cite{Marx07pods} 
we show that Projective Beth fails for the so-called Packed Fragment.

Finally, we  study definability issues related to the ``open world query answering'' problem 
for \gnfo. Open world query answering concerns determining which results of formulas 
are implied by partial information about the underlying structure, in the form of a subset of the 
interpretations of relations and a logical theory constraining the completion. More formally, 
the input to this problem is a set $\Sigma$ of \gnfo sentences, a finite structure $F$, and 
a positive existential formula $Q$. The goal is to determine the values of $Q$ that hold in 
every structure extending the interpretations of relations in $F$ and satisfying $\Sigma$.
These values are sometimes referred to as ``the certain answers to $Q$ under $\Sigma$''. 
The complexity of open world query answering has already been identified for several \gnfo-based
languages in \cite{bbo}. Here we show that \gnfo sentences that are equivalent to a set of TGDs
have additional attractive properties from the point of view of open world query answering. 
Specifically, we extend and correct results of Baget et.~al.~\cite{bagetconf} by showing that 
the certain answers can always be determined by evaluating a sentence in a small fragment of 
(guarded negation) fixpoint logic, Guarded Negation Datalog, for which boundedness was shown 
decidable in~\cite{bbo}. From this we conclude that first-order definability of certain answers 
of \gnfo TGDs is decidable.

An extended abstract of the present paper appeared in \cite{mfcs14} and
a journal version in \cite{jsl}.
This article contains revised versions of the proofs in Section \ref{sec:rewritedep}.
Related work both prior to and subsequent to \cite{mfcs14} is discussed in Section~\ref{sec:conc}.

{\bf Organization:} Section \ref{sec:prelim} contains preliminaries.
Section \ref{sec:preserve} looks at rewriting for restricted fragments of $\gnf$,
while Section \ref{sec:interpol} looks at rewriting of queries with respect to views,
via results on Craig interpolation and Beth definability.
Section \ref{sec:rewritedep} presents our results on rewriting 
the certain answers of conjunctive queries with respect to  $\gnf$ TGDs.
Section \ref{sec:conc} covers conclusions and related work.

\section{Definitions and Preliminaries} \label{sec:prelim}
We work with fragments of first-order logic (FO) with equality and with its usual semantics,
restricting attention to finite signatures consisting of relation symbols and constant symbols 
and no function symbols.

We assume familiarity with basic notions from model theory, 
such as a \emph{reduct} of a structure (restricting the signature), 
an \emph{expansion} of a structure, and a \emph{type} (a satisfiable set 
of formulas in a collection of variables, possibly with parameters from a structure); 
and will only rely on material that can be found in the first few chapters of 
a standard model theory textbook, such as Chang and Keisler~\cite{ChangKeisler}. 
For example, we will make use of the Compactness Theorem and work with saturated 
elementary extensions. We briefly review the notion of saturation that we need in this work. 
A structure $\frakB$ is an \emph{elementary extension} of a structure $\frakA$, 
denoted $\frakA \preceq \frakB$, if $\frakB$ is an extension of $\frakA$ and every FO sentence 
with parameters from $\frakA$ that is true in $\frakA$ is also true in $\frakB$.
A structure $\frakA$ is \emph{$\omega$-saturated} if for every set of formulas $\Gamma(\vec x)$
(where $\vec x=x_1, \ldots, x_n$)
containing finitely many parameters from $\frakA$, if every finite subset of $\Gamma(\vec x)$ 
is realized by some $n$-tuple in $\frakA$, then the entire set $\Gamma(\vec x)$ is realized by 
an $n$-tuple in $\frakA$.
The conclusion means that there is a tuple $\vec c$ of elements of the domain of $\frakA$
such that $\frakA \models  \gamma(\vec c)$ for all $\gamma(\vec x) \in \Gamma(\vec x)$. 
A first-order structure is \emph{recursively saturated} if the conclusion above holds
when the collection $\Gamma$ is further required to be recursive (or, in other words,~decidable).
A basic result in model theory is that every structure has an $\omega$-saturated
elementary extension, and every countable structure (in a countable signature)
has a countable recursively-saturated elementary extension.

A
\emph{homomorphism} $h: \frakA \to \frakB$ between structures $\frakA$ and $\frakB$
is a map from the domain of $\frakA$ to the domain of $\frakB$
that preserves the 
relations
(i.e., $(a_1, \ldots, a_n)\in R^\frakA$ implies 
$(h(a_1), \ldots, h(a_n))\in R^\frakB$)
as well as the interpretation of all constant symbols 
(i.e., $h(c^\frakA) = c^\frakB$).

The primary focus of this paper is on finite structures. Finite model theory is 
concerned with logical semantics restricted to finite structures. When working with 
both classical and finite model semantics additional care must be taken to make it 
clear in each instance which semantics is meant. Crucially, both \gfo and \gnfo 
possess the finite model property (every satisfiable sentence has a finite model), 
which for most purposes voids the distinction between the two semantics and allows us 
to employ classical tools in the service of finite model theory. But at times, 
when working with different formalisms, we will need to be more specific as to 
which semantics is meant. We shall use the shorthand ``(Both classically and in the finite.)'' 
in formal assertions to signify that the statement holds equally true when 
semantic entailment is unrestricted and when it is restricted to finite structures.

\myparagraph{Database query languages and constraint languages}
One  motivation for this work is to explore how well \gnfo is suited for 
database applications. Accordingly, we will work with several logics and 
that are common in database theory,  introduced below. 
\begin{compactitem}
\item \emph{Existential FO}, comprises formulas $\exists x_1 \ldots x_n ~ \phi$, 
      where $\phi$ is quantifier-free.
\item \emph{Conjunctive queries} (CQ), are the subset of existential FO 
      where the quantifier-free kernel $\phi$ above does not contain 
      disjunction or negation.
      Equivalently, these are the first-order formulas in prenex normal form
      built up using only $\wedge$ and $\exists$.
      A \emph{boolean} conjunctive query is a CQ without free variables, that is,
      expressed as a FO sentence. 
\item \emph{Acyclic conjunctive queries} form an algorithmically well-behaved 
      subclass of conjunctive queries~\cite{Yannakakis81,FFG02,GLS03}.
      The standard definition of acyclic CQ involves the notions of hypergraph acyclicity 
      and hypergraph structure of a CQ~\cite{GLS03}. We will not need to directly use
this  definition, but only the following equivalent characterization,
which generalizes one in ~\cite{GLS03} for boolean acyclic CQs.
A formula $\phi$ is \emph{answer-guarded} if it is of the form $\phi(\vec x) = R(\vec x) \wedge \phi'$
 for some $\phi'$ and relation symbol $R$.  Then we have the following alternative characterization
of acyclic answer-guarded CQs:

\begin{fact} \label{fact:acyclicCQ}
An answer-guarded conjunctive query is acyclic iff it is equivalent to a positive existential \gfo formula.
\end{fact}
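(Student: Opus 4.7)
The plan is to prove both directions of the stated equivalence, with the forward direction (acyclicity implies equivalence to a positive existential \gfo formula) containing the substantive content, and the backward direction being essentially a conjunctive-query equivalence argument.

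For the forward direction, suppose $\phi(\vec x) = R(\vec x) \wedge \phi'$ is an answer-guarded acyclic CQ. Invoking the standard hypergraph characterization of acyclicity, $\phi$ admits a join tree $T$ whose nodes are its atoms and which satisfies the connectivity property: for every variable $v$, the set of atoms containing $v$ induces a subtree of $T$. Since the atom $R(\vec x)$ contains all free variables of $\phi$, we re-root $T$ at $R(\vec x)$. I then define a formula by a top-down traversal of $T$: at each node $A(\vec y)$ with children $B_1(\vec z_1), \ldots, B_k(\vec z_k)$, set
$$\tau(A(\vec y)) \;=\; A(\vec y) \wedge \bigwedge_{i=1}^{k} \exists \vec w_i \, \tau(B_i(\vec z_i)),$$
where $\vec w_i$ lists the variables of the subtree rooted at $B_i$ that do not appear in $A(\vec y)$. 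By construction, $\tau(B_i(\vec z_i))$ has $B_i(\vec z_i)$ as its outermost conjunct, so each existential quantifier is syntactically guarded. The connectivity property of the join tree ensures that no variable of the subtree rooted at $B_i$ can appear outside that subtree without also appearing in $B_i(\vec z_i)$; this guarantees that $B_i(\vec z_i)$ covers all free variables of $\tau(B_i(\vec z_i))$ other than those already bound above, yielding a legitimate positive existential \gfo formula whose root $\tau(R(\vec x))$ is plainly equivalent to $\phi$.

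For the backward direction, any positive existential \gfo formula can be rewritten into disjunctive shape by distributing $\wedge$ over $\vee$ and pulling existentials outward; the result is a UCQ $\bigvee_j \psi_j(\vec x)$ whose disjuncts each inherit a join-tree structure from the nesting of guarded quantifiers in the original formula, and are therefore answer-guarded and acyclic. If the given CQ $\phi(\vec x)$ is equivalent to this UCQ, then the standard canonical-model argument shows $\phi$ is homomorphically equivalent to some $\psi_j$, so $\phi$ is acyclic in the standard semantic-invariant sense, i.e., equivalent to a syntactically acyclic CQ.

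The main obstacle, though relatively mild, is the bookkeeping in the forward direction: one must verify by induction on $T$ that the recursive construction of $\tau$ always yields a properly guarded existential, which is to say that the free variables of each recursively built $\tau(B_i(\vec z_i))$ are indeed contained in $\vec z_i$ together with the variables $\vec y$ already in scope at $A(\vec y)$. This invariant falls out of the join-tree connectivity property but must be stated and maintained carefully through the recursion.
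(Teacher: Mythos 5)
The paper offers no proof of this Fact at all---it is stated as a known result, generalizing a Boolean-CQ characterization credited to Gottlob, Leone, and Scarcello---so there is no ``official'' argument to measure yours against, only its correctness. Your forward direction is the standard join-tree argument and is essentially right: rooting the join tree at the answer guard $R(\vec x)$ and folding it into nested guarded existentials works because the connectivity property guarantees both that the free variables of each $\tau(B_i(\vec z_i))$ are contained in $\vec z_i$ (so $B_i(\vec z_i)$ is a legitimate guard) and that a variable occurring in two sibling subtrees must already occur in their common parent (so quantifying the blocks $\vec w_i$ separately does not tear apart occurrences of one variable). The only blemish is that your $\vec w_i$ vacuously re-quantifies every variable whose topmost occurrence lies strictly below $B_i$; this is harmless, but is cleaner to avoid by quantifying each variable only at the highest join-tree node containing it.

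The backward direction has a genuine gap: the claim that the disjuncts of the UCQ normal form of an arbitrary positive existential \gfo formula ``are therefore answer-guarded and acyclic'' is false. \gfo places no guardedness condition on quantifier-free subformulas, so $A(x,y)\wedge B(y,z)\wedge C(z,x)$ is a positive existential \gfo formula consisting of a single conjunctive query that is neither answer-guarded nor acyclic; the nesting of guarded quantifiers controls only the quantified part, not how free variables co-occur at the top level. The repair uses the hypothesis you have not yet exploited at that point: since $\phi(\vec x)=R(\vec x)\wedge\phi'$ is answer-guarded and equivalent to the \gfo formula $\psi$, replace $\psi$ by $R(\vec x)\wedge\psi$ before normalizing. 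Each resulting disjunct $R(\vec x)\wedge\psi_j$ is then answer-guarded, and it is acyclic because $R(\vec x)$ covers every top-level atom and all free variables of every top-level guarded existential of $\psi_j$, so a join tree can be rooted at $R(\vec x)$ with the guards of nested existentials as the remaining internal nodes. With that fix your canonical-instance argument correctly shows $\phi$ is homomorphically equivalent to one acyclic disjunct. Note that this conclusion is also the most one can hope for: the query $R(x)\wedge E(x,y)\wedge E(y,z)\wedge E(z,x)\wedge E(x,x)$ is answer-guarded, equivalent to the positive existential \gfo formula $R(x)\wedge E(x,x)$, yet its hypergraph is not acyclic---so the Fact is only true under the ``semantically acyclic'' reading you adopt at the end, which is indeed the reading the paper needs for its applications.
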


\item \emph{Tuple-generating dependencies} (TGD) are sentences of the form
      $$
      \forall \vec x \, \left( \phi(\vec x) \rightarrow \exists \vec y \, \rho(\vec x, \vec y) \right)
      $$
      where $\phi$ and $\rho$ are conjunctions of positive relational atoms (no equalities), 
      and every variable from $\vec{x}$ occurs in at least one conjunct of $\phi$. 
$\phi$ is called the \emph{body} of the TGD, while $\rho$ is referred
to as the \emph{head}.
\end{compactitem}

In addition to the above fragments of FO, some of our arguments involve \emph{Datalog}
a language that extends positive-existential FO with a fixpoint mechanism. 
Datalog programs use a signature that is partitioned into ``intensional relations'', representing the results 
of a fixpoint computation, and ``extensional relations'' that represent an input structure. 
In terms of second-order logic, intensional relations can be viewed as second-order variables, while 
extensional relations
are part of the signature of the structure over which the program is being evaluated. 
A Datalog program $\Pi$ consists of rules $R(x_1\ldots x_n) ~\datalogarrow~  \phi$, 
where $R$ is an intensional relation and $\phi$ is a CQ over intensional and extensional relations,
such that each variable $x_i$ occurs in at least one conjunct of $\phi$. 
Associated to the program $\Pi$ is an operator that takes as input a structure $\frakA$ 
in the extended signature that
includes both the extensional and intensional relations and returns a structure $\frakA'$ over 
the same extended signature. 
$\frak{A}'$ agrees with $\frak{A}$ on all extensional relations.
For each intensional relation $R$, $R_{\frak{A}'}$ is the 
set of $n$-tuples obtained by evaluating a rule of $\Pi$ of the form 
$R(x_1 \ldots x_n) ~ \datalogarrow ~  \phi$ (that is, evaluating $\phi$ in $\frakA$ 
and projecting on variables $x_1\ldots x_n$). This ``immediate consequence'' operator on structures is monotone, and thus has a unique least fixpoint.
The result of evaluating a program $\Pi$ on a structure $\frakA$ is the least fixpoint (starting with 
all intensional relations empty). 
Given a distinguished intensional predicate $P$ (the \emph{goal} predicate), 
the \emph{output} of a Datalog program is the set of tuples belonging to the goal predicate in the least fixpoint.
Datalog can be viewed as the positive-existential fragment of least-fixpoint logic.

Abiteboul, Hull, and Vianu~\cite{AHV} is a good reference for all of these languages. 

One subtle but notable difference in the treatment of query languages in the 
database literature and the logic literature concerns the relationship between
database instances and (finite) first-order structures. 
A \emph{database instance} (or simply \emph{instance}) $I$ for a signature $\tau$, 
assigns to every relation symbol $R\in\tau$ of arity $n$ 
a collection of $n$-tuples, and to every constant symbol $c$ a value, 
called the \emph{interpretation} of $R$, and respectively of $c$, in $I$. 
A \emph{fact} over a signature $\tau$ is an  expression $R(a_1 \ldots a_n)$, 
where $R$ is a relation symbol and $a_1 \ldots a_n$ are values.
An interpretation of a relation $R$ can be equivalently considered as a set of facts, 
namely the facts of the form $R(a_1 \ldots a_n)$ where $(a_1, \ldots, a_n)$ belongs to the interpretation
of $R$. The \emph{active domain} of an instance or a structure
is the set of values that participate in some fact,
or, in other words, the union of the one-dimensional projections of the relations.
We write $\adom(\frakA)$ for the active domain of $\frakA$.
Note the difference between an instance and a relational structure: a relational structure
is defined over an explicitly given domain, which can contain any number of ``inactive'' 
elements. Two structures can thus correspond to the same instance while having different domains. 
In database theory one is typically interested in \emph{domain-independent} formulas, that is, 
formulas that do not distinguish between structures corresponding to the same instance. 
For example the sentence $\exists x ~ U(x)$ is domain-independent, while $\forall x ~ U(x)$ is not.
Both CQs and Datalog are languages defining only domain-independent formulas.
In parts of this work, we will deal with logical formulas that are domain-independent.
For a domain-independent sentence $\phi$ we can talk about $\phi$ ``being true on instance $I$'',
and similarly give semantics to domain-independent formulas in terms of instances rather than structures.  Thus if we are dealing with questions about domain-independent
formulas, it will often be convenient to perform constructions that
form instances from instances, rather than constructions that  form structures
from structures.
A homomorphism $h: I\to J$ between instances $I$ and $J$ is defined
as with structures, but $h$ is now defined on the active domain of $I$, and is required
to preserve the interpretation of the relations as well as any constants occurring
in the active domain of $I$.

Given two structures $\frakA, \frakB$ over the same signature $\tau$, we write
$\frakA\subinst \frakB$
if the two structures agree on the interpretation of the constant symbols, and,
for every relation $R\in\tau$,  $R^\frakA \subseteq R^\frakB$. This
can be thought of as a weak version of the usual substructure relation,
where we do not require the substructure to be induced by taking a subset of the domain.
Since the definition does not refer to the domains of the  structures
$\frakA, \frakB$, it is clearly also applicable to instances.

To every CQ $q(\textbf{x}) = \exists \textbf{y} \bigwedge_i \alpha_i$ of signature $\tau$ 
one can associate the $\tau$-instance $\canoninst(q)$, the \emph{canonical instance} associated to $q$:
 the  active domain of $\canoninst(q)$ consists
of the set of variables and constants
 occurring in $q$ and the  facts are the literals 
$\alpha_i$. 
Evaluation of a CQ can be restated in terms of homomorphisms from $\canoninst(Q)$:
for every $n$-ary CQ $q(x_1\ldots,x_n)$ and every $n$-tuple $\textbf{a}$ of an instance
$I$ we have that $I \models q(\textbf{a})$ iff there exists a homomorphism 
 $h:(\canoninst(q),\textbf{x}) \to (I,\textbf{a})$~\cite{CM77}. 

\myparagraph{The Guarded-Negation Fragment}
The Guarded Negation Fragment ($\gnf$) is a syntactic fragment of first-order logic, from which it inherits the usual semantics. 
The formulas of \gnf are built up inductively according to the grammar\footnote{In practice, 
the parentheses are often omitted and parsing ambiguity is resolved with the help of the 
standard order of precedence of logical connectives. 
}
\begin{align*}
\phi ::= R(t_1, \ldots, t_n) | ~ t_1=t_2 ~ | ~ \exists x\,(\phi) ~ | 
~ (\phi \vee \phi )~ | ~ (\phi \wedge \phi) ~ | ~ (\alpha \wedge \neg \phi)
\end{align*}
where $R$ is a relation symbol, each $t_i$ is a variable or a constant symbol,
and, in the last clause, $\alpha$ is an atomic formula (possibly an equality)
in which all free variables of the negated formula $\phi$ occur.  
That is, each use of negation must occur conjoined with an atomic formula 
that contains all the free variables of the negated formula.
The atomic formula $\alpha$ that witnesses this is called a \emph{guard} for $\neg\phi$.
Since we allow equalities as guards,  every  formula 
  with at most one free variable can be trivially guarded, and we
  often write $\neg\phi$ instead of $\left((x=x)\land\neg\phi\right)$, when
  $\phi$ has no free variables besides (possibly) $x$.
For $\tau$ a signature consisting of constant symbols and relation symbols, 
$\gnf[\tau]$ denotes the $\gnf$ formulas in signature $\tau$.

\gnfo should be compared to the \emph{Guarded Fragment} (\gfo) of first-order 
logic~\cite{AvBN98JPL,Gr99JSL} typically defined via the grammar
\begin{align*}
\phi ::= R(t_1, \ldots, t_n) ~ | ~ t_1 = t_2 ~ | ~ \exists \vec{x} \left(\alpha \wedge \phi\right)~ |
~ (\phi \vee \phi) ~ | ~ (\phi \wedge \phi) ~ | ~ \neg \phi
\end{align*}
where, in the third clause, $\alpha$ is again an atomic formula in which 
all free variables of $\phi$ occur (and $\vec{x}$ may be a sequence of variables).
Note that, in \gfo formulas, all quantification must occur in conjunction
with a guard, while there is no restriction on the use of negation.

Since $\gnfo$ is closed under conjunction and existential
quantifications, every conjunctive query is expressible in \gnfo.
It is not much more 
difficult to verify that every \gfo sentence can also be equivalently expressed in \gnfo~\cite{BtCS15jacm}. 
Turning to fragments of first-order logic that are common in database theory, 
consider \emph{guarded tuple-generating dependencies}: that is, sentences of the form 
$$
  \forall \vec x \left( R(\vec{x}) \wedge \phi(\vec{x}) 
    \rightarrow \exists \vec{y}\, \psi(\vec{x}, \vec{y}) \right) \ .
$$
By simply writing out such a sentence using $\exists, \neg, \wedge$, 
one sees that it is convertible to a \gnfo sentence. 
In particular, every \emph{inclusion dependency} (i.e.~every formula 
$\forall \vec{x} \left( R(\vec{x}) \rightarrow \exists \vec{y}\, S(\vec{x}, \vec{y}) \right)$,
where the atomic formulas $R(\vec x)$ and $S(\vec{x}, \vec{y})$ have no constants 
and no repeated variables) is expressible in \gnfo. 
As mentioned in the introduction, many of the common dependencies used to describe 
relationships between schemas (e.g.~ see
~\cite{dataint,FKMP05}) are expressible in \gnfo.
In addition, many of the common 
description logic languages used in the semantic web (e.g.~$\mathcal{ALC}$ 
and $\mathcal{ALCHIO}$~\cite{dl}) are known to admit translations into \gfo
and hence into \gnfo.

We will frequently make use of the key result from \cite{BtCS15jacm} 
showing that \gnfo is decidable and has the finite model property:

\begin{theorem} \label{thm:gnfsat} 
A \gnfo formula is satisfiable over all structures iff it is satisfiable over finite structures. 
Satisfiability and validity of \gnfo is decidable (and $\twoexptime$-complete).
\end{theorem}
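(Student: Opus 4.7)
The plan is to prove decidability, 2ExpTime-completeness, and the finite model property together via a reduction to emptiness of alternating tree automata running on bounded-width tree decompositions, following the standard playbook for guarded-style logics.

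First I would put the input \gnf formula $\phi$ into a convenient normal form in which every negated subformula appears directly conjoined with a relational (or equality) guard, and every maximal negation-free subformula is in UCQ shape; standard Scott-style renaming keeps this blowup linear. The central semantic ingredient is then a \emph{guarded-negation bisimulation} invariance theorem: two structures agreeing on the isomorphism types of their guarded tuples (together with the existential behavior into other guarded tuples) satisfy the same \gnf formulas. Using this invariance, I would show the \emph{tree-like model property}: any model $\frakA\models\phi$ can be unraveled into a model $\frakA^*$ of $\phi$ whose Gaifman graph (restricted to the relations mentioned in $\phi$) has a tree decomposition of width bounded by the maximum arity/width parameter of $\phi$. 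The unraveling is built by traversing paths of guarded tuples and creating a fresh copy of each newly visited tuple, gluing copies together along their shared guarded projections; the guarded-negation bisimulation between $\frakA^*$ and $\frakA$ certifies that $\phi$ is preserved.

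Next I would encode the set of tree-decompositions of models of $\phi$ as the language of a two-way alternating parity tree automaton $\mathcal{A}_\phi$ over a finite alphabet describing the quantifier-free type of each bag together with the overlap with the parent bag. The construction proceeds by recursion on the \gnnf syntax of $\phi$: existentials become branching transitions sending a copy of the automaton to an appropriate successor bag, conjunctions/disjunctions are handled by universal/existential states, and each guarded negation $\alpha(\bar x,\bar y)\land\neg\psi(\bar x)$ is handled by complementing the subautomaton for $\psi$ restricted to bags containing a witness for the guard $\alpha$; the guard restricts complementation to a finite, locally checkable region, which is exactly what makes guarded negation tame. Careful bookkeeping shows $|\mathcal{A}_\phi|$ is singly exponential in $|\phi|$, and emptiness of such two-way alternating automata is decidable in time exponential in the automaton, giving the \twoexptime\ upper bound. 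The matching lower bound follows by encoding computations of an alternating exponential-space Turing machine using guarded \fo fragments in the usual way.

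Finally, for the finite model property I would use the standard observation that a nonempty alternating parity tree automaton accepts a \emph{regular} tree, i.e.\ one with finitely many distinct subtrees up to isomorphism. Folding such a regular tree decomposition into a finite structure by identifying isomorphic subtrees gives a finite $\frakB$ whose guarded-negation bisimulation type matches that of the original tree model, hence $\frakB\models\phi$. The main obstacles are (i) pinning down the right notion of guarded-negation bisimulation and proving the invariance/unraveling theorem cleanly, and (ii) handling the complementation step in the automaton construction so that the complemented subautomaton is only ever asked about regions guarded by $\alpha$; this is the technical heart of the argument and is where the proof would differ most from the corresponding argument for the ordinary guarded fragment.
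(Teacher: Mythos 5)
This theorem is not proved in the paper at all: it is imported verbatim from \cite{BtCS15jacm}, so there is no in-paper proof to compare against. Judged on its own terms, your sketch of the decidability and \twoexptime{} upper bound follows the same route as the cited source (normal form, tree-like model property via GN-bisimulation and unravelling, alternating two-way automata on bounded-width tree decompositions, with guardedness taming the complementation step), and the lower bound via \gfo is standard. That part of the plan is sound in outline.

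The genuine gap is in your last step, the finite model property. Folding a regular tree decomposition by identifying isomorphic subtrees does \emph{not} yield a structure GN-bisimilar to the tree model. The quotient map is a homomorphism \emph{onto} the finite structure, and identification of elements can only create new homomorphic images of conjunctive queries; since the guarded negations in a normal-form \gnfo formula are negations of CQ-shaped (existential positive) subformulas, the folded structure may satisfy a CQ at a guarded tuple where the tree model did not, falsifying $\alpha\wedge\neg\psi$. This is exactly the asymmetry that GN-bisimulation is designed to forbid (it demands homomorphisms in \emph{both} directions compatible with the bisimulation), and it is why the regular-tree folding argument, which works for the $\mu$-calculus under ordinary bisimulation, does not transfer here. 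The paper itself flags the issue in Section 2: guarded unravellings are typically infinite, and ``it takes considerably more work to show that the last claim remains valid when restricting attention to finite structures.'' The actual proof in \cite{BtCS15jacm} reduces \gnfo satisfiability to satisfiability of a guarded theory together with non-entailment of certain unions of conjunctive queries, and then invokes the finite analogue of the treeification fact (Fact~\ref{fact:treeification}) established in \cite{BGO14lmcs} --- the finite controllability of CQ-entailment under guarded theories --- whose proof rests on Herwig--Lascar-style theorems on extending partial isomorphisms, not on folding a regular tree. To repair your argument you would need to replace the folding step with an appeal to that result (or reprove it), which is the real technical content of the finite model property here.
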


It was shown in~\cite{bbo} that \gnfo can be equivalently restated as a fragment of 
Codd's relational algebra, and of the standard database query language SQL. 
More specifically, in~\cite{bbo}, a fragment of relational algebra, called 
Guarded-Negation Relation Algebra (\gnra) is introduced, and is shown to capture
domain-independent GNFO. It is worth noting also that we can actually decide whether 
a given \gnfo formula is domain-independent (and hence whether it can be converted to \gnra). 
This is in contrast to the well-known fact that domain-independence is undecidable 
for first-order logic~\cite{AHV}. To see the decidability, we simply note that the statement 
expressing that a \gnfo formula is domain-independent can be expressed as the validity 
of a \gnfo sentence: the sentence is formed by introducing relations for the two
domains, and relativizing quantification to those domains. We can then apply Theorem~\ref{thm:gnfsat}
to this sentence.

Note that if we have two  \gnfo open formulas $\phi_1(\vec x)$ and $\phi_2(\vec x)$, 
the sentence stating that they are equivalent, or that one implies the other, 
is not necessarily a \gnfo sentence. This does hold, however, if $\phi_1$ and $\phi_2$ 
are answer-guarded.
We will need to require 
answer-guardedness in some of our results involving open formulas.%
\footnote{Note, however, that the equivalence problem and the entailment problem are 
decidable in $\twoexptime$ even  for \emph{non}-answer-guarded \gnfo formulas 
(as follows from a easy reduction
in which free variables are replaced by constant symbols). See, for example, Corollary \ref{cor:decidefo}.}
Most results about \gnfo sentences trivially generalize to answer-guarded \gnfo formulas.
For instance, the observation from \cite{BtCS15jacm} that every \gfo sentence can be 
equivalently transcribed into \gnfo extends to answer-guarded \gfo formulas.

\myparagraph{Guarded sets and tuples}
Let $\frakA$ be a structure and  $e_1,\ldots,e_k$ be the interpretation
of all constants in the signature of $\frakA$.
A subset $X$ of the domain of $\frakA$
is \emph{guarded}
if there is a fact (in some relation) in which all members of
$X\setminus\{e_1,\ldots,e_k\}$ occur together.
We will sometimes apply the same notion to tuples: a tuple of values from the domain
of a structure is guarded (in the structure), if the set of all elements of the tuple is guarded.
Note that an answer-guarded query can only be satisfied by guarded tuples.

\myparagraph{Tree-like model property} Satisfiable $\gfo$ formulas always have models
that are ``tree-like'': this is the
\emph{tree-like model property} of \gfo~\cite{AvBN98JPL,Gr99JSL}.
For any  relational structure $\frakA$ with constants, and any  guarded tuple $\vec{a}$ 
there is a
\emph{guarded unravelling} ~\cite{AvBN98JPL} $(\frakA^\ast_\vec{a}, \langle \vec{a}\rangle)$
of  $\frakA$ at $\vec{a}$, a structure and tuple such that:
\begin{compactitem}
\item[(i)] $\frakA^\ast_\vec{a}$ is \emph{tree like} in the sense that
    it has a tree decomposition with guarded bags~\cite{GO14survey};
\item[(ii)]
    $\frakA^\ast_\vec{a} \models \varphi(\langle\vec{a}\rangle)$ if and only if $\frakA\models \varphi(\vec{a})$
    for all $\varphi(\vec{x}) \in \gfo$.
\end{compactitem}

We conclude this section by recalling an important result about
approximating arbitrary answer-guarded conjunctive queries by conjunctive
queries that are in $\gfo$, which is proven using the unravellings above.

Paraphrasing~\cite{BGO14lmcs} we define the \emph{treeification} $T(q)$ 
of an answer-guarded CQ $q$ as the collection of minimal acyclic CQ that imply $q$. 
From~\cite{BGO14lmcs} we know that $T(q)$ is finite if the signature is finite.
We will thus sometimes identify  the treeification with the (answer-guarded) UCQ $\bigvee T(q)$.

The next fact is a simple consequence of the definition of treeification and of the
properties of guarded unravellings. It was first observed in~\cite{BGO14lmcs} in the 
case of  boolean CQs, but the same reasoning applies to answer-guarded CQs.

\begin{fact}[Treeification]\label{fact:treeification} 
For every answer-guarded CQ $q(\vec{x})$, every structure $\frakA$ and guarded tuple $\vec{a}$ of $M$
it holds that $\frakA^\ast_\vec{a} \models q(\langle\vec{a}\rangle)$ 
iff $\frakA^\ast_\vec{a} \models \bigvee T(q)(\langle\vec{a}\rangle)$. 
Consequently, for every answer-guarded \gfo formula $\phi(\vec{x})$ 
and answer-guarded conjunctive query $q(\vec{x})$ it holds that  
$\phi(\vec{x}) \models q(\vec{x})$ iff $\phi(\vec{x}) \models \bigvee T(q)(\vec{x})$.
\end{fact}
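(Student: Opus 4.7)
The plan is to prove the first (pointwise) claim and derive the second (entailment) claim from it by combining property (ii) of guarded unravellings with Fact \ref{fact:acyclicCQ}. In both claims the $\Leftarrow$ direction is immediate: every element of $T(q)$ implies $q$ by definition, so $\bigvee T(q) \models q$.

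For the $\Rightarrow$ direction of the first claim, suppose $\frakA^\ast_\vec{a} \models q(\langle\vec{a}\rangle)$, witnessed by a homomorphism $h\colon (\canoninst(q), \vec{x}) \to (\frakA^\ast_\vec{a}, \langle\vec{a}\rangle)$. Using the tree decomposition of $\frakA^\ast_\vec{a}$ with guarded bags, I would unfold the image of $h$ along the bag tree to produce a finite, $\alpha$-acyclic substructure $\hat{\frakI}$ together with a projection $\pi\colon \hat{\frakI} \to \frakA^\ast_\vec{a}$ onto that image. Letting $q^\ast$ be the answer-guarded CQ whose canonical instance is $\hat{\frakI}$, with free variables chosen as a preimage of $\langle\vec{a}\rangle$ under $\pi$, one obtains (a) acyclicity of $q^\ast$ by construction, (b) a lift of $h$ to a homomorphism $\canoninst(q) \to \canoninst(q^\ast)$ (witnessing $q^\ast \models q$ via the Chandra--Merlin theorem), and (c) $\pi$ itself as a homomorphism $\canoninst(q^\ast) \to \frakA^\ast_\vec{a}$ sending the free variables to $\langle\vec{a}\rangle$ (witnessing $\frakA^\ast_\vec{a} \models q^\ast(\langle\vec{a}\rangle)$). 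A core of $q^\ast$ is then a minimal acyclic CQ implying $q$, hence (equivalent to) an element of $T(q)$, still satisfied at $\langle\vec{a}\rangle$ in $\frakA^\ast_\vec{a}$.

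For the $\Rightarrow$ direction of the second claim, fix any $\frakA$ with $\frakA \models \phi(\vec{a})$ at a guarded tuple $\vec{a}$ and pass to the guarded unravelling. Property (ii) and the answer-guardedness of $\phi$ give $\frakA^\ast_\vec{a} \models \phi(\langle\vec{a}\rangle)$, hence $\frakA^\ast_\vec{a} \models q(\langle\vec{a}\rangle)$; by the first claim some $q'' \in T(q)$ satisfies $\frakA^\ast_\vec{a} \models q''(\langle\vec{a}\rangle)$. Since $q''$ is answer-guarded and acyclic, Fact \ref{fact:acyclicCQ} provides an equivalent answer-guarded positive-existential \gfo formula, and property (ii) applied once more transfers satisfaction back to $\frakA$, yielding $\frakA \models q''(\vec{a})$ and thus $\frakA \models \bigvee T(q)(\vec{a})$.

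The main obstacle is the unfolding step in the first claim: one must produce an acyclic $q^\ast$ with both $q^\ast \models q$ and $\frakA^\ast_\vec{a} \models q^\ast(\langle\vec{a}\rangle)$. A naive unfolding of $q$ itself (splitting its variables) would yield $q \models q^\ast$, the wrong direction of implication. The trick is to unfold on the image side: unravel the substructure of $\frakA^\ast_\vec{a}$ generated by the image of $h$ along the bag tree inherited from $\frakA^\ast_\vec{a}$, producing a tree-shaped structure onto which $h$ lifts by the universal property of the unravelling, thereby giving the required homomorphism from $\canoninst(q)$ into $\canoninst(q^\ast)$.
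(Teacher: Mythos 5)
Your argument is correct and is exactly the reasoning the paper intends: the paper states this Fact without proof, calling it ``a simple consequence of the definition of treeification and of the properties of guarded unravellings,'' and your proof fills in precisely those steps --- factoring the match of $q$ through an acyclic, answer-guarded $q^\ast$ using the guarded tree decomposition of $\frakA^\ast_{\vec{a}}$, and then transferring via property (ii) and Fact~\ref{fact:acyclicCQ}. One small imprecision: the core of $q^\ast$ is logically equivalent to $q^\ast$ and therefore need not be a \emph{minimal} acyclic CQ implying $q$; what you actually need is that $q^\ast$, being an acyclic CQ implying $q$, entails some minimal such CQ (which exists by finiteness of $T(q)$ and well-foundedness of the relevant order), and satisfaction at $\langle\vec{a}\rangle$ then passes to that member of $T(q)$.
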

We note that guarded unravellings are typically infinite and that it takes 
considerably more work to show that the last claim remains valid 
when restricting attention to finite structures~\cite{BGO14lmcs}. This 
claim is what underpins the argument in~\cite{BtCS15jacm} establishing the finite model property of \gnfo.

\section{Characterization and Preservation theorems} \label{sec:preserve}

Preservation theorems are results showing that every property definable 
within a certain logic and which additionally satisfies some important 
semantic invariance can be expressed by a formula in the logic 
whose syntactic form guarantees that invariance.
One example from classical model theory is the \L{}o\'s-Tarski theorem, 
stating that a property of structures definable in first-order logic
is  definable by a universal formula if and only if  it  is
preserved under taking substructures. A second example is the Homomorphism Preservation theorem,
stating that a property of structures definable in first-order logic
is expressible by an existential positive sentences if and only if it is
preserved under homomorphism \cite{ChangKeisler}.
One can consider the ``finite model theory analogs'' of each of these statements: 
for example, the finite model theory analog of \L{}o\'s-Tarski would be 
that a property of finite structures definable in first-order logic that
is preserved under taking substructures must be definable by a universal formula 
of  first-order logic. This analog is known to fail~\cite{EF99}. Rossman~\cite{ross} 
has shown that the finite analog of the Homomorphism Preservation theorem does hold.

A well-known preservation theorem from modal logic is Van~Benthem's theorem, 
stating that basic modal logic can express precisely the properties expressible 
in first-order logic invariant under bisimulation~\cite{vdb}. Rosen~\cite{rosen} 
has shown that Van Benthem's theorem also remains valid if one restricts attention 
to finite structures, cf.~also~\cite{Otto04APAL}. Analogous results on arbitrary structures 
have been established for both \gfo~\cite{AvBN98JPL} and \gnfo~\cite{BtCS15jacm}.
In the context of finite model theory, Otto~\cite{Otto12jacm,Otto13apal} provided 
Van Benthem-style characterizations of \gfo and of the ``$k$-bounded fragment of $\gnf$'' 
indexed by a number $k$.
Central to these results are the notions of \emph{guarded bisimulation} 
and \emph{guarded negation bisimulation} that play similar roles in the model theory 
of \gfo, respectively, \gnfo as does bisimulation in the model theory of modal logic. 
For a comprehensive survey the interested reader should turn to~\cite{GO14survey}.


\subsection{Characterizing \gnfo within FO}
We first look at the question of characterizing \gnfo as a fragment of first-order 
logic invariant under certain simulation relations. In~\cite{BtCS15jacm} 
\emph{guarded-negation bisimulations} (\emph{GN-bisimulations}) were introduced, 
and it was shown that \gnfo expresses the first-order logic properties that are 
invariant under GN-bisimulations. 
A related characterization over finite structures
for the $k$-variable fragment of $\gnfo$ is given in \cite{Otto13apal}.
Here we will work over all structures, giving a characterization theorem for a 
simpler kind of simulation relation, which we call a \emph{strong GN-bisimulation}.
We will use this characterization as a basic tool throughout the paper: 
to show that a certain formula is equivalent to one in \gnfo, to argue 
that two structures must agree on all \gnfo formulas and to amalgamate 
structures that cannot be distinguished by \gnfo sentences in a sub-signature. 
The many uses of strong GN-bisimulations suggest that it is really \emph{the} 
right equivalence relation for \gnfo.

Recall that a homomorphism from a structure $\frakA$ to a structure $\frakB$ 
is a map from the domain of $\frakA$ to the domain of $\frakB$ 
that preserves the relations as well as the interpretation of the constant symbols. 
Recall also that a set, or tuple, of elements from a structure $\frakA$ is \emph{guarded} 
in $\frakA$ if there is a fact of $\frakA$ that contains all elements within the fact
except possibly those that are the interpretation of some constant symbol. 

\begin{definition}[Strong GN-bisimulations]
  A strong GN-bisimulation between structures $\frakA$ and $\frakB$
  is a non-empty collection $Z$ of pairs $(\mathbf{a},\mathbf{b})$ 
  of guarded tuples of elements of $\frakA$ and of $\frakB$, respectively,  
  such that for every $(\mathbf{a},\mathbf{b}) \in Z$: 
  \begin{compactitem}
  \item there is a homomorphism $h\colon\frakA\to\frakB$ such that 
    $h(\mathbf{a})=\mathbf{b}$ and ``$h$ is compatible with $Z$'', meaning that 
    $(\mathbf{c},h(\mathbf{c}))\in Z$ for every guarded tuple $\mathbf{c}$ in $\frakA$. 
  \item there is a homomorphism $g\colon\frakB\to\frakA$ such that 
    $g(\mathbf{b})=\mathbf{a}$ and ``$g$ is compatible with $Z$'', meaning that 
    $(g(\mathbf{d}),\mathbf{d})\in Z$ for every guarded tuple $\mathbf{d}$ in $\frakB$.
  \end{compactitem}
  We write $(\frakA,\textbf{a})\rightarrow^s_{GN} (\frakB,\textbf{b})$ 
  if the map $\textbf{a}\mapsto\textbf{b}$ extends to a homomorphism 
  from $\frakA$ to $\frakB$ that is compatible with some strong GN-bisimulation 
  between $\frakA$ and $\frakB$. Note that, here, $\textbf{a}$
  and $\textbf{b}$ are not required to be guarded tuples.
  We write $(\frakA,\textbf{a})\sim^s_{GN} (\frakB,\textbf{b})$ if, furthermore,
  $\textbf{a}$ is a guarded tuple in $\frakA$ (in which case we also have 
  that  $(\frakB,\textbf{b})\sim^s_{GN} (\frakA,\textbf{a})$).
  These notations can also be indexed by a signature $\sigma$, in which case
  they are defined in terms of $\sigma$-reducts of the respective structures.
\end{definition}

It is easy to see that if there exists a strong GN-bisimulation between two structures, 
then the respective substructures consisting of the elements designated by 
constant symbols must be isomorphic.

The key distinction between strong GN-bisimulation and the GN-bisimulation 
of~\cite{BtCS15jacm} is that the homomorphisms whose existence is postulated 
in the back-and-forth properties of GN-bisimulation are only required to be ``local'', 
that is, defined on arbitrary finite neighbourhoods of the guarded tuple in question, 
while our definition above asks for a single ``global'' homomorphism that is defined 
on the entire domain of the respective structure, i.e.~one that is uniformly appropriate 
for all neighbourhoods according to the requirements of GN-bisimulations of~\cite{BtCS11icalp}. 
This is a very significant strengthening of requirements, which makes strong GN-bisimulation 
more powerful as a tool in our proofs.  

Another distinction between the notions is that while GN-bisimulations are only defined 
on guarded tuples, our notion of strong GN-bisimulation is meaningful on arbitrary tuples. 
It is an equivalence relation on guarded tuples, but is asymmetric on general tuples.

In~\cite{BtCS15jacm} it was shown that \gnfo corresponds to the GN-bisimulation-invariant 
fragment of first-order logic.  In light of our previous remark, it  follows that \gnfo 
formulas are also invariant under strong GN-bisimulations as far as guarded tuples are concerned. 
In fact, for arbitrary tuples one can verify via structural induction on the construction of formulas 
that all \gnfo formulas are preserved by strong GN-bisimulations. That is, one can
show
that $\to^s_{GN}$ implies $\Rrightarrow_{GN}$, where the notation 
\[
  (\mathfrak{A},\textbf{a})\Rrightarrow_{GN} (\mathfrak{B},\textbf{b})
\]
expresses that, for every \gnfo formula $\phi(\textbf{x})$, 
$\mathfrak{A}\models\phi(\textbf{a})$ implies $\mathfrak{B}\models\phi(\textbf{b})$.

Strong GN-bisimulations will play a key role in our remaining results.
Informally, when we want to show that a \gnfo formula $\phi$ can be replaced by
another simpler $\phi'$, we will often justify this by showing that
an arbitrary model of $\phi$ can be replaced by a strongly bisimilar
structure where $\phi'$ holds (or vice versa).

Our first ``expressive completeness'' result characterizes \gnfo as
the fragment of first-order logic that is preserved by strong GN-bisimulations.

\begin{theorem}\label{thm:strongbisim}
A first-order formula $\phi(\textbf{x})$ is preserved by $\rightarrow^s_{GN}$ 
(over all structures) iff it is equivalent to a \gnfo formula.
\end{theorem}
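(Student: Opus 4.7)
The $(\Leftarrow)$ direction is the structural-induction observation noted just before the theorem statement, which confirms that every \gnfo formula is preserved by $\rightarrow^s_{GN}$. For the substantive $(\Rightarrow)$ direction, I would follow the standard Keisler-style compactness argument used to prove preservation theorems. Let $T(\vec x)$ be the set of all $\gnfo$-consequences of $\phi(\vec x)$. By compactness it suffices to prove $T\models\phi$, since $\phi$ will then be equivalent to a finite conjunction of members of $T$, which is itself in \gnfo. Assuming otherwise, I fix $(\frakB,\vec b)\models T\cup\{\neg\phi(\vec x)\}$. A routine compactness argument, taking fresh constants $\vec a$ for the free variables and arguing consistency of $\phi(\vec a)$ with every $\neg\psi(\vec a)$ for $\psi\in\gnfo$ with $\frakB\not\models\psi(\vec b)$, produces $(\frakA,\vec a)\models\phi$ with $(\frakA,\vec a)\Rrightarrow_{GN}(\frakB,\vec b)$.

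Passing to sufficiently saturated elementary extensions $\frakA^*\succeq\frakA$ and $\frakB^*\succeq\frakB$ preserves $\phi(\vec a)$, $\neg\phi(\vec b)$, and the one-directional $\gnfo$-implication. The crux is then to upgrade the relation $\Rrightarrow_{GN}$ to $\rightarrow^s_{GN}$; by preservation this forces $\frakB^*\models\phi(\vec b)$, the desired contradiction. Concretely, I would define $Z$ to be the set of pairs $(\vec c,\vec d)$ of guarded tuples in $\frakA^*$ and $\frakB^*$ respectively that satisfy the same \gnfo formulas, and build a homomorphism $h\colon\frakA^*\to\frakB^*$ with $h(\vec a)=\vec b$ compatible with $Z$. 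The local back-and-forth property for $Z$ on saturated structures is precisely the standard lemma underlying the GN-bisimulation characterization of~\cite{BtCS15jacm}: given $(\vec c,\vec d)\in Z$ and any element $a\in\frakA^*$, saturation realizes an appropriate $b\in\frakB^*$.

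The main obstacle, and the key difference from the GN-bisimulation characterization of~\cite{BtCS15jacm}, is the leap from this \emph{local} back-and-forth to the \emph{global} homomorphisms demanded by $\rightarrow^s_{GN}$. I plan to handle this by a transfinite back-and-forth construction: well-order the domain of $\frakA^*$ and at each successor stage define $h$ on the next element $a$ by realizing in $\frakB^*$ a type $\Gamma(y)$ over the already-chosen parameters encoding (i)~all atomic facts in which $a$ participates with previously handled elements and (ii)~the full \gnfo-type of every newly completed guarded tuple containing $a$. Finite satisfiability of $\Gamma$ follows from iterated application of the local back-and-forth property, after which saturation of $\frakB^*$ realizes $\Gamma$. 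For each pair $(\vec c,\vec d)$ that the construction places into $Z$, the symmetric global homomorphism from $\frakB^*$ to $\frakA^*$ required to witness that $Z$ is a strong GN-bisimulation is built analogously with the roles of the two structures reversed. A secondary technical point is to ensure that the saturated extensions are saturated enough to handle the cardinalities arising in the transfinite construction, which can be arranged by choosing the saturation degree of $\frakA^*$ and $\frakB^*$ appropriately, or by iterating through a chain of $\omega$-saturated elementary extensions.
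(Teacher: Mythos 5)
Your overall architecture coincides with the paper's: the easy direction by formula induction, and for the converse a compactness argument producing $(\frakA,\vec a)\models\phi$ with $(\frakA,\vec a)\Rrightarrow_{GN}(\frakB,\vec b)$ and $\frakB\not\models\phi(\vec b)$, followed by a lift of $\Rrightarrow_{GN}$ to $\to^s_{GN}$ in elementary extensions. The gap is in how you execute the lift. Your transfinite back-and-forth must, at stage $\alpha$, realize a type over all of the $|\alpha|$ previously chosen parameters, so $\frakB^*$ has to be $|\frakA^*|$-saturated; symmetrically, to produce the global homomorphisms from $\frakB^*$ back to $\frakA^*$ that witness $Z$ being a strong GN-bisimulation, $\frakA^*$ has to be $|\frakB^*|$-saturated. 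Since a $\kappa$-saturated infinite structure has cardinality at least $\kappa$, this forces both to be saturated models of one and the same cardinality, and such elementary extensions are not guaranteed to exist in ZFC. Neither proposed remedy resolves this: the union of a chain of $\omega$-saturated extensions need be no more than $\omega$-saturated, and ``choosing the saturation degree appropriately'' is exactly the circularity just described. One can push the argument through with special models (or GCH plus an absoluteness detour), but that machinery is absent from your sketch. The paper sidesteps the problem entirely by taking a single \emph{countable recursively saturated} elementary extension of the \emph{pair} $(\frakA,\frakB)$ viewed as one two-sorted structure: countability lets each global homomorphism be built as the union of an $\omega$-chain of \emph{finite} partial maps, so every type that must be realized has only finitely many parameters and is recursive, and ordinary recursive saturation suffices.

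A second, smaller gap: the data you put into $\Gamma(y)$ --- atomic facts plus the \gnfo-types of newly completed \emph{guarded} tuples --- is too weak an inductive invariant to establish finite satisfiability of the type at the next stage. That step requires transferring a formula of the form $\exists y\,\bigl(\bigwedge_j\chi_j(\vec c,y)\bigr)$ from $\vec c$ in $\frakA^*$ to $\vec d$ in $\frakB^*$, where $\vec c$ is an arbitrary, typically unguarded, finite subtuple of the domain constructed so far; a partial map that merely preserves atoms and maps guarded tuples to \gnfo-equivalent ones need not preserve such formulas. The invariant has to be that each partial map preserves \emph{all} \gnfo formulas at all finite subtuples of its domain, which is precisely what the set $\Sigma(x)$ of implications $\phi(\textbf{c},c_{i+1})\to\phi(\textbf{d},x)$ in the paper's Lemma~\ref{lem:lifting} is designed to maintain.
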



\newcommand{\vequiv}{\mathbin{\protect\rotatebox[origin=c]{90}{$\equiv$}}}
\newcommand{\velext}{\mathbin{\protect\rotatebox[origin=c]{90}{$\preceq$}}}

The proof of the of Theorem~\ref{thm:strongbisim} relies 
on the following lemma. 
Further, in the remainder of the paper, we will make use of the lemma directly.
For example, the second part of the lemma
 will be instrumental in our proof of Craig Interpolation for \gnfo
presented in Section~\ref{sec:interpol}.

The first part of the lemma will be used in the
``easy direction'' of Theorem ~\ref{thm:strongbisim}: it formalizes
the notion that strong bisimulation preserves $\gnfo$ formulas.
The second part of the lemma will be used in the harder direction of
Theorem  ~\ref{thm:strongbisim}. It asserts that $\Rrightarrow_{GN}$ can always 
be lifted to $\to^s_{GN}$ by passing from a pair of structures to suitable 
elementary extensions. 
The second part
will be established using the technique of \emph{recursively saturated models}~\cite{ChangKeisler}.


\begin{lemma}~ \label{lem:lifting}
\begin{compactenum}
\item If
  $(\mathfrak{A},\textbf{a})\rightarrow^s_{GN[\sigma]}(\mathfrak{B},\textbf{b})$
  then $(\mathfrak{A},\textbf{a})\Rrightarrow_{GN[\sigma]} (\mathfrak{B},\textbf{b})$. 
\item If $(\mathfrak{A},\textbf{a})\Rrightarrow_{GN[\sigma]} (\mathfrak{B},\textbf{b})$ 
  and both structures are countable, 
  then there are countable elementary extensions  
  $(\widehat{\mathfrak{A}},\textbf{a})$ and $(\widehat{\mathfrak{B}},\textbf{b})$, 
  respectively, such that
  $(\widehat{\mathfrak{A}},\textbf{a})\rightarrow^s_{GN[\sigma]}(\widehat{\mathfrak{B}},\textbf{b})$.
\end{compactenum}
\end{lemma}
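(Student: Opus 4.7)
For Part 1 I will proceed by structural induction on the $\gnf$ formula $\phi(\vec{x})$. Suppose $(\frakA, \vec{a}) \rightarrow^s_{GN[\sigma]} (\frakB, \vec{b})$ is witnessed by a strong GN-bisimulation $Z$ and a compatible homomorphism $h\colon\frakA\to\frakB$ with $h(\vec{a}) = \vec{b}$. Atomic formulas are preserved by any homomorphism, and the positive Boolean and existential cases go through immediately using $h$. The only non-routine case is a guarded negation $\alpha(\vec{x}, \vec{y}) \land \neg \psi(\vec{x})$: if $\frakA \models \alpha(\vec{a}, \vec{c}) \land \neg\psi(\vec{a})$ then $(\vec{a}, \vec{c})$ is guarded in $\frakA$, so compatibility of $h$ with $Z$ places $((\vec{a},\vec{c}),(h(\vec{a}), h(\vec{c})))$ into $Z$. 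The back clause of the bisimulation then supplies a homomorphism $g\colon \frakB \to \frakA$ sending $(h(\vec{a}), h(\vec{c}))$ to $(\vec{a}, \vec{c})$ and compatible with $Z$. Since the reversal $Z^{-1} = \{(\vec{d}, \vec{c}) : (\vec{c}, \vec{d}) \in Z\}$ is itself a strong GN-bisimulation and $g$ is compatible with $Z^{-1}$, one has $(\frakB, h(\vec{a})) \rightarrow^s_{GN[\sigma]} (\frakA, \vec{a})$; applying the induction hypothesis to $\psi$ in this reverse direction, $\frakB \models \psi(h(\vec{a}))$ would force $\frakA \models \psi(\vec{a})$, contradicting the assumption. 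Hence $\frakB \models \alpha(h(\vec{a}),h(\vec{c})) \land \neg\psi(h(\vec{a}))$.

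For Part 2 I plan to pass to countable recursively saturated elementary extensions $\frakA \preceq \widehat{\frakA}$ and $\frakB \preceq \widehat{\frakB}$, which exist by standard model-theoretic results, and to define
\[
Z \;=\; \{(\vec{c}, \vec{d}) : \vec{c}, \vec{d} \text{ guarded and } (\widehat{\frakA}, \vec{c}) \Rrightarrow_{GN[\sigma]} (\widehat{\frakB}, \vec{d})\}.
\]
Since elementary extension preserves $\gnf$-truth on both sides, the hypothesis $(\frakA, \vec{a}) \Rrightarrow_{GN[\sigma]} (\frakB, \vec{b})$ transfers to $(\widehat{\frakA}, \vec{a}) \Rrightarrow_{GN[\sigma]} (\widehat{\frakB}, \vec{b})$. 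The core task is then to verify that $Z$ is a strong GN-bisimulation of the $\sigma$-reducts and that the map $\vec{a}\mapsto\vec{b}$ extends to a homomorphism compatible with $Z$.

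The principal obstacle is the construction of the compatible \emph{global} homomorphisms required by the definition of strong GN-bisimulation: for each pair $(\vec{c}, \vec{d}) \in Z$, as well as for the (possibly unguarded) initial pair $(\vec{a}, \vec{b})$, one must produce a homomorphism defined on all of $\widehat{\frakA}\restrict\sigma$. My plan is a stepwise back-and-forth along enumerations of the countable domains of $\widehat{\frakA}$ and $\widehat{\frakB}$, beginning with the interpretations of the constants and the distinguished tuple. At stage $n+1$, once the invariant that each guarded tuple among $a_1, \ldots, a_n$ has been placed in $Z$ with its $h$-image has been maintained, I locate $h(a_{n+1})$ as a realization in $\widehat{\frakB}$ of the $\gnf[\sigma]$-type consisting of all formulas that must hold of it in order to preserve the invariant for guarded tuples through $a_{n+1}$. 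Recursive saturation of $\widehat{\frakB}$ realizes this type provided it is finitely satisfiable, which reduces, via closure of $\gnf$ under conjunction and existential quantification, to applying $\Rrightarrow_{GN[\sigma]}$ to the invariant. The hardest technical subtlety to resolve is the reconciliation of several guarded tuples extending simultaneously through $a_{n+1}$ whose union need not itself be guarded; I will address this by packaging the corresponding finitely many constraints into a single existentially quantified $\gnf[\sigma]$-formula and arguing its preservation by invoking the invariant tuple-by-tuple, a bookkeeping in the spirit of standard back-and-forth proofs for guarded bisimulation equivalences.
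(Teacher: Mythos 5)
Your Part~1 is correct and is precisely the ``straightforward formula induction'' the paper intends: the only non-trivial case is guarded negation, which you handle correctly by passing to the backward homomorphism supplied by the bisimulation at the guarded tuple and applying the induction hypothesis with the roles of the two structures reversed. Your Part~2 also has the right skeleton (saturated elementary extensions, $Z$ the set of $\gnf$-indistinguishable guarded pairs, a back-and-forth construction of the required global homomorphisms), but the execution has a genuine gap. You take recursively saturated extensions of $\frakA$ and $\frakB$ \emph{separately}, and at stage $n+1$ you ask $\widehat{\frakB}$ to realize ``the $\gnf[\sigma]$-type of all formulas that must hold'' of the new image. That type is $\{\phi(h(a_1),\ldots,h(a_n),x) : \widehat{\frakA}\models\phi(a_1,\ldots,a_n,a_{n+1})\}$, and its index set is the $\gnf$-theory of a tuple in $\widehat{\frakA}$, which is not recursive; recursive saturation of $\widehat{\frakB}$ alone therefore does not realize it. (Full $\omega$-saturation would, but then you cannot guarantee the extensions are countable, which the lemma asserts.) The paper's fix is essential here: take a single countable recursively saturated elementary extension $(\widehat{\frakA},\widehat{\frakB})$ of the \emph{pair}, viewed as one structure with domain predicates $P$ and $Q$. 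Then the set of implications $\phi(\mathbf{c},c_{n+1})\to\phi(\mathbf{d},x)$, ranging over \emph{all} $\gnf$ formulas $\phi$, together with $Q(x)$, is a recursive set of formulas of the joint structure; it is finitely satisfiable because the current partial map preserves $\gnf$ truth and $\gnf$ is closed under conjunction and existential quantification; and any realization is the desired image.

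A second, related problem is that your invariant is too weak. You maintain only that each \emph{guarded} tuple among $a_1,\ldots,a_n$ sits in $Z$ with its image, and you then face -- and only gesture at resolving -- the reconciliation of several guarded tuples through $a_{n+1}$ whose union is unguarded. This cannot be repaired by ``invoking the invariant tuple-by-tuple'': a $\gnf$ formula such as $\exists z\,(R(x,z)\wedge R(y,z))$ has an unguarded pair of free variables and does not factor into constraints on guarded subtuples. The invariant the paper maintains is that the finite partial map $f_n$ preserves the truth of \emph{all} $\gnf$ formulas with parameters anywhere in its domain, guarded or not. This stronger invariant is exactly what the implication-type above propagates from stage to stage, it immediately yields compatibility with $Z$ on guarded tuples once the union map is formed, and it makes your reconciliation issue disappear.
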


\begin{proof}
  The first part can be proved by a straightforward formula
  induction. 
  For the second part, we will use countable recursively saturated
  structures. 

  Consider the pair of countable structures $(\mathfrak{A},\mathfrak{B})$ 
  viewed as a single structure over an extended signature with additional
  unary predicates $P$ and $Q$ to denote the domain of
  $\mathfrak{A}$ and of $\mathfrak{B}$, respectively. 
  Let $(\widehat{\mathfrak{A}},\widehat{\mathfrak{B}})$ be any 
  countable recursively saturated elementary extension of
  $(\mathfrak{A},\mathfrak{B})$.  Let $Z$ be the collection of all
  pairs of guarded tuples of $\widehat{\frakA}$ and $\widehat{\frakB}$ 
  that are $\gnf$-indistinguishable. 
  To establish the lemma, we need to show that $Z$ is a strong
  GN-bisimulation, and that the partial map
  $\textbf{a}\mapsto\textbf{b}$ extends to a homomorphism that is
  compatible with $Z$. Both follow directly from the following claim.

\begin{trivlist}
\item \textbf{Claim.} Every finite partial map $f$ from $\widehat{\mathfrak{A}}$ 
  to $\widehat{\mathfrak{B}}$, or vice versa, that preserves truth of all 
  \gnfo-formulas, can be extended to a homomorphism $f'$ compatible with~$Z$. 
\item \textbf{Proof of claim.}
  We assume that $f$ is a finite partial map from $\widehat{\mathfrak{A}}$ to
  $\widehat{\mathfrak{B}}$; the other direction is symmetric. 
  Fix an enumeration $c_1, c_2, \ldots$ of the  (countably many)
  elements of the domain of $\widehat{\mathfrak{A}}$ that are not in
  the domain of $f$. We will define a sequence of finite partial maps
  $f=f_0\subseteq f_1 \subseteq f_2\subseteq \cdots$ such that
  $dom(f_{i+1})=dom(f_i)\cup \{c_{i+1}\}$, and such that each $f_i$
  preserves truth of all $\gnf$ formulas. It then follows that
  $\bigcup_i f_i$ is a homomorphism extending $f$ and compatible with $Z$.

  It remains only  to show how to construct $f_{i+1}$ from $f_i$. Here, we
  use the fact that  $(\widehat{\mathfrak{A}},\widehat{\mathfrak{B}})$
  is recursively saturated. Let $\textbf{c}$ be an enumeration of the
  domain of $f_i$, and $\textbf{d}$ an enumeration of the range of $f_i$, 
  corresponding to the enumeration of  $\textbf{c}$,
  and let $\Sigma(x)$ be the set of all first-order formulas of the form
  \[ 
    \phi(\textbf{c}, c_{i+1})\to\phi(\textbf{d}, x)
  \]   
  where $\phi(\textbf{c}, c_{i+1})$ is a $\gnf$ formula with parameters 
  $\textbf{c}$ and $c_{i+1}$, and $\phi(\textbf{d}, x)$ is obtained by 
  replacing each parameter in $\textbf{c}$ by its $f_i$-image, and 
  replacing $c_{i+1}$ by $x$. In the above definition of $\Sigma(x)$ we
  only consider formulas $\phi(\textbf{c}, c_{i+1})$  that belong to $\gnf$
  even when the parameters $\textbf{c}, c_{i+1}$ are treated as free variables 
  (thereby excluding formulas such as $c_1\neq c_2$).

  The set $\Sigma(x)\cup\{Q(x)\}$ is clearly a recursive set.  
  From the fact that $f_i$ preserves truth of \gnfo formulas 
  it follows that every finite subset of $\Sigma(x)\cup\{P(x)\}$ 
  is realized in $(\widehat{\mathfrak{A}},\widehat{\mathfrak{B}})$. 
  Note that in the argument above we are only relying on the 
closure of \gnfo 
  under conjunction and existential quantification.

  By compactness, therefore, $\Sigma(x)\cup\{Q(x)\}$ is consistent and, 
  by virtue of recursive saturation, it is realized by some element $d_{i+1}$. 
  It follows from the construction that the partial map 
    $f_{i+1} = f_i \cup \{(c_{i+1},d_{i+1})\}$
  preserves truth of all \gnfo formulas. \hfill $\dashv$
\end{trivlist}

This concludes the proof of the lemma.\end{proof}

\myparagraph{Proof of Theorem~\ref{thm:strongbisim}}
We prove only the harder direction, following the template often used 
in preservation theorems in classical model theory. 
Let $\phi(\textbf{x})$ be preserved by $\rightarrow^s_{GN}$,
and let $\Psi(\textbf{x})$ be the set of all \gnfo formulas it entails.
Thanks to compactness, it is enough to show that $\Psi(\textbf{x})\models\phi(\textbf{x})$.

Let $\mathfrak{B}\models\Psi(\textbf{b})$, and let
$\Gamma_{\mathfrak{B},\mathfrak{b}}(\textbf{x})$ be the set of all
negations of \gnfo formulas false of $\mathbf{b}$ in $\mathfrak{B}$.
We claim that $\Gamma_{\mathfrak{B},\mathbf{b}}(\textbf{x})\cup\{\phi(\mathbf{x})\}$
is consistent. Suppose it were not consistent. Then by the
Compactness Theorem we would have that $\phi(\textbf{x})$ implies
$\gamma(\textbf{x})$, where $\gamma(\textbf{x})$ is the 
negation of some finite conjunction of formulas from
$\Gamma_{\mathfrak{B},\mathbf{b}}(\textbf{x})$. It follows from the
construction of $\Gamma_{\mathfrak{B},\mathbf{b}}(\textbf{x})$ that 
$\gamma(\textbf{x})$ is (up to logical equivalence) a \gnfo formula, 
which therefore must belong to $\Psi(\textbf{x})$. This yields a 
contradiction because we have that $\mathfrak{B}\models\Psi(\textbf{b})$
and $\mathfrak{B}\not\models\gamma(\textbf{b})$.

Thus there is $\mathfrak{A}$ and $\textbf{a}$ such that
$\mathfrak{A}\models\Gamma_{\mathfrak{B},\mathbf{b}}(\textbf{a})\land\phi(\mathbf{a})$.
By construction, every \gnfo formula true of $\mathbf{a}$ in $\mathfrak{A}$
is also true of $\mathbf{b}$ in $\mathfrak{B}$. 
Note that we may assume that both $\mathfrak{A}$ and $\mathfrak{B}$ are countable.
Using Lemma~\ref{lem:lifting}, we can find elementary equivalent extensions completing 
the following diagram.
\[\begin{array}{ccc}
 (\widehat{\mathfrak{A}},\textbf{a}) & \to^s_{GN} & (\widehat{\mathfrak{B}},\textbf{b}) \\
 \velext & & \velext \\
 (\mathfrak{A},\textbf{a}) & \Rrightarrow_{GN} & (\mathfrak{B},\textbf{b})
\end{array}\]
By virtue of $\phi$ being invariant under elementary equivalence and being preserved 
by strong GN-bisimulations, we can chase it around the diagram starting from 
$\mathfrak{A}\models\phi(\mathbf{a})$ and concluding $\mathfrak{B}\models\phi(\mathbf{b})$.
Given that $\mathfrak{B}\models\Psi(\mathbf{b})$ was arbitrary, this shows 
that $\Psi(\textbf{x})\models\phi(\textbf{x})$ and so the theorem follows.
\qed

Note that our proof makes use of infinite structures
in a fundamental way. We do not claim the analogous
result for preservation over finite structures.

\medskip

We now look at characterizing the intersection of \gnfo 
with smaller fragments of  first-order logic.
We will start with tuple-generating dependencies. 

\subsection{Tuple-generating dependencies within \gnfo} \label{subsec:tgdchar}
Recall that a \emph{tuple-generating dependency (TGD)} is a sentence of the form:
\[
\forall \vec x \, \left( \phi(\vec x) \rightarrow \exists \vec y \, \rho(\vec x, \vec y) \right)
\]
where $\phi$ and $\rho$ are conjunctions of relational atomic formulas (not equalities).
TGDs arise in databases, as a way of specifying natural
restrictions on data and as a way of capturing relationships
between different datasources. They also arise
in ontological reasoning. Static analysis and query answering problems 
have motivated research to identify expressive yet computationally well-behaved classes of TGDs. 
A guarded TGD ($\gtgd$) is one in which $\phi$ includes an atomic formula 
containing all the variables $\vec x$ occurring in $\phi$. Guarded TGDs constitute an important 
class of TGDs at the heart of the Datalog$^\pm$ framework~\cite{CGL09pods,BGO14lmcs} 
for which many computational problems are decidable. 
More recently, Baget, Lecl{\`e}re, and Mugnier~\cite{baget2010} introduced 
\emph{frontier-guarded TGDs} ({\fgtgd}s), defined like guarded TGDs, but where 
only the variables occurring both in $\phi$ and in $\rho$ (the \emph{exported} variables) 
must be guarded by an atomic formula in $\phi$. 
Every \fgtgd is equivalent to a \gnfo sentence, obtained just by writing it out 
using existential quantification, negation, and conjunction.
Theorem~\ref{thm:fgtgd-char} below shows that these are \emph{exactly} the TGDs
that \gnfo can express. \looseness=-1 

We need two lemmas: one about \gnfo and one about TGDs.
For two structures $\frakA \subinst \frakB$, let us denote by $\frakB \ominus \frakA$ 
the structure 
obtained from  $\frakB$  by removing all facts 
containing only values from the active domain of $\frakA$.
We say that $\frakB$ is a \emph{squid-extension} of $\frakA$ if
\begin{enumerate} 
\item[(i)] every set of elements from the active domain of $\frakA$ 
      that is guarded in $\frakB$ is already guarded in $\frakA$; and 
\item[(ii)] $\frakB \ominus \frakA$ is a union of structures $\frakB'_i$ such that:
for two distinct $\frakB'_i$ and $\frakB'_j$ 
their
active domains overlap only in $\adom(\frakA) \cup C$, and
      each $(\adom(\frakB'_i)\cap \adom(\frakA))\setminus C$ is guarded in $\frakA$, where $C$ is 
      the set of elements of $\frakA$ named by a constant symbol. 
\end{enumerate}

Intuitively, we can think of $\frakB$ as a squid, where each $\frakB'_i$ is one of its tentacles.
We refer to the $\frakB_i$ as the tentacles, and the partition into $\frakB_i$ as
a \emph{squid decomposition} of $\frakB$.

We extend the notation to instances in the obvious way (since it does not depend
on the domain of $\frakA$ or $\frakB$).
The following lemma allows one to turn an arbitrary extension of a structure $\frakA$ 
into a squid-extension of $\frakA$, modulo strong GN-bisimulation.

\begin{lemma}\label{lem:squid}
  For every pair of structures $\frakA, \frakB$ with $\frakA\subinst \frakB$, there is a
  squid-extension $\frakB'$ of $\frakA$ and a homomorphism $h:\frakB'\to \frakB$ whose
   restriction to $\frakA$ is the identity function,
   such that $\frakB'\sim_{GN}^s \frakB$ via a
  strong GN-bisimulation that is compatible with $h$. Moreover, 
  we can choose $\frakB'$ to be finite if $\frakB$ is.
\end{lemma}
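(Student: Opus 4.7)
The plan is to construct $\frakB'$ as a controlled unravelling of $\frakB$ that keeps $\frakA$ intact and, for each guarded subset $G\subseteq\adom(\frakA)$, attaches one disjoint tentacle $T_G$ which is a fresh copy of $\frakB$ glued to $\frakA$ only along $G\cup C$. Concretely, for each such $G$ and each element $b\in\adom(\frakB)\setminus(G\cup C)$ I introduce a fresh element $b^{(G)}$, and I let $r_G$ denote the renaming that fixes $G\cup C$ pointwise and sends each $b\notin G\cup C$ to $b^{(G)}$. The tentacle $T_G$ contains the renamed fact $R(r_G(\vec c))$ for each fact $R(\vec c)$ of $\frakB$, and $\frakB'=\frakA\cup\bigcup_G T_G$. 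The homomorphism $h\colon\frakB'\to\frakB$ is defined to be the identity on $\adom(\frakA)$ and to invert each renaming, so that $h\circ r_G=\mathrm{id}_{\adom(\frakB)}$ for every $G$.

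The squid conditions then follow by inspection. For condition~(ii), distinct tentacles $T_G$ and $T_{G'}$ use disjoint sets of fresh elements, so $\adom(T_G)\cap\adom(T_{G'})\subseteq(G\cap G')\cup C\subseteq\adom(\frakA)\cup C$, and $\adom(T_G)\cap(\adom(\frakA)\setminus C)=G\setminus C$ is guarded in $\frakA$ by the choice of $G$. For condition~(i), any fact of $\frakB'$ whose non-constant elements all lie in $\adom(\frakA)$ is either inherited from $\frakA$ or is a tentacle fact whose non-constant elements lie entirely in $G$ (the alternative would involve a non-$G\cup C$ element of $\frakB$ renamed to a fresh element outside $\adom(\frakA)$); in either case the guarded subset in question is contained in $G$, hence guarded in $\frakA$. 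Finiteness is clear: when $\frakB$ is finite, $\frakA$ has only finitely many guarded subsets, and each tentacle is a finite copy of $\frakB$.

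For the strong GN-bisimulation I would take
\[
Z\;=\;\{(r_G(\vec c),\vec c)\;:\;\vec c\text{ is a guarded tuple in }\frakB,\;G\text{ is a guarded subset of }\frakA\}.
\]
Every guarded tuple of $\frakB$ appears as some $\vec c$, and by the argument for condition~(i) every guarded tuple of $\frakB'$ is of the form $r_G(\vec c)$. For a pair $(r_G(\vec c),\vec c)\in Z$, the forward homomorphism required by the back-and-forth is $h$ itself, and the backward one is $r_G$ viewed as a homomorphism $\frakB\to T_G\subseteq\frakB'$; compatibility with $Z$ in both directions is immediate from $h\circ r_G=\mathrm{id}_{\adom(\frakB)}$ and from the closure of $Z$ under all the renamings $r_G$.

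The key design choice that makes the argument go through is including a separate tentacle for \emph{every} guarded subset $G$ of $\frakA$, including $G=\emptyset$. The tentacle $T_\emptyset$ handles guarded tuples of $\frakB$ whose $\adom(\frakA)$-part is not contained in any guarded subset of $\frakA$ (for which every element must be renamed), while the tentacles $T_G$ with $G\ne\emptyset$ are needed to supply, for every guarded tuple $\vec b$ with $\vec b\setminus C\subseteq G$, a homomorphism $\frakB\to\frakB'$ sending $\vec b$ to itself, as required by the back-and-forth for pairs in which $\vec b$ is already guarded in $\frakA$. Once this combinatorial bookkeeping is in place, all remaining verifications are routine.
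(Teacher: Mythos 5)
Your construction is exactly the paper's: one fresh isomorphic copy of $\frakB$ for each guarded subset of $\frakA$, glued to $\frakA$ along that subset together with the constants, with the natural projection as $h$ and the strong GN-bisimulation consisting of the pairs $(\vec{a},h(\vec{a}))$ for $\vec{a}$ a guarded tuple of $\frakB'$. The only difference is that you spell out the back-and-forth verification and the squid conditions, which the paper dismisses as ``clearly''; your details check out.
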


We will make use of Lemma~\ref{lem:squid} as a tool for bringing
certain conjunctive queries into a restricted syntactic form, by
exploiting the fact that, whenever a tuple from $\adom(\frakA)$ satisfies a
conjunctive query in a squid-extension $\frakB$ of $\frakA$, then we can
partition the atomic formulas of the query into independent subsets that are
mapped into different tentacles of $\frakB$.

\begin{proof}
  For every set $X$ of elements that is guarded in $\frakA$, we create a
  structure $\frakB_X$ that is a fresh isomorphic copy of $\frakB$ in which only
  the elements of $X\cup C$ are kept constant (i.e., mapped to themselves by
  the isomorphism), where $C$ is the set of all elements named by a
  constant symbol. We define $\frakB'$ to be the union of all such $\frakB_X$.
  Clearly,  $\frakB'$ is a squid-extension of $\frakA$, and the natural projection 
  $h:\frakB'\to \frakB$ is a homomorphism. Furthermore, we claim that $\frakB'\sim_{GN}^s \frakB$ 
  via a strong GN-bisimulation that is compatible with $h$. 
  The claimed strong GN-bisimulation consists of all pairs
  $(\vec{a},h(\vec{a}))$ where $\vec{a}$ is a guarded tuple of $\frakB'$.
\end{proof}

The following lemma expresses a general property of TGDs that follows from the fact that
TGDs are preserved under taking direct products of structures \cite{Fagin82}.

\begin{lemma} \label{lem:product}
(Both classically and in the finite.)
Let $\Sigma$ be any set of TGDs and suppose that 
$\Sigma\models\forall\vec{x} ~ (\phi(\vec{x})\to\bigvee_{i=1\ldots n} \exists\vec{y}_i\psi_i(\vec{x},\vec{y}_i))$,
where $\phi$ and the $\psi_i$ are conjunctions of atomic formulas.
Then $\Sigma\models\forall\vec{x} ~ (\phi(\vec{x})\to\exists\vec{y}_i\psi_i(\vec{x},\vec{y}_i))$
for some $i\leq n$.
\end{lemma}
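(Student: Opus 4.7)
The plan is a straightforward application of the direct-product technique, following the hint in the paper.  Suppose, for contradiction, that the conclusion fails.  Then for each $i \in \{1, \ldots, n\}$ there exists a model $\frakA_i \models \Sigma$ and a tuple $\vec{a}_i$ in $\frakA_i$ such that $\frakA_i \models \phi(\vec{a}_i)$ but $\frakA_i \not\models \exists \vec{y}_i\, \psi_i(\vec{a}_i, \vec{y}_i)$.  In the finite case, the $\frakA_i$ can be chosen finite; in the classical case, no size restriction is needed.

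First I would form the direct product $\frakA = \prod_{i=1}^n \frakA_i$, together with the tuple $\vec{a}$ whose $i$-th projection is $\vec{a}_i$.  By the classical fact (Fagin~\cite{Fagin82}) that TGDs are preserved under direct products, we have $\frakA \models \Sigma$.  Also, since $\phi$ is a conjunction of positive atomic formulas and such formulas are preserved by direct products, $\frakA \models \phi(\vec{a})$.  In the finite-model case, $\frakA$ is also finite since each factor is.

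Next I would show that $\frakA \not\models \exists \vec{y}_i\, \psi_i(\vec{a}, \vec{y}_i)$ for any $i$.  If some tuple $\vec{b}$ in $\frakA$ witnessed $\psi_i(\vec{a}, \vec{b})$, then because $\psi_i$ is a conjunction of positive atoms and the projection $\pi_i\colon \frakA \to \frakA_i$ is a homomorphism preserving such formulas, the projection $\vec{b}_i = \pi_i(\vec{b})$ would witness $\frakA_i \models \psi_i(\vec{a}_i, \vec{b}_i)$, contradicting our choice of $\frakA_i$.

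Combining the three observations yields $\frakA \models \Sigma$, $\frakA \models \phi(\vec{a})$, and $\frakA \not\models \bigvee_i \exists \vec{y}_i\, \psi_i(\vec{a}, \vec{y}_i)$, contradicting the assumed entailment.  There is no real obstacle here: the only point to check carefully is that the argument works uniformly in the finite and classical settings, which is immediate because direct products of finitely many finite structures are finite.
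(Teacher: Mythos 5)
Your proposal is correct and follows essentially the same route as the paper's own proof: form the direct product of counterexample models for each disjunct, invoke preservation of TGDs under products, note that the tuple of pairs satisfies $\phi$, and use the projection homomorphisms to rule out each $\exists\vec{y}_i\,\psi_i$. The only cosmetic difference is that the paper presents the case $n=2$ for readability while you handle general $n$ directly.
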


\begin{proof}
 To simplify the presentation, we consider the case where
  $n=2$.  Let
 $$\Sigma\models\forall\vec{x}(\phi(\vec{x})\to
 \exists\vec{y}_1\psi_1(\vec{x},\vec{y}_1)\lor
 \exists\vec{y}_2\psi_2(\vec{x},\vec{y}_2))$$ and suppose for the sake
 of a contradiction that there are structures $I_1\models\Sigma$ and
 $I_2\models\Sigma$ such that $I_i\models
 \phi(\vec{a}_i)\land\neg\exists\vec{y}_i\psi_i(\vec{a}_i,\vec{y}_i)$.  Let
 $J$ be the direct product $I_1 \times I_2$, that is, the structure
 whose domain is the cartesian product of the domains of $I_1$ and
 $I_2$ and such that a tuple of pairs belong to a relation in $J$ if
 and only if the tuple of first-projections belongs to the
 corresponding relation in $I_1$ and the tuple of second-projections
 belongs to the corresponding relation in $I_2$. If a constant
  symbol denotes $a$ in $I_1$ and $b$ in $I_2$, it denotes the pair
  $(a,b)$ in $J$. Since TGDs are closed
 under taking direct products, we have that $J\models\Sigma$.
 It also follows from the construction that 
\begin{enumerate}
\item[(i)] the natural projections $h_1:J\to I_1$ and
 $h_2:J\to I_2$ are homomorphisms, and 
\item[(ii)] whenever
  $\phi(\vec{x})$ is satisfied by tuples $\vec{a}_1$ in $I_1$ and
  $\vec{a}_2$ in $I_2$, then the tuple of pairs $\vec{a}$ whose
  first-projections are $\vec{a}_1$ and whose second projections are
  $\vec{a}_2$ also satisfies $\phi(\vec{x})$ in $J$.
\end{enumerate}
  Putting this together, we obtain that
  $J\models \phi(\vec{a})\land\bigwedge_i\neg\exists\vec{y}_i\psi_i(\vec{a},\vec{y}_i)$,
  which contradicts the fact that $J\models\Sigma$.

Because $J$ is finite if both $I_1$ and $I_2$ are, the above argument 
is equally valid over finite structures as over arbitrary structures.
\end{proof}

We now return to describing our characterization of TGDs that are 
equivalent to some \gnfo sentence. Consider a TGD 
  $\rho = \forall \vec{x} \left( \beta(\vec{x}) \limp \exists \vec{z} \gamma(\vec{x}\vec{z}) \right)$.  
A \emph{specialization} of $\rho$ is a TGD of the form 
  $\rho^\theta = \forall \vec{x} \left( \beta(\vec{x}) \limp \exists \vec{z}' \gamma'(\vec{x}\vec{z}') \right)$ 
obtained from $\rho$ by applying some substitution $\theta$ 
mapping the variables $\vec{z}$ to constant symbols or to variables 
among $\vec{x}$ and $\vec{z}$.  Clearly, a specialization of 
a TGD $\rho$ entails $\rho$.
The following lemma states that as far as strong GN-bisimulation invariant TGDs 
are concerned, we can replace any TGD by specializations of it that are equivalent 
to frontier-guarded TGDs. Its proof relies heavily on the two lemmas above.

\begin{lemma} \label{lem:special} [TGD specializations]
(Both classically and in the finite.)
Let $\Sigma$ be a set of TGDs that is strong GN-bisimulation invariant 
and let $\rho$ be a TGD such that $\Sigma \models \rho$. 
Then there exists a specialization $\rho'$ of $\rho$ 
such that $\Sigma \models \rho'$, and such that $\rho'$ is logically
equivalent to a conjunction of frontier-guarded TGDs.
\end{lemma}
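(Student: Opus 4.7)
The plan is to combine Lemma~\ref{lem:squid} (squid extensions) with Lemma~\ref{lem:product} (the product lemma). Write $\rho = \forall\vec{x}\,(\beta(\vec{x}) \to \exists\vec{z}\,\gamma(\vec{x},\vec{z}))$. I would first let $\Theta^*$ be the (finite, up to renaming) collection of substitutions $\theta$ on $\vec{z}$ such that $\rho^\theta$ is logically equivalent to a conjunction of frontier-guarded TGDs. Each $\rho^\theta$ has a single CQ head, so by Lemma~\ref{lem:product} it will suffice to show
\[
  \Sigma \models \forall\vec{x}\Bigl(\beta(\vec{x}) \to \bigvee_{\theta\in\Theta^*} \exists\vec{z}'\,\gamma^\theta(\vec{x},\vec{z}')\Bigr).
\]

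To establish this, I would take an arbitrary model $\frakC \models \Sigma$ and a tuple $\vec{a}$ with $\frakC \models \beta(\vec{a})$, and let $\frakA \subinst \frakC$ be the sub-instance consisting of exactly the facts asserted by $\beta(\vec{a})$; this is a copy of $\canoninst(\beta)$ with $\vec{x}$ mapped to $\vec{a}$. Applying Lemma~\ref{lem:squid} to $\frakA \subinst \frakC$ yields a squid-extension $\frakC'$ of $\frakA$, a homomorphism $h\colon\frakC' \to \frakC$ fixing $\adom(\frakA)$, and a compatible strong GN-bisimulation between $\frakC'$ and $\frakC$. By the strong-GN-bisimulation invariance of $\Sigma$, we get $\frakC' \models \Sigma$; together with $\frakA \subinst \frakC'$ and the assumption $\Sigma \models \rho$, this yields a witness $\vec{c}$ in $\frakC'$ with $\frakC' \models \gamma(\vec{a},\vec{c})$. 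From $\vec{c}$ I would read off a substitution $\theta$: send $z_i$ to the corresponding $\vec{x}$-variable or constant when $c_i \in \adom(\frakA)$, and identify $z_i$ with $z_j$ whenever $c_i = c_j$ lies outside $\adom(\frakA)$. Transferring via $h$ (which fixes $\vec{a}$ and is a homomorphism) gives $\frakC \models \exists\vec{z}'\,\gamma^\theta(\vec{a},\vec{z}')$.

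The main obstacle is verifying that this $\theta$ lies in $\Theta^*$, i.e., that $\rho^\theta$ decomposes into a conjunction of frontier-guarded TGDs; this is where the squid structure does its real work. Every atom of $\gamma^\theta$ corresponds to a fact of $\frakC'$. An atom whose elements all lie in $\adom(\frakA)$ is itself a fact of $\frakA$ and hence an atom of $\beta$, so it is redundant and may be dropped. Each remaining atom involves an element outside $\adom(\frakA)$ and therefore lies in exactly one tentacle $\frakC'_i$ of the squid decomposition. Grouping by tentacle gives $\gamma^\theta \equiv \bigwedge_i \gamma_i$ with pairwise disjoint existential variables (since tentacles share only $\adom(\frakA) \cup C$), and where the $\vec{x}$-variables of each $\gamma_i$ correspond to a subset of $(\adom(\frakC'_i) \cap \adom(\frakA)) \setminus C$, which by squid-property~(ii) is guarded in $\frakA$ by some atom of $\beta$. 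Thus each $\forall\vec{x}\,(\beta \to \exists\vec{z}_i\,\gamma_i)$ is frontier-guarded, placing $\theta$ in $\Theta^*$ as required. Lemma~\ref{lem:product} then selects a single $\theta_0 \in \Theta^*$ with $\Sigma \models \rho^{\theta_0}$, which is the desired specialization. Since Lemmas~\ref{lem:squid} and~\ref{lem:product} both work uniformly over arbitrary and finite structures, the argument settles both the classical and the finite cases at once.
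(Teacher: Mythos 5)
Your overall strategy is the paper's own: inside an arbitrary model of $\Sigma$, form a squid-extension of the image of the body (Lemma~\ref{lem:squid}), use strong-GN-bisimulation invariance to transfer $\Sigma$ and hence $\rho$ to the squid, read a substitution off the resulting witness, argue frontier-guardedness tentacle by tentacle, and finish with Lemma~\ref{lem:product}. Phrasing the final step as ``the disjunction over all good specializations is entailed'' rather than ``every model satisfies some good specialization'' is only a cosmetic reorganization.

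One step fails as written, though. You claim that an atom of $\gamma^\theta$ all of whose arguments land in $\adom(\frakA)$ ``is itself a fact of $\frakA$ and hence an atom of $\beta$, so it is redundant and may be dropped.'' That is not true: the squid-extension $\frakC'$ can contain facts over $\adom(\frakA)$ that are not facts of $\frakA$. In the construction of Lemma~\ref{lem:squid} each tentacle is a copy of all of $\frakC$ keeping a guarded set $X\cup C$ pointwise fixed, so any fact of $\frakC$ whose arguments lie inside such an $X\cup C$ survives verbatim in $\frakC'$; and $\frakC$ may well satisfy facts about $\vec{a}$ beyond those asserted by $\beta(\vec{a})$. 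Such an atom of $\gamma^\theta$ is therefore not implied by the body, and deleting it changes the logical content of $\rho^\theta$ --- the formula you then show equivalent to a conjunction of frontier-guarded TGDs is no longer a specialization of $\rho$. The repair is local and is exactly what the paper does, using the one squid property you never invoke, namely property~(i): a fully substituted atom is its own connected component of the head, its image is a fact of $\frakC'$ with all arguments in $\adom(\frakA)$, hence a guarded set of $\frakC'$, hence by property~(i) a guarded set of $\frakA$; so the singleton TGD $\forall\vec{x}\,(\beta\to A)$ is itself frontier-guarded and should simply be kept as one of the conjuncts rather than dropped. (A smaller point: following the paper, you should substitute $z_i$ by a constant symbol whenever its witness interprets one, even if that element lies outside $\adom(\frakA)$; otherwise an atom of $\gamma^\theta$ could sit in more than one tentacle and your grouping by tentacle would be ambiguous.) With these adjustments your argument coincides with the paper's proof.
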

\begin{proof}
First we introduce the notion of a \emph{quasi-frontier guarded TGD}.
By the \emph{graph of a TGD} 
  $\rho = \forall \vec{x} \left( \beta(\vec{x}) \limp \exists \vec{z} \gamma(\vec{x},\vec{z}) \right)$ 
we mean the undirected graph whose nodes are the conjuncts of $\gamma$ 
and where two conjuncts are connected by an edge if they share an 
existentially quantified variable. 
Observe that if the graph of $\rho$ is not connected, then $\rho$ 
can be decomposed into several TGDs, one for each connected component. 
We say that $\rho$ is \emph{quasi-frontier guarded} if, for each
connected component of its graph, the set of universally quantified
variables occurring in atomic formulas belonging to that component 
is guarded by some atomic formula in the TGD body $\beta$. 
This is equivalent to saying that the decomposition into TGDs just mentioned
yields a set of frontier-guarded TGDs. 
%

We will show that, if $\Sigma$ is a set of TGDs that is strong GN-bisimulation 
invariant and $\rho$ is a TGD such that $\Sigma \models \rho$, then 
there exists a specialisation $\rho'$ of $\rho$ such that $\Sigma \models \rho'$, 
and such that $\rho'$ is quasi-frontier guarded.

Thus fix $\rho =  \forall \vec{x} \left( \beta(\vec{x}) \limp \exists \vec{z} \gamma(\vec{x},\vec{z}) \right)$
such that $\Sigma \models \rho$.


Consider any structure $J \models \Sigma$ and homomorphism $h:\canoninst(\beta(\vec{x})) \to J$. 
Let $B$ be the image of $h$. By Lemma~\ref{lem:squid}, $B$ has a squid-extension $J'$ 
such that $J'\sim_{GN}^s J$ via some strong GN-bisimulation that is compatible 
with a homomorphism $g:J'\to J$ whose restriction to $B$ is the identity function. 
Since $\Sigma$ is invariant for strong GN-bisimulations, $J'\models\Sigma$.
Therefore since $\Sigma \models \rho$,  $J'\models\rho$. In particular, $h$ can be extended to a homomorphism 
$h'$ from $\canoninst(\exists \vec{z}\gamma(\vec{x},\vec{z}))$ to $J'$.
We can extract from $h'$ a substitution $\theta$, namely the one that sends 
a variable $z_i$ to a constant symbol $c$ if $h'(z_i)$ is the interpretation of $c$ 
(if $h'(z_i)$ is the interpretation of several constant symbols we choose one arbitrarily), 
or else $\theta$ sends $z_i$ to an arbitrary $x_j$ for which $h'(z_i)=h(x_j)$ 
if there is such $x_j$, otherwise $\theta$ sends $z_i$ to $z_i$. 
Applying $\theta$ to the conjunctive query $\exists\vec{z}\gamma(\vec{x},\vec{z})$
yields another conjunctive query $\exists\vec{z}'\gamma'(\vec{x},\vec{z}')$ 
(where $\vec{z}'$ is a subset of $\vec{z})$. By construction we have that 
\[ 
  \rho' = \forall\vec{x} \left(\beta(\vec{x}) \to \exists\vec{z}'\gamma'(\vec{x},\vec{z}')\right) 
\] 
is a specialization of $\rho$ such that the CQ $\exists\vec{z}'\gamma'(\vec{x},\vec{z}')$ 
is satisfied in $J'$, hence also in $J$, under the assignment $h$ for the universally 
quantified variables $\vec{x}$. 
We first show that each $\rho'$ is quasi-frontier-guarded. 
Consider the decomposition of $\rho'$ 
$$ 
  \rho' \, \equiv \, 
  \bigwedge_j \forall\vec{x} \left(\beta(\vec{x}) \to \exists\vec{z}'\gamma'_j(\vec{x},\vec{z}')\right) 
$$
such that the graphs of 
$
  \rho'_j = \forall\vec{x} \left(\beta(\vec{x}) \to \exists\vec{z}'\gamma'_j(\vec{x},\vec{z}')\right)
$ 
enumerate the connected components of the graph of $\rho'$ and let $j$ be arbitrary. 

Note that, by construction, all existential variables $\textbf{z}'$ 
are mapped by $h'$ to elements that neither belong to $\adom(B)$ nor interpret any constant symbol: 
if $h'$ had mapped an existential variable to $\adom(B)$, then this variable would have been removed and replaced
by a universal variable.
Next note that the active domains of the tentacles of $J'$ overlap only on elements of
$\adom(B)$. Using connectivity of $\gamma'_j$, we see that
the existential variables must map to the active domain of a  single tentacle.
From connectedness of the graph of $\gamma'_j$, we know there are two possibilities:
if there are no existential variables in $\gamma'_j$, then $\gamma'_j$ consists of a single
atom. In this case the universal variables map into a guarded set of $B$. 
If there is any existential variable in $\gamma'_j$,
then  every universal variable  lies in some atom with an
existential variable. 
Since the existential variables do not map into $\adom(B)$,
it follows that the image of 
$\canoninst(\gamma'_j)$ under $h'$ 
must be  entirely contained in a single tentacle of $J'$.
Now the subset of the universally-quantified 
variables $\vec{x}$ 
occurring in $\gamma'_j$ is mapped into $B$,  since $h$ mapped into $B$ and
$h'$ extended $h$. Thus the variables $\vec{x}$  must be mapped by $h'$
to the intersection of a tentacle and the active domain of $B$, hence
(by the properties of a squid decomposition) 
again we can conclude that $\vec{x}$ maps to a guarded set of elements of $B$.  And since $h'$ agrees with $h$ on these
variables, the same statement holds with $h$ substituted for $h'$.
Since $B$ was defined as the $h$-image of $\canoninst(\beta)$,
we can conclude that the universally-quantified variables occurring in $\gamma'_j$ 
are guarded in $\beta$; that is, $\rho'_j$ is frontier-guarded. Since $j$ was arbitrary, 
this shows that $\rho'$ is indeed quasi-frontier-guarded.

Now we need to show that one such $\rho'$ is entailed by $\Sigma$.
What we have shown thus far is that  any $J$ that is satisfied by $\Sigma$ satisfies
one such $\rho'$. But there are only finitely many such $\rho'$, and thus
by Lemma~\ref{lem:product} we can conclude that $\Sigma$ entails one such $\rho'$.
\end{proof}

Suppose we apply the lemma above to each TGD is $\Sigma$.
We get a finite set of frontier-guarded TGDs whose conjunction  implies
each TGD in $\Sigma$. Further, each TGD in the set is implied
by $\Sigma$. Thus we have
obtained our first main characterization:

\begin{theorem}\label{thm:fgtgd-char}
Every \gnfo sentence that is equivalent to the conjunction
of a finite set of TGDs on
finite structures is equivalent to the same
conjunction of a finite set of TGDs on arbitrary
structures, and such a formula is equivalent (over all structures) 
to a finite set of FGTGDs.
\end{theorem}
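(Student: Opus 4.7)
The plan is to apply Lemma~\ref{lem:special} to each TGD in $\Sigma$ and combine this with the finite model property of \gnfo (Theorem~\ref{thm:gnfsat}) to transfer equivalences between the finite and classical semantics. A key observation I will use throughout is that every FGTGD and its negation both lie in \gnfo: the frontier guard in the body makes the negation of the existential consequent correctly guarded, and the outer universal closure is handled via unguarded negation of a closed formula. Consequently, for any \gnfo sentences $\psi_1, \psi_2$, the formula $\psi_1 \land \neg\psi_2$ is itself \gnfo, so by FMP any finite-semantics equivalence between two \gnfo sentences automatically lifts to the classical semantics.

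First I would establish $\phi \models \bigwedge\Sigma$ on all structures. Fixing $\rho = \forall\vec{x}(\beta(\vec{x})\to\exists\vec{y}\,\gamma(\vec{x},\vec{y})) \in \Sigma$, I introduce fresh constants $\vec{c}$ for the universal variables and form the \gnfo sentence $\phi \land \beta(\vec{c}) \land \neg\exists\vec{y}\,\gamma(\vec{c},\vec{y})$. If this were classically satisfiable, FMP would yield a finite model of $\phi$ violating $\rho$, contradicting $\phi \equiv_{\mathrm{fin}} \bigwedge\Sigma$. Hence $\phi \models \rho$ classically for each $\rho \in \Sigma$.

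Next I would apply Lemma~\ref{lem:special}. Since $\phi$ is \gnfo (hence strong GN-bisimulation invariant by Lemma~\ref{lem:lifting}) and $\phi \equiv_{\mathrm{fin}} \bigwedge\Sigma$, the set $\Sigma$ is strong GN-bisimulation invariant on finite structures. Applying Lemma~\ref{lem:special} in the finite semantics to each $\rho \in \Sigma$ yields a specialization $\rho^*$, equivalent to a conjunction of FGTGDs, with $\Sigma \models_{\mathrm{fin}} \rho^*$. Collecting these produces a finite set $\Sigma^*$ of FGTGDs. Since specializations entail their originals, $\bigwedge\Sigma^* \models \bigwedge\Sigma$ on arbitrary structures, and combined with the hypothesis this yields $\phi \equiv_{\mathrm{fin}} \bigwedge\Sigma^*$. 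Both sentences being \gnfo, the FMP transfer described in the first paragraph upgrades this to $\phi \equiv \bigwedge\Sigma^*$ on all structures --- establishing the FGTGD characterization.

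It remains to show $\phi \equiv \bigwedge\Sigma$ on all structures. The direction $\phi \models \bigwedge\Sigma$ is the second paragraph; the reverse $\bigwedge\Sigma \models \phi$ is the principal obstacle, since classical models of $\bigwedge\Sigma$ may a priori exceed those of the stronger $\bigwedge\Sigma^*$. The plan is to re-apply Lemma~\ref{lem:special} in the classical semantics, which requires $\Sigma$ to be strong GN-bisimulation invariant on arbitrary structures. I would derive this classical invariance by combining the $\phi \models \bigwedge\Sigma$ direction with the recursive-saturation technique of Lemma~\ref{lem:lifting}(2): given a classical model $M$ of $\bigwedge\Sigma$ and a strong GN-bisimilar partner $M'$, lift both to countable recursively saturated elementary extensions and transport $\phi$-satisfaction along the bisimulation, using that $\phi$-models are automatically $\Sigma$-models. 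Once classical invariance of $\Sigma$ is in hand, the classical instance of Lemma~\ref{lem:special} yields $\Sigma \models \rho^*$ on arbitrary for each $\rho^* \in \Sigma^*$, whence $\bigwedge\Sigma \equiv \bigwedge\Sigma^* \equiv \phi$ classically, completing the proof.
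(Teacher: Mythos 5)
Your paragraphs 1--3 follow essentially the same route as the paper: the paper's entire proof of Theorem~\ref{thm:fgtgd-char} is to apply Lemma~\ref{lem:special} to each TGD in $\Sigma$, collect the resulting specializations into a finite set $\Sigma^*$ of FGTGDs, and use the finite model property to move between the two semantics. Your explicit argument that $\phi\models\bigwedge\Sigma$ classically (freezing the universal variables as constants so that the negated head becomes a guarded negation of a sentence, then invoking Theorem~\ref{thm:gnfsat}) is correct and is a useful piece of bookkeeping that the paper leaves implicit, as is your observation that finite invariance of $\Sigma$ under strong GN-bisimulation suffices to run the finite-semantics instance of Lemma~\ref{lem:special} and that $\phi\equiv_{\mathrm{fin}}\bigwedge\Sigma^*$ then lifts to classical equivalence because both sides are \gnfo.

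The gap is in your last paragraph. To show that $\Sigma$ is strong GN-bisimulation invariant over \emph{arbitrary} structures, you propose to take a classical model $M$ of $\bigwedge\Sigma$ and ``transport $\phi$-satisfaction along the bisimulation, using that $\phi$-models are automatically $\Sigma$-models.'' But transporting $\phi$-satisfaction out of $M$ presupposes $M\models\phi$, i.e.\ the entailment $\bigwedge\Sigma\models\phi$ over arbitrary structures --- which is exactly the direction you are in the middle of proving. The implication you actually have at that point, $\phi\models\bigwedge\Sigma$, runs the other way and only lets you pass from $\phi$-models to $\Sigma$-models, not from an arbitrary classical $\Sigma$-model to a $\phi$-model; recursive saturation does not help with this, since the missing step is not about lifting $\Rrightarrow_{GN}$ to $\to^s_{GN}$ but about getting $\phi$ to hold in $M$ in the first place. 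The difficulty is genuine: finite and unrestricted entailment between sets of TGDs do not coincide in general (e.g.\ transitivity together with $\forall xy(Rxy\to\exists z\,Ryz)$ finitely entails the TGD $\forall xy(Rxy\to\exists z\,Rzz)$ but does not entail it classically), so one cannot simply upgrade $\bigwedge\Sigma\models_{\mathrm{fin}}\bigwedge\Sigma^*$ to classical entailment. You have correctly isolated the one step of the theorem that needs a further idea --- the paper's own three-sentence proof does not spell this direction out either --- but the repair as written is circular and does not close it.
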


In light of the above result, it may seem tempting to suppose that, similarly, 
guarded TGDs can express all that can be expressed both by TGDs and in \gfo. 
This is, however, not the case: the TGD 
  $\forall xyz ~\left( R(x,y) \land R(y,z) \limp P(x) \right)$ 
can be equivalently expressed in \gfo, but not by means of a guarded TGD; 
and the guarded TGD 
  $\forall x ~\left( P(x) \limp \exists yz \ E(x,y) \land E(y,z) \land E(z,x) \right)$ 
is not expressible in \gfo. Instead, we show that every property expressible 
both in \gfo and by a finite set of TGDs is in fact expressible by a finite set 
of \emph{acyclic frontier-guarded TGDs}.

Recall from Section~\ref{sec:prelim} that a CQ is answer-guarded 
if its free variables co-occur in one of its atomic sub-formulas and that 
such a CQ is acyclic if it is equivalent to a positive-existential \gfo formula.
We say that a frontier-guarded TGD 
 $\rho = \forall \vec{x}\vec{y}  
         \left( \beta(\vec{x},\vec{y}) \limp \exists \vec{z}\, \gamma(\vec{x},\vec{z}) \right)$ 
is acyclic if the answer-guarded CQ $\exists \vec{y}\beta(\vec{x},\vec{y})$ 
and the answer-guarded CQ $\exists \vec{y}\vec{z}\, \beta(\vec{x},\vec{y}) \land\gamma(\vec{x},\vec{z})$ are both acyclic. 
Note that both CQs are indeed answer-guarded, by virtue of $\rho$ being frontier-guarded.
\begin{theorem}\label{thm:gtgd-char}
Every \gfo sentence that is equivalent to a finite set of TGDs over
finite structures is equivalent 
(over all structures) to a finite set of acyclic FGTGDs.
\end{theorem}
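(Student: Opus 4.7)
The plan is to combine Theorem~\ref{thm:fgtgd-char} with a treeification argument exploiting $\phi\in\gfo$. By Theorem~\ref{thm:fgtgd-char} we may assume that $\phi$ is equivalent, over all structures, to a finite set $\Sigma$ of FGTGDs, so the task reduces to replacing each $\rho\in\Sigma$ by a finite collection of acyclic FGTGDs entailed by $\phi$.

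Write each $\rho\in\Sigma$ in the form $\forall\vec{x}\,(q_{body}(\vec{x})\to q^{full}(\vec{x}))$, where $q_{body}(\vec{x})=\exists\vec{y}\,\beta(\vec{x},\vec{y})$ and $q^{full}(\vec{x})=\exists\vec{y}\vec{z}\,(\beta(\vec{x},\vec{y})\wedge\gamma(\vec{x},\vec{z}))$; both are answer-guarded via the frontier-guard atom of $\rho$, so the treeifications $T(q_{body})$ and $T(q^{full})$ are defined and finite. For each $q^k\in T(q_{body})$, $q^k\models q_{body}$, hence $\phi\wedge q^k(\vec{x})\models q^{full}(\vec{x})$. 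Since $q^k$ is acyclic it lies in \gfo by Fact~\ref{fact:acyclicCQ}, so $\phi\wedge q^k$ is an answer-guarded \gfo formula, and Fact~\ref{fact:treeification} yields $\phi\models\forall\vec{x}\,(q^k(\vec{x})\to\bigvee_s (q^{full})^s(\vec{x}))$, where $T(q^{full})=\{(q^{full})^s\}_s$. Writing $q^k$ in prenex form as $\exists\vec{w}\,\mu^k(\vec{x},\vec{w})$ and applying Lemma~\ref{lem:product} to the equivalent TGD set $\Sigma$, we extract a single disjunct: $\phi\models\forall\vec{x}\vec{w}\,(\mu^k(\vec{x},\vec{w})\to(q^{full})^{s(k)}(\vec{x}))$ for some $s(k)$. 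Denote this TGD by $\sigma^k_\rho$.

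I then claim each $\sigma^k_\rho$ is an acyclic FGTGD. It is frontier-guarded because $q^k$ contains an atom guarding $\vec{x}$, inherited from the frontier guard of $q_{body}$: the homomorphism witnessing $q^k\models q_{body}$ preserves $\vec{x}$ and therefore forces $q_{body}$'s guard atom into $q^k$; the analogous argument applies to $(q^{full})^{s(k)}$. Its body CQ $q^k$ is acyclic by construction, and its full CQ is obtained by conjoining the acyclic bodies of $q^k$ and $(q^{full})^{s(k)}$, both of which contain an atom guarding $\vec{x}$; linking those two guard hyperedges in the disjoint union of their respective join trees yields a join tree for the combined hypergraph, so the full CQ is alpha-acyclic as well. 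Let $\Sigma'$ be the finite collection of all such $\sigma^k_\rho$.

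The main obstacle is to establish $\bigwedge\Sigma'\equiv\phi$. The direction $\phi\models\bigwedge\Sigma'$ is immediate from the construction. For the converse, take $M\models\bigwedge\Sigma'$; since both $\Sigma'$ and $\phi$ lie in \gfo, both are preserved under guarded unravelling, so passing to the unravelling $M^*$ of $M$ we have $M^*\models\Sigma'$ and $M\models\phi$ iff $M^*\models\phi$. The critical technical step is the pointwise version of treeification on tree-like structures: for any answer-guarded CQ $q$ and any guarded tuple $\vec{b}$ in $M^*$, $M^*\models q(\vec{b})$ iff $M^*\models\bigvee T(q)(\vec{b})$; this is implicit in the construction underlying Fact~\ref{fact:treeification} and is the real technical lifting from the cited treeification machinery. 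Granting it, on $M^*$ each $\rho\in\Sigma$ reduces to $\bigwedge_k\forall\vec{x}\,(q^k\to\bigvee_s(q^{full})^s)$, which is entailed by the $\sigma^k_\rho\in\Sigma'$. Hence $M^*\models\Sigma\equiv\phi$, and therefore $M\models\phi$.
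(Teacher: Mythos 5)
Your proof is correct and follows essentially the same route as the paper's: reduce to FGTGDs via Theorem~\ref{thm:fgtgd-char}, treeify bodies and heads, verify equivalence against the original TGDs on guarded unravellings via Fact~\ref{fact:treeification}, and eliminate the head disjunction with Lemma~\ref{lem:product}. If anything, you are more careful than the paper at two points it leaves implicit, namely the acyclicity of the combined body-plus-head CQ and the pointwise form of treeification at arbitrary guarded tuples of the unravelling.
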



\begin{proof}
  Let $\phi$ be any \gf sentence that is equivalent to a finite set of
  TGDs over finite structures. Then, by Theorem~\ref{thm:fgtgd-char}, 
  $\phi$ is equivalent to a finite set $\Sigma$ of FGTGDs over arbitrary structures.

Recall the notion of guarded unravelling $\frakA^*$ of a structure $\frakA$
and the notion of treeification of an answer-guarded CQ from Section \ref{sec:prelim}.
Note
that for each TGD in $\Sigma$, its  left-hand side is answer-guarded
by definition, and its right-hand side can be assumed answer-guarded as well.
Consider the set $\Sigma'$ of disjunctive GTGDs obtained by
  replacing the head and body of each TGD by
  its treeification, and expanding out the disjunction in the left-hand side.  

We claim that $\Sigma$ is equivalent to $\Sigma'$.
Note that since $\phi$ is in $\gf$, for any structure $\frakA$,
$\frakA \models \phi \leftrightarrow  \frakA^* \models \phi$.
Similarly, since $\Sigma'$ is in $\gf$,
$\frakA \models \phi \leftrightarrow  \frakA^* \models \phi$.
Thus it is enough to show equivalence of $\phi$ and $\Sigma'$ on
guarded unravellings.
But from Fact~\ref{fact:treeification} we see that each formula
is equivalent to its treeification on guarded unravellings, and so our claim is proven.

Now by Lemma~\ref{lem:product}, we obtain that each disjunctive TGD in $\Sigma'$
is  equivalent to one of the  GTGDs obtained by replacing the disjunction in
its head by   one of the  disjuncts. Since the head and body of each such
TGD are acyclic, each such TGD is acyclic.
\end{proof}

\subsection{Existential and Positive-Existential Formulas}
We turn to characterizing the existential formulas within \gnfo, 
establishing an analog of the \L{}o\'s-Tarski theorem. 

\begin{theorem} \label{thm:lostarski}
Every \gnfo formula that is preserved under extensions over finite structures 
has the same property over all structures, and such a formula is equivalent 
(over all structures) to an existential formula in \gnfo. 
Furthermore, we can decide whether a formula has this property, 
and also find an equivalent existential \gnfo formula effectively.
\end{theorem}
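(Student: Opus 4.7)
The plan is to adapt the template of the proof of Theorem~\ref{thm:strongbisim}, replacing its bisimulation infrastructure by a one-sided variant suited to existential $\gnfo$. First I isolate the appropriate simulation: define $(\frakA,\vec{a}) \to^e_{GN} (\frakB,\vec{b})$ to hold when there exists a structure $\frakC$ with $(\frakA,\vec{a}) \rightarrow^s_{GN} (\frakC,\vec{b})$ and $\frakC \subinst \frakB$. By structural induction every existential $\gnfo$ formula is preserved under $\to^e_{GN}$: the positive part is monotone under $\subinst$, and any guarded negation of a quantifier-free subformula is preserved by Lemma~\ref{lem:lifting}(1).

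Next I prove a one-sided lifting lemma analogous to Lemma~\ref{lem:lifting}(2): if every existential $\gnfo$ formula true of $\vec{a}$ in a countable $\frakA$ is also true of $\vec{b}$ in a countable $\frakB$, then there exist countable recursively saturated elementary extensions $\widehat{\frakA}, \widehat{\frakB}$ with $(\widehat{\frakA},\vec{a}) \to^e_{GN} (\widehat{\frakB},\vec{b})$. The construction follows the back-and-forth pattern from the proof of Lemma~\ref{lem:lifting}, but only in the forward direction: one enumerates the domain of $\widehat{\frakA}$ and at each step realises the required existential $\gnfo$ type of the next element inside a fresh tentacle-like extension of $\widehat{\frakB}$ (in the spirit of Lemma~\ref{lem:squid}), producing a homomorphism from $\widehat{\frakA}$ into some $\frakC \subinst \widehat{\frakB}$ that is simultaneously compatible with a strong $\gnfo$-bisimulation.

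The characterization over arbitrary structures then mirrors the proof of Theorem~\ref{thm:strongbisim}. Let $\phi(\vec{x}) \in \gnfo$ be preserved under $\subinst$-extensions, and let $\Psi(\vec{x})$ collect the existential $\gnfo$ consequences of $\phi$. Given any $\frakB \models \Psi(\vec{b})$, a compactness argument identical to the one inside the proof of Theorem~\ref{thm:strongbisim} but restricted to existential $\gnfo$ produces $\frakA \models \phi(\vec{a})$ whose existential $\gnfo$ type at $\vec{a}$ is contained in that of $\vec{b}$ in $\frakB$. Passing to countable recursively saturated elementary extensions and applying the new lifting lemma yields $(\widehat{\frakA},\vec{a}) \to^e_{GN} (\widehat{\frakB},\vec{b})$, which factors through some $\frakC$ with $\widehat{\frakA} \rightarrow^s_{GN} \frakC \subinst \widehat{\frakB}$. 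Because $\phi\in\gnfo$, Lemma~\ref{lem:lifting}(1) gives $\frakC \models \phi(\vec{b})$; preservation of $\phi$ under $\subinst$-extensions lifts this to $\widehat{\frakB} \models \phi(\vec{b})$, and elementarity transfers it back to $\frakB \models \phi(\vec{b})$. Standard compactness then yields a finite subset of $\Psi$ whose conjunction is existential $\gnfo$ and equivalent to $\phi$.

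Finally, the finite-structure statement and effectiveness follow from Theorem~\ref{thm:gnfsat}. The equivalence $\phi \equiv \psi$ of two $\gnfo$ formulas can be cast as a $\gnfo$ validity problem by replacing free variables with fresh constant symbols (cf.\ the footnote after Theorem~\ref{thm:gnfsat}), so the finite model property of $\gnfo$ lifts any equivalence holding over finite structures to one holding over all structures; since the existential $\psi$ constructed above is $\subinst$-preserved on all structures, $\phi$ inherits this preservation, so preservation on finite structures implies preservation on arbitrary structures. Effectiveness is obtained by enumerating candidate existential $\gnfo$ formulas in increasing size and testing $\phi \equiv \psi$ via decidability of $\gnfo$ validity, with termination guaranteed by the characterization. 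The main technical obstacle I foresee is the one-sided lifting lemma: producing a concrete global homomorphism into a suitable $\subinst$-extension of $\widehat{\frakB}$ that is simultaneously compatible with a strong $\gnfo$-bisimulation requires a careful adaptation of the recursive-saturation construction inside the proof of Lemma~\ref{lem:lifting}.
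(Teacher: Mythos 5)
There is a genuine gap, and it sits exactly where you flagged it: the one-sided simulation and its lifting lemma. First, your ``easy direction'' is false as stated. If $\to^e_{GN}$ factors as $(\frakA,\vec a)\rightarrow^s_{GN}(\frakC,\vec b)$ followed by $\frakC\subinst\frakB$, then existential \gnfo formulas are \emph{not} preserved: take $\phi=\exists x\,(P(x)\wedge\neg Q(x))$, $\frakA=\frakC=\{P(a)\}$ and $\frakB=\{P(a),Q(a)\}$ on the same domain. The guarded negated atom is preserved by the bisimulation leg (Lemma~\ref{lem:lifting}(1)) but destroyed by the $\subinst$ leg, which may add facts on the very tuple that guards the negation. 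Relatedly, the theorem's hypothesis is preservation under \emph{extensions} in the induced-substructure sense (this is what the paper's relativization $\phi^P$ encodes), not preservation under $\subinst$; the same $\phi$ above separates the two notions even inside \gnfo, so your final step ``preservation of $\phi$ under $\subinst$-extensions lifts this to $\widehat{\frakB}\models\phi(\vec b)$'' invokes a hypothesis you do not have. If you repair this by requiring $\frakC$ to be an \emph{induced} substructure of $\widehat{\frakB}$, the easy direction and the final step are fine, but now the lifting lemma fails outright. Example: $\frakA$ has two elements with $R(a_1),R(a_2)$ and no binary facts, $\frakB$ has $R(b_1),R(b_2)$ and additionally $S(b_1,b_2)$, with $\vec a=(a_1,a_2)$, $\vec b=(b_1,b_2)$. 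Every existential \gnfo formula true at $(\frakA,\vec a)$ holds at $(\frakB,\vec b)$ --- the only negation that could detect $S(b_1,b_2)$ is $\neg S(x_1,x_2)$, which cannot be guarded at $(a_1,a_2)$ --- yet any induced $\frakC\ni b_1,b_2$ contains the fact $S(b_1,b_2)$, which no global backward homomorphism $\frakC\to\frakA$ (required by the definition of strong GN-bisimulation for the guarded pairs in $Z$) can reflect. Your sketch also speaks of realizing types ``inside a fresh tentacle-like extension of $\widehat{\frakB}$''; but $\frakC$ must sit \emph{inside} $\widehat{\frakB}$, and a forward-only back-and-forth gives you no handle on those backward homomorphisms at all.

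For contrast, the paper's proof avoids needing any one-sided lifting lemma. It uses the hypothesis of preservation under extensions only once, to invoke the \emph{classical} \L{}o\'s--Tarski theorem and obtain an equivalent existential FO formula $\phi'$. The remaining work is purely about guarding: one deletes the unguarded negative literals and unguarded inequalities from $\phi'$ to get $\phi''$, and shows $\phi''\models\phi'$ by building, from any $(\frakA,\vec a)\models\phi''$, a structure $(\frakA',\vec b)\models\phi'$ with $(\frakA',\vec b)\to^s_{GN}(\frakA,\vec a)$ and then applying the strong-GN-bisimulation invariance of $\phi$ (full \gnfo invariance, not an existential-fragment analogue). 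The finite/arbitrary transfer and decidability are handled separately by encoding ``preserved under extensions'' as validity of the \gnfo sentence $\bigwedge_c P(c)\land\phi^P(\vec d)\to\phi(\vec d)$ and appealing to Theorem~\ref{thm:gnfsat}; your effectiveness argument by enumeration agrees with the paper's, but the finite-to-arbitrary step needs this relativization trick rather than the construction of $\psi$ (which itself presupposes preservation over all structures).
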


\begin{proof}
Let $\phi$ be a \gnf formula containing constants $\textbf{c}$ and
with free variables $\textbf{x}$. Let $\textbf{d}$ be fresh constants,
one for each variable in $\textbf{x}$. Then $\phi$ is preserved under 
extensions over finite structures iff the \gnf sentence 
 $\Phi = \bigwedge_{c\in\textbf{c}\cup\textbf{d}} P(c)\land\phi^P(\textbf{d})\to \phi(\textbf{d})$ 
is a validity over finite structures, where $\phi^P$ is the relativization of $\phi$ 
to a new unary predicate $P$. Since $\Phi$ is a $\gnf$ formula, 
it is a validity over finite structures iff it is a validity over all structures.
Also, the decidability of $\gnf$ allows us to decide this validity.

As to the effective content of the claim, note that 
once an equivalent existential formula is known to exist in \gnfo, 
we can find it by exhaustive search relying on the decidability 
of equivalence of \gnfo formulas.

By the classical  \L{}o\'s-Tarski~theorem, if a first-order formula is preserved 
under extensions over all structures, it is equivalent to an existential formula $\phi'$. 
Thus, to complete the proof, it suffices to show that every \gnfo formula $\phi$ 
that is equivalent to an existential formula $\phi'$ is also equivalent to an 
existential \gnfo formula $\phi''$. We can assume that $\phi$ is satisfiable,
since otherwise it is clearly equivalent to a \gnfo formula.
We can convert $\phi'$ into the form $\bigvee_i \phi'_i$, 
where $\phi'_i(\textbf{x}) = \exists \textbf{y} \left( \varepsilon'_i \wedge \bigwedge_j \psi'_{ij}\right)$ 
with each $\psi'_{ij}$ a possibly negated relational atom 
and where each $\varepsilon_i$ is a conjunction of equalities an inequalities 
of a complete equality type on $\textbf{cxy}$.
That is, $\varepsilon_i$ is  a maximal satisfiable set of 
equalities and inequalities involving the constants $\textbf{c}$ and variables $\textbf{xy}$. 

In general, some of the negated atomic formulas and inequalities in $\phi'$ 
may not be guarded. Let $\phi''$ be obtained from $\phi'$ by removing all conjuncts 
that are unguarded negative atomic formulas or unguarded inequalities.  

We claim that $\phi'$ and $\phi''$ are equivalent.
One direction is obvious, since $\phi'$ clearly implies $\phi''$. 
In the remainder of the proof, we show that $\phi''$  implies $\phi'$.

Consider an arbitrary structure $\frakA$ and tuple $\textbf{a}$ 
such that $\frakA\models\phi''(\textbf{a})$. It is our task to show 
that $\frakA\models\phi'(\textbf{a})$. Our general approach will be 
to construct another structure $\frakA'$ and tuple $\textbf{b}$ 
such that $\frakA'\models\phi'(\textbf{b})$. 
In addition, we will show that $(\frakA',\textbf{b})\to^s_{GN} (\frakA,\textbf{a})$. 
By Theorem~\ref{thm:strongbisim}, this will allow us to conclude 
$\frakA \models \phi'(\textbf{a})$ as needed, since $\phi'$ is logically 
equivalent to $\phi \in \gnfo$.
 
Let $h$ be a variable assignment from an appropriate
$\phi''_i(\textbf{x}) = \exists \textbf{y} \left(\varepsilon''_i \wedge \bigwedge_j \psi''_{ij}\right)$ 
to elements of $\frakA$, 
witnessing $\frakA \models \phi''(\textbf{a})$. 
In particular, $\varepsilon''_i$ is in general an incomplete equality type on $\textbf{cxy}$ 
that only includes an equality or inequality of every pair of variables 
that co-occur in a positive relational atom in some $\psi''_{ij}$.
We need to show that $\frakA \models \phi'_i(h(\textbf{x}))$. 
The main obstacles to overcome are:
\begin{enumerate}
\item[(i)] the possibility that $h$ maps two variables $u,v$ 
  to the same element of $\frakA$ while $\varepsilon'_i$ includes 
  the (unguarded) inequality $u\neq v$.
\item[(ii)] the possibility that $\frakA$ contains a fact that is the
  $h$-image of an atomic formula  occurring under an (unguarded)
  negation in $\phi'_i$.
\end{enumerate}
Based on these considerations, our construction of $\frakA'$ and $\textbf{b}$ will,
intuitively, involve (i) making sure that only those equalities are satisfied 
that are either explicitly contained in $\phi'_i$ or that follow (by transitivity) 
from guarded equalities true in $\frakA$ at $\textbf{a}$ and (ii) making sure 
that every fact satisfied in $\frakA'$ whose values are in the range of $h$ 
is guarded by a fact that is an $h$-image of a positive atomic formula of $\phi'_i$. 

The precise construction is as follows.
Let $X$ be the set of constants and all variables occurring, free or bound, in $\phi'_i$. 
Further let $\equiv$ be the equivalence relation on $X$ generated by all pairs of 
constants or variables $(u, v)$ such that $\varepsilon''_i$ contains the equality $u=v$. 
Let $f:X\to X/_\equiv$ be the natural map that sends each variable to its equivalence class.
We define the structure $\frakA^*$ with domain $X/_\equiv$ and, for each relation symbol $R$,
the relation $R^{\frakA^*}$ consisting of tuples $f(\textbf{u})$ such that  $R(\textbf{u})$ 
occurs as a positive atomic sub-formula in $\phi''_i$ or, what is the same, in $\phi'_i$.
Further let the $\equiv$-class of each constant interpret in $\frakA^*$ the corresponding 
constant symbol and let $\textbf{b}=f(\textbf{x})$.  
Note that $\frakA^*$ depends on $\frakA$ solely through the choice of the disjunct $\phi'_i$ 
that is assumed to be satisfied at $\textbf{a}$ in $\frakA$ via the variable assignment $h$.

\begin{itemize}
\item Observation 1: there is a homomorphism $g:\frakA^* \to dom(\frakA)$ 
  such that $h=g \circ f$ and such that $g$ is injective on guarded subsets of $\frakA^*$.
  That is, $g$  maps distinct elements co-occurring in a fact of $\frakA^*$ to distinct elements of $\frakA$.
\item Observation 2: $f$ assigns elements of $\frakA^*$ to variables of $\phi'$ 
      in a manner witnessing $\frakA^*\models\phi'(\textbf{b})$.  
\end{itemize}
Observation 1 follows from the definition of $\equiv$ and of $\frakA^*$. 
Observation 2 follows from the construction of $\frakA^*$ 
(for the equalities, inequalities, and positive atomic formulas) 
and from the previous observation (for the negative atomic formulas).

As a next step, we transform $\frakA^*$ into $\frakA'$ as follows. 
For each fact $F$ of $\frakA^*$ we make an isomorphic copy of $\frakA$ denoted $\frakA'_F$, 
where the isomorphism maps the elements belonging to 
the $g$-image of $F$ to their, by Observation 1, unique $g$-preimage 
and maps all other elements to distinct fresh elements. 
We define $\frakA'$ as the union 
  $\frakA^* \cup \bigcup \{ \frakA'_F \mid F \text{ a fact of } \frakA^* \}$, 
and let $\widehat{g}:\frakA'\to \frakA$ be the map that extends $g$ 
by mapping every newly-created element in some $\frakA'_F$ to the corresponding 
element of $\frakA$. Note that, by construction, $\widehat{g}: \frakA^* \to \frakA$ is a homomorphism.

\begin{itemize}
\item Observation 3: $\frakA'\models\phi'_i(\textbf{b})$ via the variable assignment $f$. 
\item Observation 4: $(\frakA',\textbf{b})\to^s_{GN} (\frakA,\textbf{a})$.
\end{itemize}

Observation 3 follows from Observation 2,  $\frakA^* \subinst \frakA'$,
and the observation that $\frakA'$ does not add any new facts on elements of $\frakA^*$.
For Observation 4, it can be easily verified that the graph of $\widehat{g}$ 
is in fact a strong GN-bisimulation, which is compatible with the homomorphism $g$ 
and $g(\textbf{b})=\textbf{a}$. 
From Observation 4 and Theorem~\ref{thm:strongbisim} 
we get that $\frakA\models\phi'(\textbf{a})$ as needed.
\end{proof}

\noindent
{\bf Note.} This theorem can also be proven by refining the \gnfo interpolation 
theorem of Section~\ref{sec:interpol} to get a Lyndon-style interpolation theorem. 
The approach via interpolation is spelled out in the paper~\cite{csllics14}.

\medskip

Finally, we consider the situation for  \gnfo formulas that are positive existential 
(for short, $\exists^+$). Since \gnfo contains all $\exists^+$ formulas, 
Rossman's homomorphism preservation theorem~\cite{ross} implies that the $\exists^+$ formulas are exactly 
the formulas in \gnfo preserved under homomorphism, over all structures or 
(equivalently, by the finite model property for \gnfo) over finite structures.
In addition, using the proof of Rossman's theorem plus the decidability of \gnfo 
we can effectively decide whether a \gnfo formula can be rewritten 
in $\exists^+$.

\begin{theorem} \label{thm:rossman}
There is an effective algorithm for testing whether a given \gnfo
formula is equivalent to a positive existential
formula, and, if so, computing such a formula.
\end{theorem}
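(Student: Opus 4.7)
The plan is to dovetail two semi-decision procedures, leveraging the characterisation laid out in the paragraph just preceding the theorem: by Rossman's homomorphism preservation theorem for finite structures together with the finite model property of $\gnfo$, a $\gnfo$ formula $\phi(\vec{x})$ is equivalent to a positive existential formula (on finite structures, equivalently on arbitrary structures) if and only if $\phi$ is preserved under homomorphisms on finite structures.

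The first procedure enumerates all $\exists^+$ formulas $\psi(\vec{x})$ over the signature of $\phi$ with the same free variables, in some systematic order, and for each candidate tests whether $\phi \equiv \psi$; since both sides are $\gnfo$ formulas, this test is decidable by applying Theorem~\ref{thm:gnfsat} after replacing the free variables by fresh constants, as described in the footnote about non-answer-guarded $\gnfo$. If an equivalent $\psi$ is ever found, we output it and halt.

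The second procedure enumerates, up to isomorphism, all triples $(\frakA,\frakB,h)$ where $\frakA,\frakB$ are finite structures over the signature of $\phi$ and $h:\frakA\to\frakB$ is a homomorphism, and for each checks by finite model-checking whether $\frakA \models \phi$ and $\frakB \not\models \phi$; the first such triple, if any, witnesses that no equivalent $\exists^+$ formula exists. Correctness is immediate from the characterisation, since exactly one of the two procedures must terminate, and the output in each case is correct. The main subtle point is that we require the \emph{finite} version of the homomorphism preservation theorem, due to Rossman, rather than the classical one, in order to guarantee that whenever $\phi$ fails to be equivalent to any $\exists^+$ formula, a \emph{finite} counterexample to homomorphism preservation exists and is eventually produced by the second procedure.
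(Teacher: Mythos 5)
Your proposal is correct, but it takes a genuinely different route from the paper. The paper exploits the \emph{effective content} of Rossman's proof: if an FO formula is equivalent to a positive existential formula, it is equivalent to one of the \emph{same quantifier rank}. Since there are only finitely many pairwise inequivalent $\exists^+$ formulas of bounded quantifier rank over a finite signature, the paper's algorithm is a single terminating bounded search through these candidates, each tested against $\phi$ via the decidability of \gnfo equivalence; if none matches, the answer is ``no''. You instead use only the \emph{statement} of Rossman's finite homomorphism preservation theorem and compensate by dovetailing an unbounded positive search (enumerate all $\exists^+$ formulas and test equivalence) with a negative search (enumerate finite structures with a homomorphism between them and look for a violation of preservation). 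Your observation that the \emph{finite} version of the theorem is what guarantees termination of the negative branch is exactly the right point, and the mutual exclusivity of the two branches is immediate. The trade-off: the paper's argument yields an a priori bound on the quantifier rank (hence, in principle, the size) of the rewriting and on the running time, at the cost of relying on a quantitative refinement extracted from Rossman's proof; yours is more self-contained, needing only the theorem as a black box plus decidability of \gnfo equivalence and finite model checking, but it provides no bound on when the procedure halts. One cosmetic gap: for formulas with free variables the counterexample search should look for a tuple $\vec{a}$ with $\frakA \models \phi(\vec{a})$ and $\frakB \not\models \phi(h(\vec{a}))$ (or, equivalently, replace the free variables by fresh constants throughout, as you already do for the equivalence test); as written your second procedure only mentions the sentential case.
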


\begin{proof}
Rossman's proof \cite{ross} shows that if an arbitrary FO formula $\phi$ is
equivalent to an $\exists^+$ formula, it is equivalent to one
of the same quantifier rank as $\phi$. If $\phi$ is in \gnfo, we can
test equivalence of a given $\exists^+$ formula $\phi'$ with $\phi$, using
the decidability of \gnfo. We can thus test all  $\exists^+$ formulas
with quantifier rank bounded by the quantifier rank of $\phi$, giving an
effective procedure.
\end{proof}

\section{Interpolation and Beth definability for $\gnfo$} \label{sec:interpol}
The Craig Interpolation theorem for first-order logic~\cite{craig57beth} can be stated as follows:
given formulas $\phi, \psi$ such that $\phi\models \psi$,
there is a formula $\chi$ such that 
\begin{enumerate}
\item[(i)] $\phi \models \chi$, and $\chi \models \psi$
\item[(ii)] all relations occurring in $\chi$ occur in both $\phi$ and $\psi$
\item[(iii)] all constants occurring in $\chi$ occur in both $\phi$ and $\psi$
\item[(iv)] all free variables of $\chi$ are free variables of both $\phi$ and $\psi$.
\end{enumerate}

The Craig Interpolation theorem has a number of important consequences, including
 the \emph{Projective Beth Definability theorem}~\cite{beth}.
Suppose that we have a sentence $\phi$ over a first-order signature
of the form $\sigma\cup  \{G\}$, where $G$ is an $n$-ary predicate,
 and suppose $\sigma'$ is a subset of $\sigma$.
A sentence $\phi$ \emph{implicitly defines predicate $G$ over $\sigma'$} if: 
for every $\sigma'$-structure $I$, every
 expansion 
to a $\sigma\cup  \{G\}$-structure  $I'$ satisfying $\phi$ has the same
restriction to $G$.
Informally, the $\sigma'$ structure and the sentence $\phi$ determine a unique value for $G$.
An $n$-ary predicate $G$  is \emph{explicitly  definable  over $\sigma'$ for models of $\phi$}
if there is another formula  $\rho(x_1 \ldots x_n)$ using only predicates from $S'$ such
that $\phi \models \forall \vec x ~ \rho(\vec x) \leftrightarrow G(\vec x)$.
It is easy to see that whenever $G$ is explicitly definable over $\sigma'$ 
for models of $\phi$, then $\phi$ implicitly defines $G$ over $\sigma'$. 
The Projective Beth Definability theorem states the converse: 
 if $\phi$ implicitly defines $G$ over $\sigma'$, then $G$ is explicitly definable 
 over $\sigma'$ for models of $\phi$.
In the special case where $\sigma'=\sigma$, this is called simply the Beth
Definability theorem. \looseness=-1

A proof of the  Craig Interpolation theorem can be found in any model theory
textbook (e.g.~\cite{ChangKeisler}). 
The Projective Beth Definability theorem follows from the Craig Interpolation theorem.
Both theorems fail when restricted to finite structures \cite{EF99}.

We say that a fragment of first-order logic has the Craig Interpolation Property (CIP) 
if for all $\phi \models \psi$ in the fragment, the result above holds relative to the fragment. 
We similarly say that a fragment satisfies the Projective Beth Definability Property (PBDP) 
if the  Projective Beth Definability theorem holds relativized to the fragment 
-- that is, if $\phi$  in the hypothesis of the theorem lies in the fragment
then there is a corresponding formula $\rho$  lying in the fragment as well. 
We talk about the Beth Definability Property (BDP) for a fragment in the same way.
The argument for first-order logic applies to any fragment with reasonable
closure properties \cite{hooglandthesis} to show that CIP implies PBDP.

CIP and PBDP do not hold when implication is restricted to finite models \cite{EF99}.
However,  the finite and unrestricted versions of these properties are equivalent
when considering fragments of FO with some basic closure
properties that have the finite model property, since
there equivalence (resp. consequence)
over finite structures can be replaced by equivalence (resp. consequence) over all structures. 
Thus it is particularly natural to look at CIP and PBDP for such fragments, such as $\gf$ and $\gnf$.
Hoogland,  Marx, and Otto \cite{HMO}  showed that
the Guarded Fragment satisfies BDP but lacks CIP. 
Marx \cite{Marx07pods} 
went on to explore PBDP for the Guarded Fragment and its extensions.
He argues that
the PBDP holds for  an extension of $\gf$ called the Packed Fragment.
The definition of the Packed Fragment is
not important for this work, but 
at the end of this section we  show that PBDP fails for $\gf$, and also (contrary to \cite{Marx07pods}) for the Packed
Fragment.  
But we will adapt ideas of Marx to show that CIP and PBDP do hold for $\gnf$. 

\medskip

The main technical result of this section is then:


\begin{theorem}[$\gnf$ has Craig interpolation]
\label{thrm:Craig_with_constants}
For each pair of \gnfo-formulas $\phi, \psi$ such that $\phi\models \psi$,
there is a \gnfo-formula $\chi$ such that
\begin{compactitem}
\item[(i)] $\phi \models \chi$, and $\chi \models \psi$,
\item[(ii)] all relations occurring in $\chi$ occur in both $\phi$ and $\psi$,
\item[(iii)] all free variables of $\chi$ are free variables of both $\phi$ and $\psi$.
\end{compactitem}
\end{theorem}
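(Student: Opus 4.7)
The plan is to follow the classical model-theoretic template for interpolation, adapted to \gnfo via strong GN-bisimulations and Lemma~\ref{lem:lifting}.

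\textbf{Reduction to sentences.} First I would eliminate free variables by introducing fresh constant symbols, one per free variable; common free variables become common constants, while non-common ones become constants private to $\phi$ or to $\psi$. This turns the statement into the sentence-level claim, with the free-variable condition (iii) absorbed into condition (ii). Let $\tau$ denote the intersection of the signatures of $\phi$ and $\psi$ (including shared constants).

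\textbf{Reduction to an amalgamation problem.} Let $\Gamma(\tau)$ be the set of \gnfo[$\tau$] sentences entailed by $\phi$. If $\Gamma(\tau)\cup\{\neg\psi\}$ is inconsistent, then by compactness and closure of \gnfo under finite conjunction some single $\chi\in\Gamma(\tau)$ satisfies $\phi\models\chi\models\psi$, giving the interpolant. Otherwise, fix $\frakB_0\models\Gamma(\tau)\cup\{\neg\psi\}$. A symmetric compactness argument (using the closure of \gnfo[$\tau$] under disjunction) then yields $\frakA_0\models\phi$ such that every \gnfo[$\tau$] sentence true in $\frakA_0$ is true in $\frakB_0$, and conversely. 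We may assume both structures are countable. Writing $\vec{a},\vec{b}$ for the interpretations of the private constants on each side, we thus have $(\frakA_0,\vec a)\equiv_{\gnfo[\tau]}(\frakB_0,\vec b)$.

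\textbf{Lifting to strong GN-bisimulation.} Applying Lemma~\ref{lem:lifting}(ii) in both directions, I would pass to countable recursively-saturated elementary extensions $\widehat\frakA\succeq\frakA_0$ and $\widehat\frakB\succeq\frakB_0$ for which there are global homomorphisms $h\colon\widehat\frakA|_\tau\to\widehat\frakB|_\tau$ and $g\colon\widehat\frakB|_\tau\to\widehat\frakA|_\tau$ compatible with a common strong GN-bisimulation $Z$ over $\tau$, with $h(\vec a)=\vec b$ and $g(\vec b)=\vec a$. Elementarity preserves both $\phi$ and $\neg\psi$.

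\textbf{Amalgamation.} The goal is to build a single structure $\frakC$ over $\sigma_\phi\cup\sigma_\psi$ whose $\sigma_\phi$-reduct is strongly GN-bisimilar to $\widehat\frakA$ and whose $\sigma_\psi$-reduct is strongly GN-bisimilar to $\widehat\frakB$, with a distinguished tuple $\vec c$ matching $\vec a$ and $\vec b$. Once this is achieved, Lemma~\ref{lem:lifting}(i) applied on each side gives $\frakC\models\phi\wedge\neg\psi$, contradicting $\phi\models\psi$. To construct $\frakC$ I would perform a stage-by-stage chase: start with a copy of $\widehat\frakA$ as the $\sigma_\phi$-skeleton, and for every guarded tuple $\vec u$ in the current structure paste a fresh isomorphic copy of $\widehat\frakB$ (supplying the $\sigma_\psi\setminus\tau$-facts) whose root is identified with $h(\vec u)$, and symmetrically a fresh copy of $\widehat\frakA$ on the $\sigma_\phi$ side glued via $g$. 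Iterating this in a fair manner and taking the union yields $\frakC$. The pair of ``projection'' maps back to $\widehat\frakA$ and $\widehat\frakB$, together with the tentacle-rooted homomorphisms supplied by $Z$ at each stage, witness the required strong GN-bisimulations on each side.

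\textbf{Main obstacle.} The crucial difficulty is that strong GN-bisimulation asks for \emph{global} homomorphisms in both directions compatible with a single bisimulation, not merely local ones. The squid-style pasting must be organized so that the global projection maps are well-defined and fact-preserving, and so that the guarded tuples of $\frakC$ all fall within images of the original guarded tuples of $\widehat\frakA$ (respectively $\widehat\frakB$). This is exactly the point where the asymmetry between strong GN-bisimulation and the weaker GN-bisimulation of \cite{BtCS15jacm} matters, and where the recursive saturation of $\widehat\frakA,\widehat\frakB$ is used to keep the chase consistent at every stage.
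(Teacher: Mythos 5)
Your overall architecture coincides with the paper's: use compactness to produce a model of $\phi$ and a model of $\neg\psi$ related by $\Rrightarrow_{GN[\sigma\cap\tau]}$, lift this to $\to^s_{GN[\sigma\cap\tau]}$ between countable elementary extensions via Lemma~\ref{lem:lifting}(2), amalgamate, and derive a contradiction using Lemma~\ref{lem:lifting}(1). However, the amalgamation step is the technical heart of the argument, and your proposal does not actually carry it out: your ``Main obstacle'' paragraph names exactly the point where the iterated pasting construction needs to be verified (well-definedness and fact-preservation of the global projection maps, and control of which tuples become guarded in the amalgam) and then stops. The iterated squid-style pasting is genuinely delicate --- each pasted copy of $\widehat{\frakB}$ imports $\tau$-facts that create new guarded tuples, which force further pasting of copies of $\widehat{\frakA}$, and one must check that the resulting direct limit still admits, \emph{for every} guarded tuple, global homomorphisms in both directions compatible with a single bisimulation. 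The paper avoids all of this with a one-shot zig-zag product (Lemma~\ref{lem:amalgamation}): the domain of the amalgam $\frakU$ is $\{(c,d) \mid (c\mapsto d)\in Z\}$, with $\sigma$-relations induced from $\frakA$ and $\tau$-relations induced from $\frakB$, and the two required strong GN-bisimulations are then witnessed directly by the maps $a\mapsto(a,h(a))$ and the natural projections. Also, recursive saturation plays no role in the amalgamation (it is consumed entirely inside Lemma~\ref{lem:lifting}), so your appeal to it there is misplaced.

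Two smaller points. First, your compactness step claims full $\gnfo[\tau]$-equivalence of $\frakA_0$ and $\frakB_0$ (``and conversely''); since \gnfo is not closed under negation, the argument only yields the one-directional transfer (every $\gnfo[\tau]$ sentence true in the $\phi$-model is true in the $\neg\psi$-model), which is exactly what the paper obtains and is all that is needed --- a single application of Lemma~\ref{lem:lifting}(2) already supplies the two-way homomorphism data on guarded tuples. Second, replacing \emph{non-common} free variables by private constants does not absorb condition (iii) into condition (ii): the theorem places no restriction on the constants of the interpolant (and cannot, as the paper's discussion after the statement shows), so the sentence-level interpolant may mention those private constants, and they cannot be turned back into free variables of $\chi$ without violating (iii). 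The paper instead existentially quantifies the private free variables of $\phi$ and universally quantifies those of $\psi$ under a dummy guard before starting.
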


Section~\ref{sec:proofcraig} is dedicated to the proof of Theorem~\ref{thrm:Craig_with_constants}.
In Section~\ref{sec:applicationsCraig} we present further
applications of the result, and in Section~\ref{sec:gf-interpolation} we discuss failure of 
interpolation for the Guarded Fragment.

We first comment that item (iii) can be ensured by pre-processing
$\phi$ and $\psi$.
We can assume that $\phi$ contains only free variables that are common
to $\psi$: if it has variables that are not,  then
we can existentially quantify them.
We can also assume that $\psi$ has only free variables that
are common to $\phi$: if it has variables that are not, then
we can universally quantify them, restricting the universal
quantification to a new ``dummy guard''. This new guard
will not  occur in the interpolant, since it is not common, so
this does not impact the other items.
quantifying any violating free variables of the interpolant. 
Thus
it suffices to ensure (i) and (ii).

Also observe that in Theorem~\ref{thrm:Craig_with_constants}, the
interpolant is allowed to contain constant symbols outside of the
common language.  Indeed, this must be so, for $\gnf$ lacks the stronger
version of interpolation where the interpolant can only contain
constant symbols occurring both in the antecedent and in the
consequent. Recall that, in \gnfo, as well as \gfo, constant symbols are
allowed to occur freely in formulas, and that their occurrence is not
governed by guardedness conditions. In particular, for example, the
formula $\forall y R(c,y)$ belongs to \gfo (and is equivalent to 
a formula of  \gnfo), while the
formula $\forall y R(x,y)$ does not. Now, consider the valid
entailment $(x=c)\land \forall y R(c,y) ~~\models~~ (x=d)\to\forall y
R(d,y)$. It is not hard to show that any interpolant $\phi(x)$ not
containing the constants $c$ and $d$ must be equivalent to
$\forall y R(x,y)$. This shows that there are valid \gfo-implications
for which interpolants cannot be found in \gnfo, if the interpolants
are required to contain only constant symbols occurring both in the
antecedent and the consequent. In fact, in \cite{tencate:JSL05} it was
shown that, in a precise sense, every extension of \gfo with
this strong form of interpolation has full first-order expressive
power and is undecidable for satisfiability.

\subsection{Proof of Craig interpolation for \gnfo} 
\label{sec:proofcraig}

To establish Theorem~\ref{thrm:Craig_with_constants} we follow 
a common approach in modal logic (see, in particular, 
Hoogland, Marx, and Otto~\cite{HMO}). We make use of a result saying 
that we can take two structures over different signatures, behaving similarly 
in the common signature, and \emph{amalgamate} them to get a structure 
that is simultaneously similar to both of them (in the respective signatures).
The precise statement of the theorem will be in terms of the notion of 
strong GN-bisimulation introduced in Section~\ref{sec:preserve}, 
and the proof will make use of the results there. 
Our specific amalgamation construction is inspired by the \emph{zig-zag products} 
introduced by Marx and Venema~\cite{MarxVenema}.
In the lemma and claims below, $\textbf{a}$
will range over tuples, not necessarily guarded.

\begin{lemma}[Amalgamation]\label{lem:amalgamation} ~\\
Let $\sigma$ and $\tau$ be signatures containing the same constant
symbols but possibly different relation symbols.
If $(\mathfrak{A},\textbf{a})\to^s_{GN[\sigma\cap\tau]}
(\mathfrak{B},\textbf{b})$, then there is a structure
$(\mathfrak{U},\textbf{u})$ such that
$(\mathfrak{A},\textbf{a})\to^s_{GN[\sigma]}
(\mathfrak{U},\textbf{u})\to^s_{GN[\tau]} (\mathfrak{B},\textbf{b})$
\end{lemma}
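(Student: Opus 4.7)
The plan is to construct $\mathfrak{U}$ as a tree-style zig-zag amalgam, inspired by the zig-zag products of Marx and Venema. The key leverage point is that a strong GN-bisimulation provides, at every pair $(\mathbf{a}', \mathbf{b}') \in Z$, a pair of \emph{global} (not merely local) witnessing homomorphisms $h_{\mathbf{a}'\mathbf{b}'} \colon \mathfrak{A} \to \mathfrak{B}$ and $g_{\mathbf{a}'\mathbf{b}'} \colon \mathfrak{B} \to \mathfrak{A}$ in signature $\sigma \cap \tau$, sending $\mathbf{a}'$ to $\mathbf{b}'$ and vice versa, each compatible with $Z$. These global homomorphisms are precisely what is needed to consistently glue copies of $\mathfrak{A}$ and $\mathfrak{B}$ together.

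Concretely, I would build $\mathfrak{U}$ by iterated attachment. Start with a fresh base copy $\mathfrak{A}_0$ of $\mathfrak{A}$, and set $\mathbf{u} := \mathbf{a}$. For each guarded tuple $\mathbf{a}'$ lying inside an $\mathfrak{A}$-piece of the current structure, attach a fresh isomorphic copy of $\mathfrak{B}$, identifying the image $h(\mathbf{a}')$ inside this copy with $\mathbf{a}'$ itself; symmetrically, for each guarded $\tau$-tuple $\mathbf{b}''$ that appears in an attached $\mathfrak{B}$-piece and is not already on a seam, attach a fresh copy of $\mathfrak{A}$ glued by identifying $g(\mathbf{b}'')$ with $\mathbf{b}''$. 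Iterate. Relations in $\sigma \setminus \tau$ are interpreted via the $\mathfrak{A}$-pieces, those in $\tau \setminus \sigma$ via the $\mathfrak{B}$-pieces, and those in $\sigma \cap \tau$ via either piece; consistency on seams holds because $(\mathbf{a}', \mathbf{b}') \in Z$ forces $\mathbf{a}'$ and $\mathbf{b}'$ to satisfy exactly the same $\sigma \cap \tau$-atomic formulas (the $Z$-witnessing homomorphisms go both ways on guarded tuples).

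The required witnesses then fall out of the construction. The $\sigma$-homomorphism $\mathfrak{A} \to \mathfrak{U}$ is the inclusion of the base copy, sending $\mathbf{a}$ to $\mathbf{u}$. The $\tau$-homomorphism $\mathfrak{U} \to \mathfrak{B}$ is defined piecewise along the tree: $\mathfrak{A}_0$ maps via $h$, each attached $\mathfrak{B}$-piece maps identically, each deeper $\mathfrak{A}$-piece maps via the $h$ at its gluing pair, and so on, with the seams consistent by the choice of $h$ and $g$. The strong GN-bisimulations witnessing $(\mathfrak{A},\mathbf{a}) \to^s_{GN[\sigma]} (\mathfrak{U},\mathbf{u})$ and $(\mathfrak{U},\mathbf{u}) \to^s_{GN[\tau]} (\mathfrak{B},\mathbf{b})$ consist of pairs of guarded tuples lying in corresponding pieces, with the back-and-forth homomorphisms obtained by taking one more zig-zag step into the tree.

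The main obstacle, and the point where the \emph{strength} of the bisim (as opposed to locality) is critical, is verifying that every guarded tuple of $\mathfrak{U}$ is localized within a single piece of the construction (up to a shared seam). This is what allows the zig-zag structure to provide back-and-forth homomorphisms on demand, and it depends on arranging the attachments so that no fact of $\mathfrak{U}$ ever spans two distinct pieces except along a common seam. A secondary bookkeeping issue is that common constant symbols must be interpreted uniformly across all pieces; since the witnessing homomorphisms preserve constants, these are safely identified at the outset and remain fixed throughout.
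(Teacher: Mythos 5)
Your proposal is correct in outline, but it takes a genuinely different route from the paper. The paper builds $\mathfrak{U}$ in one shot as a Marx--Venema-style \emph{product}: its domain is the set of pairs $(c,d)\in\mathfrak{A}\times\mathfrak{B}$ such that $c\mapsto d$ extends to a $Z$-compatible homomorphism, a $\sigma$-relation holds of a tuple of pairs iff its first projection holds in $\mathfrak{A}$ and the first-to-second projection of the tuple is in $Z$, and dually for $\tau$; the two required homomorphisms are then just $a\mapsto(a,h(a))$ and the coordinate projections, so the verification is almost immediate. You instead build $\mathfrak{U}$ as an iterated \emph{tree amalgam} of fresh copies of $\mathfrak{A}$ and $\mathfrak{B}$ glued along $Z$-related guarded tuples. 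This works --- the global witnessing homomorphisms supplied by the strong bisimulation are exactly what makes the piecewise-defined projections $\mathfrak{U}\to\mathfrak{A}$ and $\mathfrak{U}\to\mathfrak{B}$ consistent on seams, and since no new facts are created across pieces, every guarded tuple of $\mathfrak{U}$ is localized to a single piece (modulo constants), which is the point you correctly identify as critical. The trade-off: the product construction is more economical and its correctness proof is a few lines, while your tree construction demands real bookkeeping (choosing, at each attachment, the seam tuple $\mathbf{b}'=\pi_B(\mathbf{a}')$ determined by the parent piece's projection, so that the recursively defined projections agree; and restricting attachments to $\sigma\cap\tau$-guarded tuples, since only those lie in the domain of $Z$). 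In exchange, your $\mathfrak{U}$ contains an isomorphic copy of $\mathfrak{A}$ as its base and has an explicit tree-of-pieces decomposition, structural features the product lacks; neither is needed for the lemma, but they could be useful if one wanted a tree-like amalgam. One small imprecision to fix if you write this up: say explicitly that ``guarded'' in the attachment step means guarded by a $\sigma\cap\tau$-atom, and that the same-atomic-type claim on seams also forces paired tuples to have the same equality type, which is what makes the seam identification a well-defined bijection.
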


\begin{proof}
     Let $Z$ be the strong GN-bisimulation between $\mathfrak{A}$ and
     $\mathfrak{B}$ witnessing the fact that
     $(\mathfrak{A},\textbf{a})\to^s_{GN[\sigma\cap\tau]}
     (\mathfrak{B},\textbf{b})$.  Below, for any
     partial 
  map $f$ from $\mathfrak{A}$ to $\mathfrak{B}$ or vice versa, with a
  slight abuse of notation, we will
  write $f\in Z$ if $f$ can be extended to a homomorphism that is
  compatible  with $Z$. 
  In particular, we have $(\textbf{a} \mapsto \textbf{b})\in Z$. 
    Note that, for individual elements $c$ and $d$,  $(c\mapsto d)\in Z$ if and only if
   $(d\mapsto c)\in Z$.
   In addition, with some further abuse of notation, for any $k$-tuple
   $\textbf{c}=c_1\ldots c_k$ of elements of $\mathfrak{A}$ 
   and  for any $k$-tuple $\textbf{d}=d_1\ldots d_k$ of elements of
   $\mathfrak{B}$,
  we will denote by $\langle\textbf{c},\textbf{d}\rangle$ the $k$-tuple
   $((c_1,d_1), \ldots, (c_k,d_k))$. 

  We define the amalgam $(\mathfrak{U},\textbf{u})$ as follows: 
  \begin{itemize}[--]
    \item the domain of $\mathfrak{U}$ is 
         $\{ (c,d)\in \mathfrak{A}\times \mathfrak{B} \mid (c\mapsto d) \in Z \}$;
    \item
     $R^\mathfrak{U} = \{ \langle \textbf{c},\textbf{d}\rangle \mid
         \textbf{c}\in R^\mathfrak{A} \text{ and } (\textbf{c}\mapsto
         \textbf{d})\in Z\}$ for every $R\in\sigma$;
    \item
     $S^\mathfrak{U} = \{ \langle \textbf{c},\textbf{d}\rangle\mid
         \textbf{d}\in S^\mathfrak{B} \text{ and } (\textbf{d}\mapsto
         \textbf{c})\in Z\}$ for every $S\in\tau$;
    \item
      $c^\mathfrak{U} = (c^\mathfrak{A},c^\mathfrak{B})$ for every constant symbol $c$;
    \item 
      $\textbf{u} = \langle\textbf{a},\textbf{b}\rangle$.
  \end{itemize}
  To see that $\mathfrak{U}$ is thus well defined, note that for $R\in\sigma\cap\tau$, 
  if $\textbf{c}\in R^\mathfrak{A}$ and $(\textbf{c}\mapsto \textbf{d})\in Z$ 
  then also $\textbf{d}\in R^\mathfrak{B}$ and $(\textbf{d}\mapsto \textbf{c})\in Z$, 
  and vice versa. 

   \begin{trivlist}
   \item \textbf{Claim 1: $(\mathfrak{A},\textbf{a})\to^s_{GN[\sigma]} (\mathfrak{U},\textbf{u})$}
   \item \textit{Proof of claim 1.} 
   Let $Z'$ be the collection of all pairs 
     $(\textbf{v},\langle\textbf{v},\textbf{w}\rangle)$ 
   for $(\textbf{v}\mapsto\textbf{w})\in Z$ and $\textbf{v}$ guarded
   (by a $\sigma$-atomic formula) in $\mathfrak{A}$. 
   We will show that $Z'$ is a strong GN-bisimulation 
   between $\mathfrak{A}$ and $\mathfrak{U}$, 
   and that $(\mathbf{a}\mapsto\mathbf{u})\in Z'$.

   Consider any pair $(\textbf{v},\langle\textbf{v},\textbf{w}\rangle)\in Z'$. 
   By construction, we have that $(\textbf{v},\textbf{w})\in Z$ and hence, 
   there is a homomorphism $h:\mathfrak{A}\to\mathfrak{B}$ 
   that is compatible with $Z$, and such that $h(\textbf{v})=\textbf{w}$.  
   Let $\widehat{h}(a)=(a,h(a))$ for all $a \in \frakA$. 
   It can easily be verified that $\widehat{h}$ is a homomorphism 
   from $\mathfrak{A}$ to $\mathfrak{U}$ that is compatible with $Z'$, 
   and that $\widehat{h}(\textbf{v})=\langle \textbf{v},\textbf{w}\rangle$. 
   Conversely, we also need to show that there is a homomorphism
   from $\mathfrak{U}$ to $\mathfrak{A}$ that is compatible with $Z'$
   and that maps $\langle \textbf{v},\textbf{w}\rangle$ to $\textbf{v}$. 
   Here, we can simply choose the natural projection as our homomorphism. 
   It is easy to verify that this satisfies the requirements.

   Finally, we need to show that $(\mathbf{a}\mapsto\mathbf{u})\in Z'$, 
   i.e., that there is a homomorphism from $\mathfrak{A}$ to $\mathfrak{B}$ 
   that is compatible with $Z'$ and that sends $\mathbf{a}$ to $\mathbf{u}$. 
   Recall that $\mathbf{u}=\langle\mathbf{a},\mathbf{b}\rangle$. 
   Let $h$ be a homomorphism from $\mathfrak{A}$ to $\mathfrak{B}$
   that is compatible with $Z$ and that sends $\mathbf{a}$ to $\mathbf{b}$, 
   and let $\widehat{h}$ be defined by $\widehat{h}(a)=(a,h(a))$ for all $a \in \frakA$. 
   It is easy to verify that $\widehat{h}$ satisfies the requirements.
   \hfill $\dashv$

   \item \textbf{Claim 2: 
         $(\mathfrak{U},\textbf{u})\to^s_{GN[\tau]} (\mathfrak{B},\textbf{b})$ }
   \item \textit{Proof of claim 2.} 
       the relevant strong GN-bisimulation $Z''$ is constructed analogously  to $Z'$ above.
       Note that, in this case, we do not get that 
       $(\textbf{b}\mapsto\textbf{u})\in Z''$ but we get that
       $(\textbf{u}\mapsto \textbf{b})\in Z''$ 
       because this partial map is included in the natural projection 
       from $\mathfrak{U}$ to $\mathfrak{B}$, which is compatible with $Z''$. 
       \hfill $\dashv$
  \end{trivlist}
\end{proof}


\begin{proof}[Proof of Theorem~\ref{thrm:Craig_with_constants}]
As mentioned
earlier, without loss of generality we can assume that $\phi$ and $\psi$ have the same 
free variables. We can also assume
 they  reference the same set of constant symbols (eg.~by appending 
vacuous identities  $c_j=c_j$ as conjuncts to either formula as needed).
With this proviso let $\phi(\textbf{x})$ and $\psi(\textbf{x})$ be $\gnf$-formulas 
with free variables $\textbf{x}$ such that 
  $\models\forall\textbf{x}(\phi(\textbf{x})\to\psi(\textbf{x}))$; 
let $\sigma$ and $\tau$ denote their respective signatures and suppose, 
for the sake of contradiction, that there is no $\gnf[\sigma\cap\tau]$-interpolant. 

As a first step, using a standard compactness argument, we establish the existence 
of two structures $(\mathfrak{A},\textbf{a})$ and $(\mathfrak{B},\textbf{b})$ 
such that $\mathfrak{A}\models\phi(\textbf{a})$, $\mathfrak{B}\models\neg\psi(\textbf{b})$, 
and $(\mathfrak{A},\textbf{a})\Rrightarrow_{GN[\sigma\cap\tau]} (\mathfrak{B},\textbf{b})$. 

We now argue for this first step. Let $\Phi(\textbf{x})$ be the set of 
all $\gnf[\sigma\cap\tau]$ consequences of $\phi(\textbf{x})$ using
only free variables in $\textbf{x}$.
By the assumption that there is no interpolant
and  compactness, we know that $\Phi(\textbf{x})$ cannot imply $\psi(\textbf{x})$.
Therefore, there is a structure $\mathfrak{B}\models\Phi(\textbf{b})\land\neg\psi(\textbf{b})$. 
Next, consider
$$
  \Psi(\textbf{x}) = \{ \neg \eta(\textbf{x}) \mid 
      \eta(\textbf{x}) \in \gnf[\sigma\cap\tau],\ \mathfrak{B} \models \neg\eta(\textbf{b}) \}
$$
and notice that $\Psi(\textbf{x})$ does not imply $\neg\phi(\textbf{x})$. 
For otherwise there would be, due to compactness, some natural number $k$ 
and $\neg\eta_0(\textbf{x}), \ldots, \neg\eta_{k-1}(\textbf{x}) \in \Psi(\textbf{x})$
such that $\bigwedge_{j<k} \neg\eta_j(\textbf{x}) \models \neg\phi(\textbf{x})$ 
ie.~$\phi(\textbf{x}) \models \bigvee_{j<k} \eta_j(\textbf{x})$ 
and thus $\bigvee_{j<k} \eta_j(\textbf{x}) \in \Phi(\textbf{x})$, 
because $\bigvee_{j<k} \eta_j(\textbf{x}) \in \gnf[\sigma\cap\tau]$, 
implying $\mathfrak{B} \models \bigvee_{j<k} \eta_j(\textbf{b})$ 
in contradiction to the fact that $\eta_j(\textbf{x}) \in \Psi(\textbf{x})$ 
and hence $\mathfrak{B} \models \neg\eta_j(\textbf{b})$ for each $j<k$.
Therefore, there is a structure $\mathfrak{A}\models\Psi(\textbf{a})\land\phi(\textbf{a})$. 
By construction, we have that 
$(\mathfrak{A},\textbf{a})\Rrightarrow_{GN[\sigma\cap\tau]}(\mathfrak{B},\textbf{b})$.

Note that in the above step we can ensure that both $\frakA$ and $\frakB$ are countable. 
Thus, using Lemma~\ref{lem:lifting}, we can lift the $\Rrightarrow_{GN[\sigma\cap\tau]}$ 
relationship between $(\mathfrak{A},\textbf{a})$ and $(\mathfrak{B},\textbf{b})$ 
to a $\to^s_{GN[\sigma\cap\tau]}$ relationship between respective elementary extensions
$(\widehat{\mathfrak{A}},\textbf{a})$ and $(\widehat{\mathfrak{B}},\textbf{b})$. 
Applying the Amalgamation Lemma~\ref{lem:amalgamation} to these extensions 
we obtain $(\mathfrak{U},\textbf{u})$ such that 
$
  (\widehat{\mathfrak{A}},\textbf{a}) \to^s_{GN[\sigma]}
  (\mathfrak{U},\textbf{u}) \to^s_{GN[\tau]}  
  (\widehat{\mathfrak{B}},\textbf{b}) \, .
$
Observe that $\mathfrak{U}\models\phi(\textbf{u})$ follows 
from $\widehat{\mathfrak{A}} \models \phi(\textbf{a})$ 
and $(\widehat{\mathfrak{A}},\textbf{a}) \to^s_{GN[\sigma]}$.
Similarly, we can infer $\mathfrak{U} \models \neg\psi(\textbf{u})$
for otherwise $(\mathfrak{U},\textbf{u}) \to^s_{GN[\tau]} (\widehat{\mathfrak{B}},\textbf{b})$
would allow us to conclude $\widehat{\mathfrak{B}} \models \psi(\textbf{b})$
contradicting our choice of $(\widehat{\mathfrak{B}},\textbf{b})$.
Thus we have found $\mathfrak{U}\models\phi(\textbf{u})\land\neg\psi(\textbf{u})$
contradicting the assumption that $\phi(\textbf{x})$ implies $\psi(\textbf{x})$.
\end{proof}



\subsection{Applications of Interpolation}
\label{sec:applicationsCraig}

An analogue of the Projective Beth Definability theorem~\cite{beth}
for $\gnf$ follows from Craig interpolation by standard arguments \cite{hooglandthesis}.

\begin{corollary} \label{cor:Beth_with_constants}
If a \gnfo sentence $\phi$ in signature $\sigma$ implicitly defines
a relation symbol $R$ in terms of a signature $\tau \subset \sigma$,
and $\tau$ includes all constants from $\sigma$, then there is
an explicit definition of $R$ in terms of $\tau$ relative to $\phi$.
\end{corollary}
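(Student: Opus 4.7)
The plan is to derive Beth definability from Craig interpolation in the classical manner, taking care that every formula we hand to Theorem~\ref{thrm:Craig_with_constants} is actually in $\gnf$. Let $R \in \sigma\setminus\tau$ be the implicitly defined relation, of arity $n$. Introduce a fresh $n$-ary symbol $R'$ and, for each other relation symbol $S \in \sigma\setminus\tau$, a fresh copy $S'$; let $\phi'$ be the $\gnf$ sentence obtained from $\phi$ by simultaneously renaming every symbol of $\sigma\setminus\tau$ to its primed copy. Then implicit definability of $R$ over $\tau$ is exactly the assertion
$$
  \phi \wedge \phi' \ \models\ \forall \vec{x}\,\bigl(R(\vec{x})\leftrightarrow R'(\vec{x})\bigr),
$$
and, equivalently, with $\vec{x}$ as free variables,
$$
  \phi \wedge R(\vec{x}) \ \models\ \neg\phi' \vee R'(\vec{x}).
$$

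The key step will be to apply Theorem~\ref{thrm:Craig_with_constants} to this entailment. The left-hand side is in $\gnf[\sigma]$ with the atom $R(\vec{x})$ as an extra conjunct. The right-hand side is in $\gnf$ over the primed signature $\tau \cup \{R', \dots\}$: the unguarded negation $\neg\phi'$ is admissible because $\phi'$ is a sentence, so it can be guarded by a vacuous equality (or existentially closed). The relation symbols shared by the two sides are exactly those of $\tau$, the common free variables are $\vec{x}$, and since $\tau$ and $\sigma$ share the same constant symbols by hypothesis, no stray constants can slip into the interpolant that are not already available on both sides. The theorem then furnishes an interpolant $\chi(\vec{x}) \in \gnf[\tau]$ with free variables among $\vec{x}$.

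It will then remain to read off the explicit definition. From $\phi \wedge R(\vec{x}) \models \chi(\vec{x})$ one obtains $\phi \models R(\vec{x}) \to \chi(\vec{x})$. From $\chi(\vec{x}) \models \neg\phi' \vee R'(\vec{x})$ one obtains $\phi' \wedge \chi(\vec{x}) \models R'(\vec{x})$; since $\chi$ mentions no symbol of $\sigma\setminus\tau$, undoing the priming is harmless and yields $\phi \wedge \chi(\vec{x}) \models R(\vec{x})$, hence $\phi \models \chi(\vec{x}) \to R(\vec{x})$. Combining the two directions gives $\phi \models \forall\vec{x}\,\bigl(R(\vec{x})\leftrightarrow\chi(\vec{x})\bigr)$, so $\chi$ is the desired explicit definition of $R$ over $\tau$ relative to $\phi$. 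The only non-routine point in the whole argument is the syntactic check that both sides of the entailment fed to interpolation really are in $\gnf$, which hinges on $\phi'$ being a sentence so that $\neg\phi'$ is legal; everything else is standard bookkeeping.
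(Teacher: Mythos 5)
Your argument is correct and is precisely the ``standard argument'' the paper invokes (it derives the corollary from Theorem~\ref{thrm:Craig_with_constants} with only a citation to \cite{hooglandthesis}): rename the symbols of $\sigma\setminus\tau$, interpolate the entailment $\phi\wedge R(\vec{x})\models\neg\phi'\vee R'(\vec{x})$, and read off the explicit definition. Your attention to the syntactic point that $\neg\phi'$ is legal in \gnfo because $\phi'$ is a sentence, and that the hypothesis on constants keeps the interpolant inside $\tau$, is exactly the bookkeeping the paper leaves implicit.
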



We now investigate properties pertaining to ``view-based query rewriting''
for $\gnf$.
Suppose $V$ is a finite set of relation names, 
and we have FO formulas $\{\phi_v:v \in V\}$ over a signature $\sigma$ 
that is disjoint from $V$.
Suppose $\phi_Q$ is another  first-order formula over the signature $\sigma$. 
The family of formulas $\{ \phi_v:v \in V \}$ \emph{determine $\phi_Q$ over finite structures} 
if for all finite $\sigma$-structures $I$ and $I'$ with $\phi_v(I)=\phi_v(I')$ for all $v \in V$, 
we have $\phi_Q(I)=\phi_Q(I')$. Similarly, we say that the set $\{\phi_v:v \in V\}$  determine $\phi_Q$ 
over all structures if the above holds for all $I$ and $I'$. Unwinding the definitions, the
reader can see that the latter assertion is the same
as stating that the sentences asserting
\[
\forall \vec x ~ \phi_v(\vec x) \leftrightarrow v(\vec x)
\]
for each $v \in V$
as well as
\[
\forall \vec x ~ \phi_Q(\vec x) \leftrightarrow Q(\vec x)
\]
implicitly define the relation $Q$ over the signature $V$.
In the database literature, the symbols $v \in V$ are often referred to
as ``view relations'' and the corresponding formula $\phi_v$ is the ``view definition for $v$''.

From the  PBDP we know that when $\{\phi_v:v \in V\}$  determine $\phi_Q$
over all structures, there is a first-order formula $\rho$ over $V$
that explicitly defines $Q$.
Such a $\rho$ is called
a \emph{rewriting of  $\phi_Q$  over $\{\phi_v:v \in V\}$}.
Segoufin and Vianu initiated a study of determinacy for special classes
of formulas $\phi_v$ and $\phi_Q$,
including the question of deciding when determinacy and determinacy-over-finite-structures holds, and
examining
when the assumption of determinacy implies 
that the rewriting is realized  by a formula in a restricted logic.
Nash, Segoufin, and Vianu showed that determinacy over finite structures for unions of conjunctive queries 
is undecidable~\cite{NSV10TDBS}, and that for UCQs determinacy over finite structures 
does not imply rewritability even in first-order logic. More recently determinacy
for conjunctive queries has been shown undecidable both over finite
structures and over all structures \cite{redspider,rainworm}. The fact that  determinacy of FO queries does not imply FO rewritability
over finite structures is related to the fact that CIP, PBDP, and BDP all fail
for FO when implication is considered over finite structures.

We will use the PBDP above to show  that whenever 
$\{\phi_v: v \in V\}$ determines  $\phi_Q$ 
and additionally both  $\{\phi_v: v \in V\}$ and $\phi_Q$ 
are answer-guarded $\gnfo$ formulas,
then there is  a first-order rewriting, and even a rewriting  in GNFO.
 Recall from Section \ref{sec:prelim}
that answer-guarded formulas are those of the form
$\phi(\vec x)=R(\vec x) \wedge \phi'$ for some $\phi'$ and relation symbol $R$. 

 Note that rewritings of determined queries, when they  exist, 
can always  be taken to be domain-independent queries, since $\phi_Q(I)$ 
is,  by definition of  determinacy, only dependent on $\phi_v(I)$ for $v \in V$.
Observe also that if we have 
then determinacy of  formula
$\phi_Q$ by a family
of formulas $\{\phi_v: v \in V\}$ can be expressed as validity of a  sentence 
with a vocabulary suitable for talking about two structures of the original
signature. The sentence is:
\begin{align*}
[\bigwedge_{v \in V} \forall \vec x ~ (\phi_v(\vec x) \leftrightarrow 
\phi'_v(\vec x) )] 
\wedge \phi_Q(\vec c) \\
\rightarrow  \phi'_Q(\vec c)
\end{align*}
where $\vec c$ is a set of fresh constants, $\phi'_v$
is formed from $\phi_v$ by replacing each relation
$R$ by a copy $R'$, and $\phi'_Q$ is similarly formed from
$\phi_Q$.
If $\phi_Q$ is in $\gnfo$ and each $\phi_v$ is an
answer-guarded $\gnfo$ formula, then this sentence is in $\gnfo$.
Thus from the finite model property of $\gnfo$,
when $\phi_Q$ is in $\gnfo$ and each $\phi_v$ is an
answer-guarded $\gnfo$ formula, determinacy over finite
structures implies determinacy   over all structures.
Similarly,  Theorem \ref{thm:gnfsat} implies that
``$\{\phi_v: v \in V\}$ determine $\phi_Q$'' can be
decided in $\twoexptime$, when the $\phi_v$ range over
answer-guarded $\gnf$ formulas and $\phi_Q$ ranges
 over 
$\gnf$ formulas.

We can now state the consequence of the PBDP for
determinacy-and-rewriting (relying again on the finite model
property of \gnf).

\begin{corollary} \label{cor:detrew} Suppose a set of answer-guarded $\gnf$ queries $\{\phi_v: v \in V\}$ 
determines
an answer-guarded $\gnf$ query $\phi_Q$ over finite structures. 
Then there is a $\gnf$ query $\rho$ that is a rewriting.
Furthermore, there is an algorithm that, given
$\phi_v$'s and $\phi_Q$ satisfying the hypothesis, effectively
finds such a formula~$\rho$.
\end{corollary}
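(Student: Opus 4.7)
The plan is to derive Corollary~\ref{cor:detrew} from the Projective Beth Definability theorem for \gnfo (Corollary~\ref{cor:Beth_with_constants}) together with the finite model property. First I would recall, as already observed in the paragraph preceding the corollary statement, that since $\phi_Q$ and each $\phi_v$ is answer-guarded, the determinacy hypothesis can be encoded as the validity of a single \gnf sentence over a suitable extended signature. By Theorem~\ref{thm:gnfsat} (finite model property and decidability), validity over finite structures coincides with validity over arbitrary structures, so determinacy over finite structures entails determinacy over all structures.

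Next, introduce a fresh relation symbol $Q$ of the same arity as $\phi_Q$ and let
\[
  \Phi \;=\; \bigwedge_{v\in V}\forall \vec x\,\bigl(\phi_v(\vec x)\leftrightarrow v(\vec x)\bigr)\;\land\;\forall \vec x\,\bigl(\phi_Q(\vec x)\leftrightarrow Q(\vec x)\bigr).
\]
Because every formula appearing on either side of a biconditional is answer-guarded, each conjunct, and hence $\Phi$ itself, can be rewritten in \gnf. Determinacy over all structures is precisely the statement that $\Phi$ implicitly defines $Q$ in terms of the subsignature $\tau$ consisting of the view symbols $V$ together with all constant symbols of $\sigma$. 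Applying Corollary~\ref{cor:Beth_with_constants}, I obtain an explicit \gnf definition $\rho(\vec x)$ of $Q$ over $\tau$. Unwinding definitions, $\rho$ is the desired rewriting: on any finite $\sigma$-instance $I$, expanding $I$ by $v^{I}=\phi_v(I)$ and $Q^{I}=\phi_Q(I)$ yields a model of $\Phi$, which forces $\rho$ evaluated on the view extents to agree with $\phi_Q(I)$.

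For the effective content, the hypothesis of determinacy is itself decidable, since it reduces to a \gnf validity check via Theorem~\ref{thm:gnfsat}. Once existence of a rewriting is guaranteed, one finds it by an enumerate-and-test procedure: enumerate \gnf formulas $\rho'(\vec x)$ over $\tau$ in order of increasing size and, for each, test whether $\Phi\models\forall\vec x\,(Q(\vec x)\leftrightarrow\rho'(\vec x))$, which is again a \gnf validity check (both sides answer-guarded). The search terminates because some rewriting exists by the preceding argument.

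The main subtlety I foresee is ensuring that the ``bridging'' formula $\Phi$ really lies in \gnf. This is where answer-guardedness is essential, since equivalence of two arbitrary \gnf formulas is not in general a \gnf statement, whereas it is when both sides are answer-guarded. A minor additional hygiene point is that $\tau$ must include the constant symbols of $\sigma$, which matches the precise hypothesis of Corollary~\ref{cor:Beth_with_constants}; no further work is needed on this front.
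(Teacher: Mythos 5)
Your proposal is correct and follows essentially the same route as the paper: encode determinacy as implicit definability of $Q$ via the bridging sentence $\Phi$ (which is in \gnf thanks to answer-guardedness), transfer from finite to arbitrary structures by the finite model property, apply the Projective Beth Definability Property, and find the rewriting by enumerate-and-test using decidability of \gnf. The only cosmetic difference is the precise validity check in the search step (you test $\Phi\models\forall\vec x\,(Q(\vec x)\leftrightarrow\rho'(\vec x))$ while the paper substitutes the view definitions into $\rho'$ and checks equivalence with $\phi_Q$), and both are sound.
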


\begin{proof}
Extend the vocabulary with predicates $v$ for each $\phi_v$ and a predicate $Q$
for $\phi_Q$. Now consider a sentence stating that each $v$ contains  exactly the tuples
satisfying $\phi_v$ and that $Q$ contains exactly the tuples satisfying
$\phi_Q$. The hypotheses imply that this sentence is in $\gnf$, and that it implicitly
defines $Q$ with respect to the signature containing only the symbols in $V$, when
restricting to finite structures. Using the finite model property
of $\gnf$, we see that implicit definability
hold over all structures.
Applying the PBDP for $\gnf$, we get an explicit definition of $Q$ in $\gnf$.
By unwinding the definitions we see that this is a rewriting.

The rewriting can be found effectively by simply enumerating every possible $\rho$ and checking
whether $\phi_Q$ is logically equivalent to $\rho(V_1/\phi_1 \ldots V_n/\phi_n))$; 
the check is effective using the decidability of equivalence for $\gnf$ \cite{BtCS15jacm}.
\end{proof}

Work subsequent to this article has obtained tight bounds on the rewritings
\cite{csllics14}, via a constructive approach to $\gnfo$ interpolation. 

\medskip

Recall from our discussion above that rewritings are domain-independent, 
since they depend only on the facts produced by the view definitions.
Thus, as discussed in Section~\ref{sec:prelim}, they can be converted to $\gnra$.
Note also that  $\gnf$ views $V$ can check properties of a structure
(e.g.~linear TGDs) 
as well as return results. Using the above, we can get the following 
variant of Corollary~\ref{cor:detrew} for sentences and queries:

Suppose a set of  answer-guarded UCQ views $\{\phi_v: v \in V\}$ determine
an answer-guarded UCQ $\phi_Q$ on finite structures satisfying a set of $\gnf$
sentences $\Sigma$.
Then there is a  $\gnf$ rewriting of $Q$ using $V$ that is valid
over structures satisfying $\Sigma$.

\myeat{
We now consider relativizing our results to fragments of $\gnfo$, focusing
on restricting the number of variables.
Informally, our notion of $k$ variables means that we can only consider
existential blocks with at most $k$ variables. 
Following \cite{BtCS15jacm}, we restrict to  the ``bounded width'' fragment, which is defined
via a normal form. We do not know how to analyze general $\gnf$ formulas with a fixed
number of variables, without imposing this  restriction.
The 
main result in this section is a negative one -- none of these restricted fragments
have the Projective Beth Definability Property. This will imply some new negative results
for the Guarded Fragment as well. 

A \gnfo formula is in \emph{\gnnf} if in its syntax tree,
no disjunction is directly below an existential quantifier or a conjunction,
and no existential quantifier is directly below a conjunction sign.
Every \gnfo formula can be brought into \gnnf, at the cost of an exponential
increase in length and linear increase in the number of variables,
using the following equivalences as rewrite rules (where $x'$ is a variable
not occurring in $\psi$):
\[
  \begin{array}{rcl}
  (\exists x\phi)\land\psi &\simeq& \exists x'(\phi[x'/x]\land\psi) ~~~ \\
  \phi\land(\psi\lor\chi)  &\simeq& (\phi\land\psi)\lor(\phi\land\chi) ~~~  \\
  \exists z(\phi\lor\psi)  &\simeq& \exists z\phi \lor \exists z\psi
\end{array}\]

The \emph{width} of a \gnfo formula $\phi$ is defined as the number of variables
occurring (free or bound) in any \gnnf of $\phi$. 
We let $\gnfo^k$ denote the set of $\gnfo$ formulas of width $k$.
Note that since $\gnfo^k$ formulas have at most $k$ variables, they can be evaluated
in polynomial data complexity (see, e.g. \cite{moshefinvar}).
An advantage of $\gnfo^k$ is that it behaves well under
a restricted version of strong bisimulation, the  GN-bisimulation game of width $k$.
More details on this game can be found in \cite{BtCS15jacm}.
}

\subsection{Negative results for the Guarded Fragment and packed fragments}
\label{sec:gf-interpolation}
We  now prove that PBDP
 fails for the Guarded Fragment.
This suggests, intuitively, that if we want to express explicit definitions
even for $\gf$ implicitly-definable relations, we will need to use 
all of $\gnfo$.

\begin{theorem} \label{thm:failpbdpgnfok} 
The PBDP fails for $\gf$.
\end{theorem}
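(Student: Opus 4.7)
The plan is to exhibit a counterexample: a $\gf$ sentence $\phi$ over a signature $\sigma \cup \{G\}$ that implicitly defines the new symbol $G$, yet for which no $\gf[\sigma]$ formula explicitly defines $G$ modulo $\phi$. Since $\gnf$ has the Projective Beth Definability Property by Corollary~\ref{cor:Beth_with_constants}, an explicit $\gnf[\sigma]$ definition must exist, and the task reduces to showing it cannot be rewritten within $\gf[\sigma]$.

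First, I would search for a $\sigma$-property $\psi(\vec{x})$ that belongs to $\gnf[\sigma] \setminus \gf[\sigma]$. A natural candidate exploits the expressive gap between the two logics: a formula whose definition essentially requires an unguarded negation, such as a statement of the form ``$x$ has no $R$-successor of a certain type,'' which sits in $\gnf$ but typically not in $\gf[\sigma]$ because the implicit universal quantification over successors would need to be guarded by an atom containing all relevant variables. Non-$\gf[\sigma]$-definability of such a $\psi$ can be certified by exhibiting two $\sigma$-structures $(\mathfrak{A}_1, \vec{a}_1)$ and $(\mathfrak{A}_2, \vec{a}_2)$ that are guarded-bisimilar (in the sense of Andr\'eka--van Benthem--N\'emeti, which characterizes $\gf$) but disagree on $\psi$.

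Second, I would design $\phi \in \gf[\sigma \cup \{G\}]$ so that in every model, $G$ is forced to equal $\psi$. The crucial idea is to let $G$ serve as its own guard in the defining axioms, so that clauses of the form $\forall\vec{x}\,(G(\vec{x}) \to \ldots)$ and $\forall\vec{x}\,(\alpha(\vec{x}) \to G(\vec{x}))$, each of which is guarded either by $G$ or by a $\sigma$-atom $\alpha$, can together pin down the interpretation of $G$ without appealing to the unguarded quantifications that the explicit definition would need. Choosing $G$ of higher arity than anything in $\sigma$ is a standard way to create this asymmetry, since $G$-atoms then enable guarded quantification over tuple-patterns that no $\sigma$-atom could cover. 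Verifying that $\phi$ really does implicitly define $G = \psi$ is then a direct unpacking of the axioms.

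Third, the non-rewritability in $\gf[\sigma]$ is immediate from the first step: any putative $\gf[\sigma]$ explicit definition $\rho$ would be invariant under guarded bisimulation, contradicting the fact that $\psi$ distinguishes $(\mathfrak{A}_1, \vec{a}_1)$ from $(\mathfrak{A}_2, \vec{a}_2)$. The principal obstacle is the combined calibration of $\psi$ and $\phi$: $\phi$ must genuinely lie in $\gf$ while implicitly forcing $G$ to interpret a property $\psi$ that genuinely lies outside $\gf[\sigma]$. Once a suitable high-arity choice of $G$ and a well-chosen $\psi$ are in hand, the remainder of the argument is a straightforward application of the $\gf$ bisimulation invariance.
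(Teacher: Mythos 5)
Your outline has the right overall shape---the paper's proof is also a concrete counterexample certified by guarded-bisimulation invariance---but as written it is a plan for finding a proof rather than a proof, and the one concrete suggestion you make for the non-$\gf$-definable property would not work. The entire content of this theorem is the calibration you defer to the end (``once a suitable \ldots $\psi$ [is] in hand''): exhibiting a specific $\phi\in\gf$ that implicitly defines a predicate not explicitly definable in $\gf$. Your candidate $\psi$, ``$x$ has no $R$-successor of a certain type,'' is the formula $\neg\exists y\,(R(x,y)\land\chi(y))$, which \emph{is} in $\gf$: the guarded fragment restricts quantification, not negation, so guarded existentials followed by an outer negation are freely available. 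You appear to have transposed the syntactic restriction of $\gnf$ (guarded negation) onto $\gf$. The expressive weakness actually exploited by the paper is of a different nature: guarded bisimulation cannot detect the length of a directed cycle. The paper's $\phi$ forces $C(x)$ to hold exactly when $x$ lies on a directed $E$-cycle of length $4$ (via a $4$-ary guard atom $G(x,y,z,u)$ making that existential guarded) and forces $\neg C(x)$ to propagate a mod-$3$ colouring $P_0,P_1,P_2$ along edges; then $C$ is true everywhere on the expansion of the $4$-cycle and false everywhere on the expansion of the $3$-cycle, while $C_3$ and $C_4$ are guarded bisimilar over $\{E\}$.

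There is also a logical gap in your third step. To derive a contradiction from a putative explicit $\gf[\sigma]$ definition $\rho$, it is not enough that $\psi$ distinguishes two guarded-bisimilar $\sigma$-structures $(\mathfrak{A}_1,\vec a_1)$ and $(\mathfrak{A}_2,\vec a_2)$: the explicit definition is only required to agree with $G$ \emph{on models of $\phi$}, so both structures must be $\sigma$-reducts of (or expandable to) models of $\phi$. This is exactly why the paper must verify that both the $3$-cycle and the $4$-cycle admit expansions satisfying all three axiom groups, with the forced values of $C$ disagreeing. Without choosing $\psi$ and $\phi$ together so that the distinguishing bisimilar pair survives inside the class of models of $\phi$, the argument does not close; and that joint choice is precisely the step your proposal leaves open.
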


\vspace{-3mm}
\begin{proof}
Consider the GF sentence $\phi$ that
is the conjunction of  the following:
\[\begin{array}{rclr} 
    \forall x ~[C(x) &\rightarrow&  \exists yzu ~ (G(x,y,z,u)\land E(x,y)
    \land E(y,z) \land E(z,u)\land E(u,x))] \\   
    \forall x y ~ [ ~ (E(x,y) \wedge \neg C(x) ) &\rightarrow& P_0(x) \land \neg P_1(x) \land
    \neg P_2(x)]\\
    \forall x y ~ [ ~ (P_i(x) \wedge E(x,y)) &\rightarrow& P_{(i+1\text{ mod
      } 3)}(y)] \text{~~~
      for all $0\leq i< 3$}
\end{array}\]
The first sentence forces that if $C(x)$ holds, then $x$ lies on a directed $E$-cycle of length $4$.
The remaining two sentences force that if $\neg C(x)$ holds, then $x$ only 
lies on directed $E$-cycles
 whose length is a multiple of $3$.
Clearly, the relation $C$ is implicitly defined in terms of $E$.

However, we claim there is no explicit definition in $\gf$ in terms of $E$, because
no formula of $\gf$ can distinguish the directed $E$-cycle of length $k$ from the
directed $E$-cycle of length $\ell$ for  $3\leq k<\ell$.
Here we will  make use of the
notion of guarded bisimulation between  
structures $\frakA$ and $\frakB$, due to  Andr\'eka, van Benthem,  and  N\'emeti\cite{AvBN98JPL}.
This  is a non-empty family of 
partial isomorphisms from $\frakA$ to $\frakB$
satisfying
the following back-and-forth conditions: \begin{itemize}  \item For every partial isomorphism
$f \in I$ with domain $X$ and every guarded subset $X'$ of the domain of $\frakA $, there
 is a partial isomorphism $g \in I$ whose domain contains $X'$ agreeing with $f$ on $X \cap X'$
\item  for $f \in I$ with co-domain $Y$ and every guarded subset $Y'$ of the domain of $\frakB$, there
 is a partial isomorphism $g \in I$ with domain containing $Y'$ such that $g^{-1}$ and $f^{-1}$ agree on $Y \cap Y'$
\end{itemize}
It is known \cite{AvBN98JPL} that if two structures
are guarded bisimilar, then they must agree on all sentences
of $\gfo$.

Fix a binary relation symbol $E$, let $C_k$ be the directed $E$-cycle
of length $k$. Let $3\leq k, \ell$, and let
$Z$ be the binary relation containing all pairs $((a,b),(c,d))$ such
that $(a,b)\in E^{C_k}$ and $(c,d)\in E^{C_\ell}$. One can verify directly that
$Z$ is a guarded-bisimulation between $C_k$ and
$C_\ell$.
\end{proof}

It follows from Theorem~\ref{thm:failpbdpgnfok} that \gf lacks CIP as well, which was
already known \cite{HMO}. Furthermore, 
the above argument can be adapted to show that determinacy does not imply 
rewritability for views and queries defined in \gf: 
consider the set of views $\{\phi_{v_1}, \phi_{v_2}\}$, where
$\phi_{v_1}=\phi$ and $\phi_{v_2}(x,y) = E(x,y)$. Clearly,
$\{\phi_{v_1}, \phi_{v_2}\}$ determine the query $Q(x) = \phi\land C(x)$. 
On the other hand, any rewriting would constitute an explicit
definition in \gf of $C$ in terms of $E$, relative to $\phi$, which we know does not exist.

In \cite[Lemma 4.4]{Marx07pods} it was asserted that PBDP holds for an extension of the
Guarded Fragment, called the \emph{Packed Fragment}, in which a guard
$R(\vec{x})$ may be a conjunction of atomic formulas, as long as every
pair of variables from $\vec{x}$ co-occurs in one of these conjuncts.

The proof of Theorem~\ref{thm:failpbdpgnfok}, however, shows that
PBDP fails for the Packed Fragment, because 
known results (cf.~\cite{Marx07pods}) imply that no
formula of the Packed Fragment can distinguish the cycle of length $k$
from the cycle of length $\ell$ for $4\leq k<\ell$.  
This can also be shown by appealing to the notion of
packed bisimulation \cite{Marx07pods}, a variant
of guarded bisimulation which characterizes
expressibility in
the Packed Fragment.
In fact the relation $Z$ defined in the proof of  Theorem~\ref{thm:failpbdpgnfok} is a
packed bisimulation between $C_k$ and $C_\ell$. This shows that no
sentence of the Packed Fragment can distinguish directed $E$-cycles of
different length. Incidentally, the  sentence $\exists
xyz ~ (Rxy\land Ryz\land Rzx)$ distinguishes $C_3$
from $C_4$. By writing it as $\exists xyz ~ (Rxy\land Ryz\land Rzx) \wedge \top$ we
see that this sentence is in the Packed Fragment.
Indeed, it turns out that
there is a flaw in the proof of Lemma 4.4 in \cite{Marx07pods}.

\section{Expressibility of certain answers for queries with respect to $\gnf$ TGDs} \label{sec:rewritedep}
We now turn to a different set of issues about rewriting
formulas into a certain syntax. These
questions will be motivated by issues in databases and knowledge 
representation, rather
than general model-theoretic concerns. Constructions on models
will  be utilized to prove the rewritability results, as in the previous
sections. But while the construction of the previous sections
were geared towards first-order logic and some traditional
subsets (e.g. positive existential formulas),
the constructions in the remainder of the paper
will be tailored to formulas having a 
more specialized syntax (TGDs).

A fundamental concept in the study of information integration and ontology-mediated data access is the notion of 
\emph{certain answers} for a conjunctive query with respect to a database instance and a collection of sentences. For the sake of consistency in
the presentation, we define certain answers here
in terms of structures, rather than database instances. Note that the queries and sentences
that we consider in this section are all domain independent. Hence, as pointed out
in Section~\ref{sec:prelim}, their evaluation is determined by the underlying
instance of a structure, and hence in this section we can make use of constructions
taking instances to instances.

Given two structures $\frakA, \frakB$ over the same signature $\tau$, 
recall  the notation $\frakA\subinst \frakB$, meaning
that the two structures agree on the interpretation of the constant symbols, and,
for every relation $R\in\tau$,  $R^\frakA \subseteq R^\frakB$. 
Let  $\frakA$ be a finite structure, $\Sigma$ a set of sentences in some logic, and
$Q(x_1 \ldots x_k)$  a formula in some logic.
A tuple $(a_1 \ldots a_k) \in dom(\frakA)^k$ is a \emph{certain answer of $Q$ with respect to $\frakA$ and $\Sigma$}
if $\frakB, a_1 \ldots a_k \models Q$ in every model $\frakB$ of $\Sigma$ such that 
$\frakA\subinst\frakB$. 
Determining
which tuples
are certain answers is a central problem in information integration 
and ontology-mediated data access.
Typically $\Sigma$ is referred to as a set of \emph{integrity constraints} (or just ``constraints'' below, for brevity),
while $Q$ is the \emph{query}. The structure  $\frakA$ represents
incomplete information about  a structure, and the sentences $\Sigma$ represent a
constraint on the completion. A certain answer to query $Q$ is a result which
is already determined by $\Sigma$ and the presence of the facts in $\frakA$.
In some cases one considers the ``finite model analog'' of the above definition:
requiring that $\frakB, a_1 \ldots a_k \models Q$ in every  \emph{finite} model $\frakB$ of $\Sigma$ with $\frakA\subinst\frakB$. 
For the constraints $\Sigma$ we consider, there will be no distinction between the finite and unrestricted version
of the problems.

One of the benefits
of $\gnf$ is that one can effectively determine the certain answers whenever
$Q$ and $\Sigma$ are expressed in $\gnf$, and thus in particular for every $\Sigma$
in  $\gnf$ and
conjunctive query $Q$ \cite{bbo}.  But one can do better for $\gnfo$ formulas that are also TGDs.
Recall from Subsection \ref{subsec:tgdchar}
that these are, up to equivalence, frontier-guarded TGDs: TGDs where
there is a guard containing all exported variables.
Baget et~al.~\cite{baget2010} proved that for every set of frontier-guarded 
dependencies $\Sigma$ and conjunctive query $Q$, the certain answers can be computed 
in polynomial time in $\frakA$.
However, one could hope for more than just being able to compute the certain answers
in polynomial time.
A conjunctive query $Q$ is \emph{first-order rewritable} under
sentences $\Sigma$ if there is a first-order formula $\phi$ such that on any
finite structure $\frakA$, the tuples that satisfy $\phi$ in $\frakA$ are exactly the certain answers to $Q$ on $\frakA$ under $\Sigma$.
Thus  a query is first-order rewritable with respect to $\Sigma$
if we can reduce finding the certain answers to ordinary
evaluation of a first-order formula (which can be done, for example, with a database management system).
Unfortunately, it is known that there are  frontier-guarded TGDs and conjunctive queries
such that the certain answers can not be determined by evaluating a first-order query.  Indeed, this is true even for \emph{guarded TGDs}: recall
from Subsection \ref{subsec:tgdchar} that these are TGDs
where there is a single atom in the body containing all variables
of the body. A CQ and  guarded TGD that is not first-order
rewritable is given in Example \ref{ex:certain} below.
We will now look
at ways of ``remedying'' this situation.

We will show that we can decide, given a set $\Sigma$ of frontier-guarded
TGDs and a conjunctive query $Q$, whether or not $Q$ is first-order rewritable.
In this process, we will show that the certain answers can be expressed in a ``nice'' fragment
of Datalog, where Datalog is the extension of conjunctive queries with a fixpoint mechanism
(see  Section \ref{sec:prelim}).
One natural target for rewriting is a \emph{Guarded Datalog program}.
This is a Datalog program such that for every rule, the body of the
rule contains an atom over the input signature which contains all the
variables in the rule. The example below shows
why a language like Guarded Datalog is a natural target.

\begin{example} \label{ex:certain}
Consider a signature with binary
relations $R(x,y)$ and $S(x,y)$ as well as unary relation $U(x)$.

Consider the guarded TGDs:

\begin{align*}
\forall x y ~ [R(x,y) \wedge U(y) \rightarrow U(x)] \\
\forall x ~ [U(x) \rightarrow \exists z ~ S(x,z)] \\
\forall x y ~  [S(x,y) \rightarrow T(x)]
\end{align*}

and the query $Q(x) = T(x)$. 

One can check that the certain answers of $Q$ under $\Sigma$ on any structure $\frakA$ are
identical to the output of $P$ on $\frakA$, where $P$ is the Datalog program with the following rules:

\begin{align*}
UReach(x) \datalogarrow U(x) \\
UReach(x) \datalogarrow \exists y ~ R(x,y) \wedge UReach(y) \\
Goal(x) \datalogarrow  UReach(x) \\
Goal(x) \datalogarrow T(x) \\
Goal(x) \datalogarrow S(x,y)
\end{align*}

Notice that $P$ is a Guarded Datalog program, since
the body of each rule is guarded.
\end{example}

We will follow
(and correct) the approach of 
Baget~et~al.~\cite{bagetconf}, who argued that the certain answers
of conjunctive queries under frontier-guarded TGDs  are rewritable in Datalog. 
For guarded TGDs, this result had been announced by Marnette \cite{marnette}. 
The proof of Baget~et~al.~\cite{bagettr} revolves around a ``bounded base lemma'' 
showing that whenever a set of facts is not closed under ``chasing'' with FGTGDs, 
there is a small subset that is not closed (Lemma 4 of~\cite{bagettr}). 
However both the exact statement of  that lemma and its proof are flawed. 
Our proof corrects the argument, making use of model-theoretic techniques
to prove the bounded base lemma.
It then follows the rest of 
the argument in~\cite{bagettr} to show not only Datalog-rewritability, 
but rewritability into a Datalog program comprised of frontier-guarded rules (defined
below).

\myparagraph{The chase}
To prove results about certain answers, we will need to make use of the standard ``Chase construction'' 
for TGDs (see, e.g.~\cite{FKMP05}): given a structure $\frakA$ for signature $\sigma$ 
and a finite set of TGDs $\Sigma$, the chase construction produces 
a structure $\chase_\Sigma(\frakA)$ with the following properties:
\begin{itemize}
\item $\chase_\Sigma(\frakA)$ satisfies $\Sigma$ and $\frakA \subinst \chase_\Sigma(\frakA)$.
\item for any  boolean conjunctive query $Q$ with constants from $\frakA$,
$Q$ is  satisfied in $\frakB$ exactly when it is implied by $\Sigma$ and the facts of $\frakA$. 
\end{itemize}

$\chase_\Sigma(\frakA)$ is formed just by  repeatedly throwing in facts
using fresh elements to witness the heads of  unsatisfied TGDs. 
There are several variations of the chase \cite{FKMP05,onet}, but we describe a construction
that will suffice for our purpose.

$\chase_\Sigma(\frakA)$ is the union of structures 
$\frakB_j$ formed inductively. 
In the base case,  $\frakB_0=\frakA$, while
in the inductive case  $\frakB_{j+1}$ is formed from $\frakB_j$ as follows:
for every $\sigma \in \Sigma$
   $$
      \forall \vec x \, \left( \phi(\vec x) \rightarrow \exists \vec y \, \bigwedge_i A_i(\vec x, \vec y) \right)
      $$
 for every homomorphism $h$ of $\phi$ into $\frakB_j$, add facts $A_i(h(\vec x), \vec y_0)$
to $\frakB_j$, where $\vec y_0$ are values disjoint from $\adom(\frakB_j)$, any
constants of $\Sigma$, and the values used in any other $\sigma, h$ for $\frakB_j$.

Several of the arguments below will involve showing that $Q$ is certain with respect
to $\Sigma$ and $\frakA$ by arguing that $Q$  must  hold in $\chase_\Sigma(\frakA)$.

We will need an additional observation about the chase with
Frontier-Guarded TGDs, which is that the chase has a tree-like structure.
This is well-known \cite{bagetconf}, but it will be useful to state it in terms of our
notion of squid-extension from earlier in the paper.  
\begin{lemma} \label{lem:chasesquid}
If $\Sigma$ consists of frontier-guarded TGDs, then $\chase_\Sigma(\frakA)$ is
a squid-extension of $\frakA$. 
\end{lemma}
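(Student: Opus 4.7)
I would prove the lemma by explicitly constructing a squid decomposition of $\chase_\Sigma(\frakA)$ and verifying both defining conditions using an induction that tracks how firings of frontier-guarded TGDs interact. Call an element of $\chase_\Sigma(\frakA)$ \emph{fresh} if it lies in $\adom(\chase_\Sigma(\frakA)) \setminus (\adom(\frakA) \cup C)$, where $C$ denotes the set of constants, and let $\sim$ be the equivalence relation on fresh elements generated by co-occurrence in a single fact of $\chase_\Sigma(\frakA)$. For each $\sim$-class $N_i$ define the tentacle $\frakB'_i$ to be the substructure whose facts are those facts of $\chase_\Sigma(\frakA) \ominus \frakA$ containing at least one element of $N_i$. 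Since every fact of $\chase_\Sigma(\frakA) \ominus \frakA$ contains at least one fresh element, these tentacles partition it; and because distinct $\sim$-classes are disjoint, distinct tentacles' active domains overlap only on $\adom(\frakA) \cup C$, as required.

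The main technical tool is the following invariant, proved by induction on the chase stage $j$: for every fact $F$ of $\frakB_j$ there is a single fact $F^\star$ of $\frakA$ with $(\adom(F) \cap \adom(\frakA)) \setminus C \subseteq \adom(F^\star)$. The base case is immediate. For the inductive step, a new fact $A(h(\vec x), \vec y_0)$ added at stage $j+1$ by firing a frontier-guarded TGD with exported variables $\vec x$ and frontier guard atom $\alpha$ has its old elements contained in $h(\vec x)$, and by frontier-guardedness $h(\vec x) \subseteq \adom(h(\alpha))$; since $h(\alpha) \in \frakB_j$, the induction hypothesis supplies the desired $F^\star$. This invariant immediately yields condition (i): any $X \subseteq \adom(\frakA)$ guarded in $\chase_\Sigma(\frakA)$ is contained up to $C$ in some chase fact, and the invariant then produces an $\frakA$-fact guarding $X$.

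For the harder clause (ii) I would organize the firings into a forest: declare a firing $f$ to be a \emph{root} when its frontier guard fact $h_f(\alpha_f)$ belongs to $\frakA$, and otherwise take the parent of $f$ to be the unique earlier firing that introduced $h_f(\alpha_f)$. A case analysis shows that whenever two firings' new facts share a fresh element, the firings lie in a common subtree of this forest: if such a shared element $y$ was created by firing $f$, then for any other firing $g$ whose new facts contain $y$ one has $y \in h_g(\vec x)$ and hence $y \in \adom(h_g(\alpha_g))$, so the parent of $g$ also has $y$ in its new facts; iterating this argument, we climb the ancestor chain of $g$ while preserving the presence of $y$, and the chain must terminate at the unique firing that introduced $y$, namely $f$. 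Closing under $\sim$, all firings contributing to a single tentacle $\frakB'_i$ lie in a subtree rooted at some $f_0$ with $h_{f_0}(\alpha_{f_0}) \in \frakA$. A second induction along this subtree shows that $h_f(\vec x) \cap \adom(\frakA) \subseteq \adom(h_{f_0}(\alpha_{f_0}))$ for every firing $f$ in it, and hence all old elements appearing in any fact of $\frakB'_i$ lie in the single $\frakA$-fact $h_{f_0}(\alpha_{f_0})$, which provides the required guard. The main obstacle is this identification of tentacles with subtrees: a child firing in the forest need not share any fresh element with its parent, since the child's exported tuple may land entirely in the old positions of the parent's newly introduced facts, so the forest must be defined through the frontier guard atom rather than through direct null-sharing, and frontier-guardedness is what makes the case analysis close.
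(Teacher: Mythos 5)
Your proof is correct, and it runs on the same engine as the paper's: frontier-guardedness forces the $\adom(\frakA)$-elements of every freshly chased fact into the image of the frontier guard atom of the firing that produced it, and iterating this traces every chase fact back to a single guarded set of $\frakA$. The difference is in how the decomposition is organized. The paper builds the tentacles \emph{forward}: it indexes them by the guarded sets $X$ of $\frakA$ and, as the chase proceeds, assigns each newly generated fact to the tentacle already containing the image of its guard atom, so both squid conditions are maintained in a single induction over chase stages. You define the tentacles \emph{a posteriori} as connected components of null-sharing and then recover the guarded base of each component through an explicit forest of firings rooted at those whose guard image lies in $\frakA$. Your decomposition is finer (several of your tentacles may trace back to the same guarded set $X$, which the paper would merge into one $\frakB_X$), which is harmless for the definition of squid-extension; the cost is the extra bookkeeping of the firing forest, and the payoff is that your tentacles are canonical rather than dependent on choices made during the construction. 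The subtlety you flag at the end --- that parenthood must be defined through the guard atom because a child firing need not share any null with its parent --- is real and is precisely where frontier-guardedness enters; the paper's forward construction sidesteps it by never needing the null-sharing relation at all. Two trivial loose ends, which the paper also glosses over: a chase fact with no existential witnesses may be introduced by more than one firing (pick the earliest to make the forest well-defined), and a fact of $\chase_\Sigma(\frakA)\ominus\frakA$ whose only non-active element is a constant outside $\adom(\frakA)$ contains no fresh element and so falls outside your partition; both are handled by routine conventions.
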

\begin{proof}
Letting  $\frakB=\chase_\Sigma(\frakA)$ 
recall that we must show that 
\begin{enumerate}
\item[(i)] every set of elements from the active domain of $\frakA$
      that is guarded in $\frakB$ is already guarded in $\frakA$; and
\item[(ii)] $\frakB \ominus \frakA$ is a union of 
tentacles $\frakB_X$ for $X$ a guarded subset of $\frakA$ such that
for distinct $X$ and $X'$, $\frakB_X$ and $\frakB_{X'}$ 
overlap
in their active domains only in $\adom(\frakA)\cup C$, and finally
      $(\adom(\frakB_X)\cap \adom(\frakA))\setminus C \subseteq X$, where $C$ is
      the set of elements of $\frakA$ named by a constant symbol.
\end{enumerate}

As we generate $\frakB=\chase_\Sigma(\frakA)$ we build the set of tentacles $\frakB_X$ for
each guarded set $X$ in $\frakA$, inductively preserving
the properties above.  Initially $\frakB_X$  contains every fact in $\frakA$
that is guarded by $X$. Clearly, both properties hold.

Recall that the chase is formed as the union of $\frakB_j$, where
$\frakB_{j+1}$ is formed inductively from $\frakB_{j}$ by firing  rules
$\sigma \in \Sigma$ based on a homomorphism $h$ of the body of $\sigma$ into the  structure $\frakB_j$
built so far, generating facts $G$ that are added
to $\frakB_j$. Let $F$ be the image of a guard atom for $\sigma$ under  $h$.
If $F$ is contained in $\frakA$, there is nothing to be done to preserve
the invariants.
If $F$ is not contained in $\frakA$, 
then by the second inductive invariant, $F$ is associated with a  $\frakB_X$ for some $X$ that is 
guarded in $\frakA$.
We add $G$ to $\frakB_X$. 

We show that the inductive invariants are preserved. Clearly $\frakB_{j+1} \ominus \frakA$
is a union of tentacles, since we added $G$ to exactly one tentacle.
Let us consider the first property.
Suppose a set $a_1 \ldots a_k$ of elements of $\frakA$ is guarded by $G$. Then
$a_1 \ldots a_k$ must correspond to exported variables of the rule; that is, none
of them could have been generated as a fresh value in the creation of $G$. Thus
they must be guarded by $X$.

For the second property, 
any new elements added to $\adom(\frakB_X)$ must be disjoint from those in $\adom(\frakB_{X'})$,
and any fact is added to a unique $\frakB_X$. Finally any element added 
to $[\adom(\frakB_X) \cap \adom(\frakA)] \setminus C$  must be contained in the guard atom  $G$, and
by induction this is contained in $X$.
\end{proof}

\myparagraph{Rewriting the certain answers of atomic queries over guarded TGDs}
We start with a result that gives the intuition for how this rewriting works:

\begin{theorem} \label{thm:gfatomicrewrite}
For every set $\Sigma$ of guarded TGDs, and for every
atomic conjunctive query $Q(\textbf{x})$,
one can effectively find a Guarded Datalog program $P$ such
that the output  of $P$ on any structure $\frakA$ is the same as the
certain answers to $Q$ on $\frakA$.
\end{theorem}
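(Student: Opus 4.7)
The plan is to construct a Guarded Datalog program $P$ that simulates the chase of $\Sigma$ on the input structure, while tracking only the atomic facts derived on tuples from the active domain. Correctness rests on two ingredients: first, the certain answers to $Q(\vec x)$ with respect to $\Sigma$ and $\frakA$ coincide with the tuples $\vec a$ such that $Q(\vec a)$ holds in $\chase_\Sigma(\frakA)$; second, by Lemma~\ref{lem:chasesquid}, this chase is a squid-extension of $\frakA$, so every atomic fact over $\adom(\frakA)$ in the chase is generated either directly within $\adom(\frakA)$ or through a detour into a single tentacle attached to a guarded subset of $\adom(\frakA)$.

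I would introduce, for each relation symbol $R$ appearing in $\Sigma$ or $Q$, a fresh intensional predicate $\hat R$ of the same arity, designate $\hat R$ as the goal predicate whenever $Q(\vec x) = R(\vec x)$, and include the seed rule $\hat R(\vec x) \datalogarrow R(\vec x)$. For each guarded TGD $\sigma = \forall \vec x\,(\beta(\vec x) \limp \exists \vec y\,\rho(\vec x,\vec y))$ in $\Sigma$ and each atom $R(\vec u)$ in $\rho$ whose variables lie entirely in $\vec x$, I would add the rule $\hat R(\vec u) \datalogarrow \hat\beta(\vec x)$, where $\hat\beta$ is $\beta$ with every atom replaced by its hatted counterpart. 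Each such rule is guarded: the guard atom of $\sigma$ in $\beta$, once hatted, is an atom in the body containing all variables of the rule.

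The main technical obstacle is capturing the contribution of TGD head atoms containing existential variables, whose fresh witnesses can, through further chase steps, ultimately derive atomic facts on the exported variables. I would address this by a precomputed saturation: for every guarded TGD $\sigma$ with body $\beta(\vec x)$, every finite set $\Phi$ of atomic facts on $\vec x$, and every atomic fact $R(\vec u)$ with $\vec u \subseteq \vec x$ that is derivable in $\chase_\Sigma(\frakA_0)$ for the minimal instance $\frakA_0$ consisting of $\beta(\vec x) \cup \Phi$, I add a compiled rule $\hat R(\vec u) \datalogarrow \hat\beta(\vec x) \wedge \hat\Phi(\vec x)$. This saturation is finite because there are only finitely many atomic facts over subsets of $\vec x$ and finitely many candidate sets $\Phi$; each derivability question is decidable, e.g.~via Theorem~\ref{thm:gnfsat} applied to an appropriate $\gnfo$ entailment. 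Each compiled rule remains guarded because its body still contains the hatted guard atom of $\sigma$.

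Soundness of $P$ follows by a routine induction on the length of a Datalog derivation, since every rule corresponds to a valid entailment under $\Sigma$. For completeness, I would induct on the depth of the chase derivation of $R(\vec a)$ for $\vec a \in \adom(\frakA)$, using the squid decomposition of Lemma~\ref{lem:chasesquid} to split the derivation into phases that stay within $\adom(\frakA)$ (captured by the basic rules) and phases that detour through a single tentacle attached to a guarded subset $X \subseteq \adom(\frakA)$ (captured by the compiled rules, with the hypothesis set $\Phi$ matched to the inductively derived hatted facts on $X$). The key point to verify carefully is that the squid structure confines each detour to a single tentacle, so the saturation over finitely many $(\sigma, \Phi)$ pairs indeed suffices.
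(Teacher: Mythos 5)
Your proposal is correct and follows essentially the same route as the paper's proof: both reduce the problem to the finitely many single-atom, guarded-body consequences of $\Sigma$ (precomputed decidably via Theorem~\ref{thm:gnfsat}), compile these into a guarded Datalog program over copied predicates, and prove completeness from the locality of the chase for guarded TGDs. The only packaging difference is that the paper isolates this locality as Lemma~\ref{lem:boundedbaseguardedatomic} (if every guarded induced substructure is fact-saturated then so is the whole structure, shown via a union of chases of the $\frakA_X$), whereas you argue directly on the squid decomposition of $\chase_\Sigma(\frakA)$; your compiled rules, whose bodies must contain an instance of some TGD body of $\Sigma$, form a subset of the paper's derived full guarded TGDs, and their sufficiency is exactly what Lemma~\ref{lem:chasesquid} delivers, since every tentacle is rooted at the image of a TGD body and returns atomic facts only onto that root set.
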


Note that entailment here, and throughout the section,
can be interpreted either in
  the classical sense or in the finite sense, since we have the finite
  model property.  Indeed, in our proofs, we use constructions
  that make use of infinite structures, but the conclusion holds in
  the finite.

A \emph{full TGD} is a TGD with no existentials in the head.
The idea behind the proof the theorem will be that we take all full guarded TGDs that
are consequences of $\Sigma$, and turn them into Datalog
rules.  We will show that the full guarded TGDs are sufficient to capture the certain
answers.

We say that a structure $\frakA$ is \emph{fact-saturated} (with respect to $\Sigma$) 
if no new fact over the active domain of $\frakA$ plus the elements named
by constant symbols is entailed by the facts of $\frakA$ 
together with $\Sigma$.  

\begin{lemma} \label{lem:boundedbaseguardedatomic}
For $\Sigma$ a set of guarded TGDs, 
  if  a structure $\frakA$ is not fact-saturated with respect to $\Sigma$, then there is a guarded
  subset $X$ of the domain of $\frakA$ such that the induced substructure
  $\frakA_X$   is not fact-saturated with respect to $\Sigma$.
\end{lemma}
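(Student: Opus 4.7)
The plan is to pinpoint a critical rule application in the chase of $\frakA$ and extract from it a guarded subset $X$ of $\adom(\frakA)$ for which $\frakA_X$ is not fact-saturated. Fix an ordering of the chase construction in which facts are generated one at a time, and let $F^*$ be the first fact appearing in $\chase_\Sigma(\frakA) \setminus \frakA$ whose elements all lie in $\adom(\frakA) \cup C$; such a fact exists since $\frakA$ is assumed not fact-saturated and by the universal property of the chase. Let $\rho \in \Sigma$, $h$, $\alpha$, and $G = h(\alpha)$ denote, respectively, the TGD applied to produce $F^*$, the homomorphism, the (unique) atom of $\rho$'s body containing all body variables (which exists because $\rho$ is guarded), and its image. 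Because $\alpha$ contains all body variables, every body atom of $\rho$ under $h$ is a fact whose elements are contained in those of $G$.

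In the first case, $G \in \frakA$. Let $X$ denote the elements of $G$ not named by constant symbols; then $X$ is guarded in $\frakA$ by $G$ itself. Every body atom of $\rho$ under $h$ is a fact on $X \cup C$ that was present in the chase before $F^*$; since $F^*$ was chosen as the earliest new chase fact lying on $\adom(\frakA) \cup C$, each of these body atoms must already be in $\frakA$, hence in $\frakA_X$. Firing $\rho$ on $\frakA_X$ with $h$ therefore yields $F^*$ — whose elements, being images of frontier variables of $\rho$, lie on $X \cup C$ — and since $F^* \notin \frakA_X$, the substructure $\frakA_X$ is not fact-saturated, as required.

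In the second case, $G \notin \frakA$. Since $G$ was generated before $F^*$, the minimality of $F^*$ forces $G$ to contain at least one element outside $\adom(\frakA) \cup C$. We then invoke Lemma~\ref{lem:chasesquid}: $\chase_\Sigma(\frakA)$ is a squid-extension of $\frakA$ with a squid decomposition into tentacles $\frakB_Y$ indexed by guarded subsets $Y$ of $\frakA$, and $G$ belongs to some tentacle $\frakB_Y$. Property~(ii) of the squid decomposition ensures that the $\adom(\frakA)$-elements of $G$ all lie in $Y$; consequently $F^*$, whose elements form a subset of the $\adom(\frakA)$-elements of $G$, has all elements in $Y \cup C$. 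The remaining task is to verify that the tentacle $\frakB_Y$ is isomorphic via a map fixing $Y \cup C$ to $\chase_\Sigma(\frakA_Y)$, from which one obtains $F^* \in \chase_\Sigma(\frakA_Y) \setminus \frakA_Y$ and hence that $\frakA_Y$ is not fact-saturated.

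The main obstacle is establishing the tentacle-as-chase characterization used in Case~2. It relies on the strong locality of guarded rule firings: a firing of a guarded TGD is entirely determined by the image of the body guard, with all body atoms and all produced head atoms (modulo fresh existential witnesses) confined to the elements of that guard. Consequently, derivation steps cannot leak between tentacles, and the steps that stay inside $\frakB_Y$ mirror precisely — via a renaming that fixes $Y \cup C$ — the steps of $\chase_\Sigma(\frakA_Y)$.
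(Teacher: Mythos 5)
Your overall strategy is genuinely different from the paper's: you argue directly by taking a minimal counterexample in the chase of $\frakA$ (the first derived fact $F^*$ over $\adom(\frakA)\cup C$), whereas the paper proves the contrapositive by assuming every guarded induced substructure $\frakA_X$ is fact-saturated, forming $\frakB=\bigcup_{X}\chase_\Sigma(\frakA_X)$, and checking that this union is a model of $\Sigma$ extending $\frakA$ with no new facts over $\adom(\frakA)\cup C$ (guardedness of the TGDs is used there only to show that any body image in $\frakB$ lands inside a single $\chase_\Sigma(\frakA_{X_0})$). Your Case~1 is complete and correct: minimality of $F^*$ puts all body atoms of the final firing inside $\frakA_X$, and $F^*$ cannot contain fresh witnesses since it lies over $\adom(\frakA)\cup C$, so it is a missing entailed fact over $X\cup C$.

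Case~2, however, contains a genuine gap, and the claim you defer to is false as stated. The tentacle $\frakB_Y$ of the full chase is in general \emph{not} isomorphic over $Y\cup C$ to $\chase_\Sigma(\frakA_Y)$: a firing whose guard lies in $\frakB_Y$ may use, as a non-guard body atom, a fact over $\adom(\frakA)\cup C$ that was itself \emph{derived} earlier in the chase of $\frakA$, possibly via guards in a different tentacle or via facts of $\frakA$ lying outside $Y\cup C$; such a fact need not be derivable from $\frakA_Y$ alone, so derivations do ``leak'' between tentacles through newly derived facts on $\adom(\frakA)$. The repair is to use the minimality of $F^*$ a second time: restrict attention to the portion of the chase produced strictly before $F^*$. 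In that prefix there are no derived facts over $\adom(\frakA)\cup C$ at all, so every body atom of a firing guarded inside $\frakB_Y$ is either a fact of $\frakA$ whose non-constant elements lie in $Y$ (hence a fact of $\frakA_Y$, by the squid invariant $(\adom(\frakB_Y)\cap\adom(\frakA))\setminus C\subseteq Y$) or a previously created fact of the same tentacle; an induction on creation time then embeds this prefix of $\frakB_Y$ into $\chase_\Sigma(\frakA_Y)$ by a map fixing $Y\cup C$, which is all you need to conclude that $F^*$ is entailed by $\frakA_Y$ and $\Sigma$ while lying outside $\frakA_Y$. With that restriction made explicit your argument goes through, but as written the ``tentacle-as-chase'' step would fail.
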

\begin{proof}
We prove the contrapositive.
Assume that every
induced substructure $\frakA_X$, for $X$ a guarded subset, is fact-saturated with
respect to $\Sigma$.
Let $\frakB$ be constructed
from $\frakA$ by chasing each $\frakA_X$ with $\Sigma$ independently
and taking the union of the results: that is $\frakB=\bigcup_{X \mbox{ guarded}} \chase_\Sigma(\frakA_X)$.
Recalling that the chase of $\frakA_X$ only satisfies
facts over $\frakA_X$ that are entailed, we see that $\frakB$ does not satisfy any
new facts over the domain of $\frakA$.

We claim that $\frakB$ satisfies every sentence in $\Sigma$.
Consider a dependency $\sigma$ in $\Sigma$ of the form
\[
\forall \vec x \, \left( \phi(\vec x) \rightarrow \exists \vec y \, \rho(\vec x, \vec y) \right)
\]
and a binding of variables $\vec x$ into $\vec b \in \frakB$ such
that the corresponding facts $\phi(\vec b)$ hold in $\frakB$.
Note that since $\sigma$ is a guarded TGD, $\vec b$ is guarded.
If $\vec b$ contains only constants and elements of $\frakA$, then each fact in $\phi(\vec b)$
must be in $\frakA$. Hence $\phi(\vec b)$ is in $\frakA_X$ and we are
done, since $\frakA_X$ satisfies $\Sigma$.
Consider any non-constant element $b_i$ outside of $\frakA$.
If any such element exists, then the guard fact for $\vec b$ must have been generated in the chase
process for some $\frakA_{X_0}$, hence every non-constant element $b_i$ was
generated in $\frakA_{X_0}$, and every fact in $\phi(\vec b)$ involving
such an element must be in $\frakA_{X_0}$. Since every other fact is  in
$\frakA$, hence in $\frakA_{X_0}$, we have  $\phi(\vec b)$ is contained
in $\frakA_{X_0}$ as before, and so we are done because $\Sigma$ holds in $\frakA_{X_0}$.

Thus we have a structure satisfying $\Sigma$, containing $\frakA$,  and containing no new
facts over the elements of $\frakA$ and the constants. Therefore $\frakA$
must be fact-saturated.
\end{proof}

We are now ready to give the proof of Theorem \ref{thm:gfatomicrewrite}:

\begin{proof}[Proof of Theorem \ref{thm:gfatomicrewrite}]
A \emph{derived full guarded TGD} for $\Sigma$ is a  full guarded TGD that
is entailed by $\Sigma$ and which has a single atom
in the head.
We let $\Sigma_{\fullguarded}$ be all the derived full guarded TGDs.
Note that  once we fix the signature, we fix the maximal number of atoms
in the body of a guarded TGD, assuming that  atoms that
are redundant are eliminated. Thus once we fix both the constants
and the relations in the signatures, we we fix the number
of full guarded TGDs with a single atom in the head, up to renaming
of variables and elimination of redundant atoms.
Thus the number derived full guarded TGDs  in a fixed signature, up 
to renaming and elimination of redundant atoms, 
is  finite.

Lemma \ref{lem:boundedbaseguardedatomic} implies that:

\medskip

For every
$\frakA$ and atomic query $Q=\goal(\vec x)$,  the certain answers  of $Q$ over $\frakA$
with respect to $\Sigma$ are the same as the $Q$-facts entailed
by $\frakA$ and $\Sigma_{\fullguarded}$.

\medskip

The full TGDs of $\Sigma_{\fullguarded}$  are not quite Guarded Datalog.
Guarded Datalog requires us to distinguish extensional and intensional relations, and
requires that atoms over extensional relations do not occur as consequences within rules.
We turn  $\Sigma_{\fullguarded}$ into a Guarded  Datalog
program by replacing
each relation $R$ in $\Sigma_{\fullguarded}$ by a copy $R'$. Thus
a full TGD:
\[
\forall \vec x \vec y \, \left( R(\vec x, \vec y) \ldots \rightarrow  S(\vec x) \right)
\]
is transformed to the Datalog rule:
\[
S'(\vec x) \datalogarrow \exists \vec y ~ R'(\vec x, \vec y) \ldots 
\]

In addition we add rules:
\[
R(\vec x) \datalogarrow R'(\vec x)
\]
Finally, we let $\goal'$ be the goal predicate. It is easy
to see that this Datalog program computes a fact $\goal'(\vec a)$ over $\frakA$
exactly when $\goal(\vec a)$ is entailed by $\Sigma_{\fullguarded}$ over $\frakA$.
\end{proof}

\myparagraph{General conjunctive queries and Guarded TGDs}
We now extend the result to general conjunctive queries. The conference
paper \cite{mfcs14} claimed that 
the certain answers of an arbitrary answer-guarded CQ $Q$ are expressible
in Guarded Datalog. However this is easily seen to be false: indeed even with no constraints
we still need to express that
$Q$ holds in $\frakA$, which is  expressible in Guarded Datalog only if $Q$ is equivalent
to a $\gf$ formula. Thus for any CQ $Q$ that is not in $\gf$, such
as $\exists x y z ~ R(x,y) \wedge R(y,z)$, the certain
answers with respect to the empty set of constraints
are not rewritable in Guarded Datalog.

We thus need to move to a slight extension of Guarded Datalog that allows non-guarded rules at top-level.
We consider Datalog programs where the special relation $\goal$ does not occur in the body
of any rule. Every Datalog program can be rewritten this way.
A \emph{goal rule} in such a Datalog program is one that has the relation $\goal$ in the head.
A Datalog program is \emph{internally-guarded} if for every rule that is not a goal rule, the body has an atom over the input signature that  guards
each variable. That is, internally-guarded Datalog weakens Guarded
Datalog by making an exception for the goal rule.

Recall that a conjunctive query
is answer-guarded if it includes an atomic formula that guards all free variables.
In particular all Boolean conjunctive queries are answer-guarded.

Our goal is the following result.

\begin{theorem} \label{thm:guardedrewrite}
For every set $\Sigma$ of guarded TGDs, and for every
conjunctive query $Q$, one can effectively find an internally-guarded Datalog program $P$ such
that
 on any structure $\frakA$ and binding $\vec c$ for the free
variables of $Q$ in $\frakA$,  $\vec c$ belongs to the output of $P$ on $\frakA$ exactly
when  $\frakA \wedge \Sigma \models Q(\vec c)$.
\end{theorem}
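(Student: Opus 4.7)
The plan is to reduce the general conjunctive query case to the atomic query case handled by Theorem~\ref{thm:gfatomicrewrite}, exploiting the squid-decomposition of $\chase_\Sigma(\frakA)$ established in Lemma~\ref{lem:chasesquid}. As in the atomic case, I would use the equivalence: $\vec{c}$ is a certain answer of $Q$ under $\Sigma$ on $\frakA$ exactly when $\chase_\Sigma(\frakA)\models Q(\vec{c})$, so the task becomes to capture this chase-satisfaction in an internally-guarded Datalog program.

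I would first partition the atoms of $Q(\vec{x})=\exists\vec{y}\bigwedge_i A_i$ according to the connected components of the graph whose nodes are the existential variables $\vec{y}$ and whose edges join variables co-occurring in an atom. This yields sub-CQs $Q_1,\ldots,Q_k$ whose only shared variables lie in $\vec{x}$. By Lemma~\ref{lem:chasesquid}, any homomorphism $h\colon\canoninst(Q)\to\chase_\Sigma(\frakA)$ witnessing $Q(\vec{c})$ must send each $Q_i$ either entirely into $\frakA$ or into a single tentacle $\frakB_{X_i}$; in the latter case the $\vec{x}$-variables appearing in $Q_i$ are forced into $X_i\cup C$ and hence into a guarded subset of $\frakA$. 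So after enumerating the finitely many such partitions, each non-trivial $Q_i$ may be treated as an answer-guarded sub-CQ whose exported variables are semantically guarded in the input.

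For each such answer-guarded sub-CQ $Q_i(\vec{z}_i)$ I would introduce a fresh intensional predicate $P_i$ of arity $|\vec{z}_i|$ and produce a Guarded Datalog sub-program computing exactly those guarded tuples $\vec{b}$ with $\chase_\Sigma(\frakA)\models Q_i(\vec{b})$. This reduction proceeds recursively on the structure of $Q_i$, bottoming out at atoms where Theorem~\ref{thm:gfatomicrewrite} applies directly. The overall Datalog program then consists of these sub-programs together with a single goal rule of the form $\goal(\vec{x})\datalogarrow\bigwedge_i P_i(\vec{z}_i)\wedge\bigwedge_j A_j(\vec{t}_j)$, one per partition, where the latter conjuncts capture atoms that map entirely into $\frakA$. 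The goal rule itself is unguarded in general, which is precisely why the statement of Theorem~\ref{thm:guardedrewrite} must be relaxed from Guarded Datalog to \emph{internally}-guarded Datalog.

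The main obstacle is ensuring that the auxiliary full TGDs $Q_i(\vec{z}_i)\to P_i(\vec{z}_i)$ used to encode each sub-CQ are themselves guarded TGDs, since the body of $Q_i$ may contain existential variables that are not guarded by any single atom. I would overcome this by treeifying each $Q_i$ via Fact~\ref{fact:treeification} into a finite disjunction of acyclic answer-guarded CQs; each acyclic CQ admits a hypertree decomposition that can be encoded bottom-up by a chain of full guarded TGDs introducing fresh predicates for successively larger sub-trees along the decomposition. A generalised bounded-base argument, extending Lemma~\ref{lem:boundedbaseguardedatomic} to this enriched theory, would verify that the resulting derived full guarded TGDs suffice to capture chase-entailment of each $P_i$, closing the reduction to Theorem~\ref{thm:gfatomicrewrite}.
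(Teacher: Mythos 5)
Your overall architecture --- reduce to the atomic case of Theorem~\ref{thm:gfatomicrewrite} by introducing auxiliary predicates for answer-guarded sub-queries anchored at guarded sets of $\frakA$, and accept a single unguarded goal rule --- matches the paper's. But the decomposition step has a genuine gap. You claim that any homomorphism witnessing $Q(\vec c)$ in the chase must send each connected component $Q_i$ (connectivity taken over shared existential variables) either entirely into $\frakA$ or entirely into a single tentacle. This is false: an existential variable may be mapped to an element of $\adom(\frakA)$, and then a single connected component can straddle several tentacles. For instance, take $Q=\exists y_1y_2y_3\,(R(y_1,y_2)\wedge S(y_2,y_3))$, let $\frakA$ contain $U(a,b)$ and $V(a,c)$, and let $\Sigma$ contain $\forall xy\,(U(x,y)\to\exists z\,R(z,x))$ and $\forall xy\,(V(x,y)\to\exists z\,S(x,z))$. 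Then $Q$ is certain, via $y_2\mapsto a$ with the two atoms realized in the tentacles attached to $\{a,b\}$ and $\{a,c\}$ respectively; $Q$ is a single component, yet no single tentacle contains a full witness, so a program built on your static decomposition would miss this answer. The decomposition that works is homomorphism-dependent: one must enumerate \emph{all} families of bounded-size CQs $q_1,\dots,q_m$ (with the variables that land in $\frakA$ kept free) whose conjunction entails $Q$, and include a goal rule $\bigwedge_j R_{q_j}\to\goal(\vec x)$ for each such family --- which is what the paper does.

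The second half of your plan also works harder than necessary, and in doing so leaves the key step unproved. You worry that the rule $Q_i(\vec z_i)\to P_i(\vec z_i)$ is not a guarded TGD and propose treeification plus an unspecified ``generalised bounded-base argument'' to repair this. The paper sidesteps the problem entirely: the rules deriving $R_{q}(\vec b)$ are the \emph{guarded query generation rules}, whose bodies are guarded conjunctions $A_0\wedge\bigwedge_i A_i$ (all variables occurring in the single atom $A_0$) such that the body entails $q$ \emph{under $\Sigma$}. Completeness of these rules is exactly the content of the squid structure of $\bigcup_X\chase_\Sigma(\frakA_X)$: the witness for $q_j(\vec b)$ lies inside $\chase_\Sigma(\frakA_{G_j})$ for a single guarded set $G_j$, so $q_j(\vec b)$ is already entailed, under $\Sigma$, by the (guarded) conjunction of facts over $G_j$, and those facts are in turn derivable by the atomic-case machinery of Theorem~\ref{thm:gfatomicrewrite}. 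No treeification or recursion on the structure of $Q_i$ is needed, and without spelling out your bounded-base argument the reduction to the atomic case is not actually closed.
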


In the proof we will make use of the same construction as in
the case where $Q$ consists of a single atom: given $\frakA$, we take each guarded set $X$ of $\frakA$, and
let $\frakB=\bigcup_X \chase_\Sigma(\frakA_X)$. In the previous proof we
showed that  $\frakB$ satisfies the constraints $\Sigma$.
We note further:

\medskip

The sets $\chase_\Sigma(\frakA_X) \ominus \frakA$ as $X$ ranges over guarded subset
of $\frakA$, form tentacles witnessing that  $\frakB$ is a squid-extension
of $\frakA$.

\medskip

Clearly the active domains of these sets  overlap only in $\frakA$,
and $\frakB \ominus \frakA$ is their union.
From Lemma \ref{lem:chasesquid} we see
that each $\chase_\Sigma(\frakA_X)$ has no new guarded sets which contain only elements in $\frakA$.

We now turn to the construction of the Datalog program that
witnesses Theorem \ref{thm:guardedrewrite}.
The idea  will be to add new relations for certain
guarded queries
derived from $Q$, along with full guarded TGDs that capture their semantics.
For each 
query $q$ of size at most that of $Q$, let  $R_q$ be a new relation symbol.

For a query $q$ with variables $\vec x$ free and
at most $k$ variables, a \emph{guarded query generation rule} for $q$ is  a full TGD of the form:
\[
\forall \vec x \vec y ~ (A_0(\vec x, \vec y) \wedge \bigwedge_{i=1 \ldots n}  A_i(\vec x, \vec y)  \rightarrow R_q(\vec x))
\]
where each $A_i$ is an  atom over the signature of $\Sigma$ whose free variables are contained
in the atom $A_0(\vec x, \vec y)$, and
the corresponding TGD
\[
\forall \vec x \vec y ~ (A_0 \wedge \bigwedge_i  A_i  \rightarrow q(\vec x))
\]
is a consequence of $\Sigma$.
Notice that:
\begin{itemize}
\item guarded query generation rules are guarded TGDs
\item  there are only finitely many guarded query generation rules (up to logical
equivalence) since there are only finitely many guarded conjunctions
\item determining whether a TGD is a guarded query generation rule 
can be determined effectively, using the  decidability of $\gnfo$
\end{itemize}

\begin{proof}[Proof of Theorem \ref{thm:guardedrewrite}]
Let $k$ be the maximal number of variables in the body of  a rule of $\Sigma$.
Consider the  signature with intensional relations $R_q$ for every query
$q$ with at most $k$ variables. Consider the set of full TGDs  $P_{Q}$ consisting of:
\begin{itemize}
\item All derived full guarded TGDs  (over the original signature)
\item All guarded query generation rules 
\item As goal rules, all TGDs of the form
\[
\bigwedge_j R_{q_j}(\vec x, \vec y) \rightarrow \goal(\vec x)
\]
such that each $q_j$ is a CQ with at most $k$ variables,  
$ \bigwedge_j q_j$ entails $Q(\vec x)$, and
the number of variables in the rule is at most $k$.
\end{itemize}

We can compute the last set of TGDs using the decidability of conjunctive
query containment.

We claim that for any $\vec c \in \frakA$, $\goal(\vec c)$ is entailed by $\frakA \wedge P_{Q}$
if and only if  $Q$ is entailed by $\frakA \wedge \Sigma$.

In one direction, suppose $\goal(\vec c)$ is entailed by $\frakA \wedge P_{Q}$.
Then  there is a single goal rule $\sigma$ of form
\[
\bigwedge_j R_{q_j} \rightarrow \goal(\vec x)
\]
that  derives $\goal(\vec c)$, based
on previously derived facts $P^-$.
Note that facts over the auxiliary relations $R_{q}$ can only be generated from
guarded  query generation rules.
Thus $P^-$ consists of
facts $R_{q_j}(\vec c_j)$ which are each generated by applying a
guarded query generation rule to a  set of facts $F_j$ 
where the $F_j$ include a guard fact $G_j$ over  $\frakA$.
We will be able to  conclude that $\goal(\vec c)$ is derived from $\frakA \wedge \Sigma$, using the definition of the guarded query generation rules and the goal rules,
assuming that we can conclude that 
each set of  facts $F_j$ is derived from $\frakA \wedge \Sigma$.
But  each fact in $F_j$  must  have been generated from $\frakA$ in $P_Q$
by applying derived guarded rules.
Thus by definition of these rules, each of them are a consequence of $\frakA \wedge \Sigma$.

In summary,  all of the facts that lead to the firing of $\sigma$
are consequences of $\frakA$ and $\Sigma$.

We now turn to the other direction, showing
that if $Q(\vec c)$ is entailed by $\frakA \wedge \Sigma$, then 
$\goal(\vec c)$ is entailed when $P_Q$ is applied to $\frakA$.
We know that $Q(\vec c)$ holds in $\frakB=\bigcup_X \chase_\Sigma(\frakA_X)$
defined above.  We thus have a homomorphism
$h$ from $Q(\vec c)$ into $\frakB$.
Let $h_Q$ be the image of the atoms in $Q$ under $h$.
 Then $h_Q=  \bigcup_{i \leq n} F_i$, where $F_i$
lies in  $\chase_\Sigma(\frakA_{G_i})$
for a guarded set $G_i$ in $\frakA$.

Let CQ $q_j(\vec c_j)$ be obtained from $F_j$ by turning each element outside of 
$\frakA$ into an existentially quantified variable and keeping
the elements within $\frakA$ as constants.
 By the definition of
$\chase_\Sigma(\frakA_{G_j})$,  we have that $q_j$ is entailed by the facts
over the guarded set $G_j$ using $\Sigma$. 
Thus we have a corresponding guarded query generation rule with $R_{q_j}$ in the head.
 By our prior results on the atomic case, each fact in $G_j$  is entailed
 by $P_{Q}$.
Combining these last two statements we see that $R_{q_j}(\vec c_j)$ is entailed
from $\frakA$ and $P_{Q}$.

Since there is a   homomorphism  of $Q$ to the union of atoms in each $q_j$, we  see
that the conjunction of the $q_j$ entails $Q$. Thus we have
a  corresponding goal rule  in $P_{Q}$:
\[
 \bigwedge_j R_{q_j} \rightarrow \goal(\vec x)
\]
Since facts matching the hypotheses of this rule are derived from $P_{Q}$
on $\frakA$, firing
this last rule allows us to conclude that $\goal(\vec c)$ is entailed from
$P_Q$ on $\frakA$ as required.
\end{proof}

\myparagraph{Frontier-guarded TGDs}
We now generalize the result about rewriting certain answers to frontier-guarded TGDs.
By a frontier-guarded rule in a Datalog program we mean a rule
whose body contains an atomic formula that guards all variables
that appear also in the head.
A \emph{Frontier-guarded Datalog program} is a Datalog program in
which each rule is frontier-guarded.

\begin{theorem} \label{thm:fgcert}
For every set $\Sigma$ of frontier-guarded TGDs, and for every
answer-guarded conjunctive query $Q(\textbf{x})$, 
one can effectively find a frontier-guarded Datalog program $P$ such
that the output  of $P$ on any structure $\frakA$ is the same as the 
certain answers to $Q$ on $\frakA$.
\end{theorem}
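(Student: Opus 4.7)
The plan is to extend the proof of Theorem \ref{thm:guardedrewrite} from the guarded setting to the frontier-guarded one. The target Datalog program $P$ will consist of three types of frontier-guarded rules: (i) derived full frontier-guarded TGDs over the original signature $\sigma$; (ii) \emph{frontier-guarded query generation rules} of the form $\phi(\vec x, \vec y) \rightarrow R_q(\vec x)$ introducing intensional predicates $R_q$ for selected sub-CQs $q$ derived from $Q$, where $\vec x$ is frontier-guarded in $\phi$; and (iii) goal rules $\bigwedge_j R_{q_j}(\vec x_j) \rightarrow \goal(\vec x)$ whenever $\bigwedge_j q_j$ entails $Q(\vec x)$. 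Frontier-guardedness of each rule is guaranteed by construction. The rules themselves are collected by enumerating candidates of bounded size and testing entailment by $\Sigma$, which is decidable by Theorem \ref{thm:gnfsat}.

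The structural tool is Lemma \ref{lem:chasesquid}: the chase $\chase_\Sigma(\frakA)$ is a squid-extension of $\frakA$. Soundness of $P$ follows because each rule is a consequence of $\Sigma$. For completeness, suppose $\vec c$ is a certain answer of $Q$, so that $Q(\vec c)$ holds in $\chase_\Sigma(\frakA)$ via some homomorphism $h$. The squid structure forces the image of each atom of $Q$ under $h$ to lie either in $\frakA$ or in a single tentacle; $h$ thereby partitions the atoms of $Q$ into pieces $q_j$ whose free variables, intersected with $\adom(\frakA)$, form a guarded subset of $\frakA$. Each piece $q_j$ is locally entailed from the frontier-guard fact and $\Sigma$, giving rise to a frontier-guarded query generation rule firing and producing $R_{q_j}(\vec c_j)$. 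A matching goal rule then derives $\goal(\vec c)$, paralleling the completeness argument of Theorem \ref{thm:guardedrewrite}.

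The main obstacle, and the reason the frontier-guarded case is more delicate than the guarded one, lies in ensuring that finitely many rules suffice and that they can be effectively computed. For guarded TGDs, the body of a candidate rule is bounded in size by the arity of its guard atom, so derived full guarded rules are finite up to equivalence. For FGTGDs the body may contain arbitrarily many non-frontier variables, so there are a priori infinitely many candidate derived full FGTGDs and query generation rules. I would resolve this by proving a bounded-base lemma for FGTGDs: any derivation in the chase of an atomic fact, or of a $q_j$-match with parameters in $\adom(\frakA)$, can be replaced, using the squid structure of the chase together with a specialization argument in the spirit of Lemma \ref{lem:special}, by a derivation whose body has size bounded in terms of $|Q|$ and the maximum arity and frontier size occurring in $\Sigma$. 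Correctly formulating and proving this bounded-base lemma is the technical content missing from the original argument of \cite{bagetconf} that this revision is intended to repair.
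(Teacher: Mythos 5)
You have correctly identified the overall architecture (reduce to an atomic query, collect derived full rules, exploit the squid structure of the chase via Lemma~\ref{lem:chasesquid}) and, more importantly, the precise point where the guarded argument of Theorem~\ref{thm:guardedrewrite} breaks down: derived full frontier-guarded TGDs can have bodies with unboundedly many non-frontier variables, so finiteness and effectiveness of the rule set are not automatic. But your proposal stops exactly there. You write that ``correctly formulating and proving this bounded-base lemma is the technical content missing,'' which is an admission that the central step is absent --- and that step is the whole difficulty. Moreover, the version you sketch is not the right statement: a fact over a guarded subset of $\adom(\frakA)$ may be entailed only through a long chain of rule firings, each pulling in fresh elements of $\frakA$, so no single rule of bounded body size entails it directly; a lemma asserting that ``any derivation can be replaced by a derivation whose body has size bounded in terms of $|Q|$ and $\Sigma$'' is false if read as a one-shot replacement. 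What the paper actually proves (Lemma~\ref{lem:boundedbasefgtgd}) is a fixpoint statement: if every induced substructure on at most $k$ elements is already fact-saturated with respect to the \emph{enriched} theory $\Sigma'_k$, then the whole structure is fact-saturated. Its proof takes the union of the chases of all $\le k$-element substructures, observes that this union is a squid-extension, and verifies $\Sigma'_k$ by splitting the image of a rule body into a main tentacle and side tentacles, with Lemma~\ref{lem:boundedbaseguardedly} supplying the reduction from full saturation to guarded saturation.

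Second, your auxiliary vocabulary is too small to make that argument go through. You introduce predicates $R_q$ only for sub-CQs ``derived from $Q$,'' mirroring the guarded case. In the frontier-guarded case the program must also saturate $\frakA$ with ordinary facts, and that saturation itself needs to route information about what holds inside a side tentacle back into $\frakA$ as a single atomic fact over a guarded set. The paper therefore adds (i) guard-extension predicates $R_S$ for every relation and every subset of its positions, so that every guarded set is witnessed by one atom of bounded arity, and (ii) query-extension predicates $R_q$ for \emph{all} answer-guarded CQs with at most $k$ variables over the guard-extension vocabulary, not only those arising from $Q$. These are what keep each individual Datalog rule of bounded size while the fixpoint captures unboundedly large entailments. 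Without them the saturation argument cannot be closed and the finiteness of the rule set is not established, so the proposal as written has a genuine gap at its core.
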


  We can assume without loss of generality that $Q$ is an atomic query
  (by extending $\Sigma$ with an extra ``answer rule'' containing the
  query. This rule is frontier-guarded because $Q$ is answer-guarded).
We will also assume that for each relation $R$ of arity $n$ and each
subset $S=i_1 \ldots i_k$ of $\{1 \ldots n\}$  there is a new ``guard extension predicate''
$R_S$ of arity $k$, and  dependencies:
\[
R(x_1 \ldots x_n) \rightarrow  R_S(x_{i_1} \ldots x_{i_k})
\]
and
\[
R_S(x_{i_1} \ldots x_{i_k}) \rightarrow \exists \vec x ~ R(\vec x)
\]
where $\vec x$ denotes $x_j$ for $j \notin S$.

We can obviously add such dependencies, and a rewriting using these predicates
can be replaced with a rewriting using the original predicates.
Thus for every guarded set  in the original vocabulary, we have an atomic predicate
that holds of exactly those elements in the vocabulary with guarded extensions.


We will create new predicate symbols for certain queries, as we did in
Theorem \ref{thm:guardedrewrite}. 
  Let $k$ be the maximal number of variables  in a TGD of $\Sigma$.
For an answer-guarded 
  conjunctive query $q(x_1 \ldots x_j)$ in the guard extension vocabulary above, 
let
$R_q(x_1 \ldots x_j)$ be a relation symbol, a ``query extension predicate''. For any number $k$, 
let $\fgtgd_k$ be all the frontier-guarded TGDs in the signature extending
$\Sigma$ with each $R_S$ and each $R_q$ for each answer-guarded $q$ in the extension vocabularies
above, with the TGD having at most $k$-variables.
Let $\Sigma'_k$ be all TGDs in $\fgtgd_k$ that are consequences of
\[
  \Sigma \cup \{ \forall \vec x ~ R_q \leftrightarrow q  \mid  
                 q \text{ answer-guarded CQ with $\leq k$ variables} \}.
\]

For a structure $\frakA$, let $C_\frakA$ be the set of elements of $\frakA$ 
named by  constant symbols.

We now convert the full TGDs in $\Sigma'_k$ to a Datalog program, in the same
way as we did in  Theorem \ref{thm:gfatomicrewrite} and Theorem \ref{thm:guardedrewrite}.
That is,
We let $P_{\Sigma,Q}$ be a Datalog program with all  full rules in $\Sigma'_k$, over
a copy of the signature of $\Sigma$, along with the additional extension predicates
$R_S$ and $R_q$ with all  predicates being intensional. In addition
we have rules stating that every relation of $\Sigma$ is contained in its copy.
We will show that $P_{\Sigma, Q}$ is the desired rewriting.
Since running $P_{\Sigma,Q}$ is the same as running all the full rules in $\Sigma'_k$,
up to the difference between a fact and its copy,
this will involve arguing that if we start with a structure $\frakA$ and add all the facts
produced by the full rules $\Sigma'_k$, then we get a structure that is fact-saturated
with respect to $\Sigma$. We
will thus need some characterizations of when a structure is fact-saturated.
We start with a lemma that holds for arbitrary frontier-guarded TGDs.

 We say that $\frakA$ is
  \emph{guardedly fact-saturated} (with respect to a set of TGDs $\Sigma$) if every
possible
  fact
  over $\adom(\frakA)\cup C_\frakA$ entailed by the facts of $\frakA$ together with
  $\Sigma$, \emph{such that the values occurring in the fact form a  guarded set in
  $\frakA$}, belongs to $\frakA$.
 In the absence of constants, 
guardedly fact-saturated means that the structure captures every entailed fact over $\adom(\frakA)$ guarded by
an existing ground atomic formula of $\frakA$.

We then show:
\begin{lemma} \label{lem:boundedbaseguardedly}
If  structure is guardedly fact-saturated with respect to a set of frontier-guarded
TGDs $\Sigma$, then it is fact-saturated with respect to $\Sigma$.
\end{lemma}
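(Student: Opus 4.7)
The plan is to prove the contrapositive: assuming $\frakA$ is not fact-saturated with respect to $\Sigma$, I produce an entailed fact whose values form a guarded set in $\frakA$ but which is missing from $\frakA$, thereby showing that $\frakA$ is not guardedly fact-saturated. Concretely, the failure of fact-saturation supplies a fact $F = R(\vec{a})$ with $\vec{a} \subseteq \adom(\frakA)\cup C_\frakA$ that is entailed by the facts of $\frakA$ together with $\Sigma$ but does not belong to $\frakA$.

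The key tool is the chase $\frakB = \chase_\Sigma(\frakA)$, which serves as a universal model for $\Sigma$ extending $\frakA$. Because $F$ uses only values from $\adom(\frakA)\cup C_\frakA$ and is entailed, $F$ must hold in $\frakB$; combined with $F \notin \frakA$, this means $F$ lies in $\frakB \ominus \frakA$.

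Now I invoke Lemma~\ref{lem:chasesquid}: since $\Sigma$ consists of frontier-guarded TGDs, $\frakB$ is a squid-extension of $\frakA$, and $\frakB \ominus \frakA$ is the union of tentacles $\frakB_X$ indexed by guarded subsets $X$ of $\frakA$. The fact $F$ therefore belongs to some single tentacle $\frakB_{X_0}$. By the squid-decomposition property, $(\adom(\frakB_{X_0}) \cap \adom(\frakA))\setminus C_\frakA \subseteq X_0$, so every non-constant value appearing in $F$ lies in $X_0$. Because any subset of a guarded set is guarded in $\frakA$ (the guard atom for $X_0$ also guards any sub-tuple), the values of $F$ form a guarded set in $\frakA$. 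This produces exactly the witness needed to contradict guarded fact-saturation.

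The main thing to verify, rather than a serious obstacle, is that the tentacle containing $F$ really does trap all the $\adom(\frakA)$-values of $F$ inside a single $X_0$; this is immediate from the disjointness clause of the squid decomposition (distinct tentacles overlap on $\adom(\frakA)$ only inside $X\cap X'\cup C$), since the values of $F$ must all belong to one tentacle's active domain. Everything else is a direct application of the universal-model property of the chase and Lemma~\ref{lem:chasesquid}; no additional model-theoretic machinery is required.
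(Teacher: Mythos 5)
Your overall strategy---pass to $\chase_\Sigma(\frakA)$ as a universal model and read off guardedness from its squid structure---is the right one, but the step that localizes $F$ inside a tentacle is broken. By the paper's definition, $\frakB \ominus \frakA$ is obtained from $\frakB$ by deleting \emph{every} fact all of whose values lie in $\adom(\frakA)$; it is not the set difference of the two fact sets. A witness $F$ to the failure of fact-saturation typically has all of its values in $\adom(\frakA)$ (the only exception is when it mentions a constant lying outside the active domain), and such an $F$ is therefore removed in $\frakB\ominus\frakA$ and belongs to no tentacle $\frakB_{X_0}$. So the appeal to clause (ii) of the squid decomposition does not apply to $F$ in precisely the central case. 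For instance, with the frontier-guarded full TGD $\forall xyz\,\bigl(R(x,y)\land S(y,z)\land U(x,z)\to T(x,z)\bigr)$, the newly entailed fact $T(a,c)$ lives entirely over $\adom(\frakA)$ and is not a fact of $\frakB\ominus\frakA$, so your argument never gets to say anything about it.

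The conclusion you want is still available from Lemma~\ref{lem:chasesquid}, but via clause (i) of the squid-extension definition rather than clause (ii): every set of elements of $\adom(\frakA)$ that is guarded in $\frakB$ is already guarded in $\frakA$. The set of $\adom(\frakA)$-values of $F$ is guarded in $\frakB$ by the fact $F$ itself, hence guarded in $\frakA$; since guardedness ignores the constant-denoting elements, this is exactly what guarded fact-saturation needs to force $F\in\frakA$, completing your contrapositive. (The paper argues in the contrapositive-free direction but with the same content: it proves by induction on the chase that every generated fact has its $\adom(\frakA)$-elements contained in the image of the frontier of the firing rule, which is covered by a guard atom produced at an earlier stage.) With that one substitution your proof goes through; as written, the tentacle argument only covers the marginal sub-case where $F$ involves a non-active constant.
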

Note the difference from Lemma \ref{lem:boundedbaseguardedatomic}. There
the sufficient condition for $\frakA$ to be saturated was that  $\frakA$ was closed under 
\emph{applying a saturation procedure to each guarded set in isolation}. Here our sufficient
condition is that saturating $\frakA$ in its entirety does not miss  any fact guarded over
$\frakA$.

\begin{proof}
Assume $\frakA$ is guardedly fact-saturated.
We consider
$\chase_\Sigma(\frakA)$, and show that any fact n it whose elements are either in $\frakA$ or are named by constants
must already be in $\frakA$.  This is intuitive when we consider that $\chase_\Sigma(\frakA)$ is a squid-extension,
with every fact in the tentacles generated by a guarded set in $\frakA$.

Formally, we prove the following stronger claim: for every fact $F$ in  $\chase_\Sigma(\frakA)$,
the set of elements in $F$ within $\adom(\frakA)$ is guarded in $\frakA$.
If the claim is true, then a fact that used only elements in $\adom(\frakA)$ union constants,
must be guarded, and then since $\frakA$ is  guardedly fact-saturated such a fact must already be in $\frakA$.
The claim  is proven by induction on the generation of $\chase_\Sigma(\frakA)$.
Considering an  application of a rule $\sigma$ that produced a fact $F$, there is a guard
atom matching the body of the frontier-guarded of $\sigma$, produced at an earlier stage and
 containing all the elements of $F$ that are in $\adom(\frakA)$. Now by induction we are done.
\end{proof}

  We now claim the following ``bounded base lemma'' which differs
from Lemma \ref{lem:boundedbaseguardedatomic}  and Lemma \ref{lem:boundedbaseguardedly}
by considering small subsets, but not guarded ones:
\begin{lemma} \label{lem:boundedbasefgtgd}
Letting $k$ be the maximal number of variables in a TGD of $\Sigma$,
and let  $\frakA$ be  a structure such that for each subset $X$ of the domain
of $\frakA$ with $|X| \leq k$, the induced substructure $\frakA_X$ is fact-saturated with respect to $\Sigma'_k$.
Then $\frakA$ is fact-saturated with respect to $\Sigma'_k$.
\end{lemma}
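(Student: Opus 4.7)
The plan is to argue by contradiction, reducing the general case to the already-established guarded version via Lemma \ref{lem:boundedbaseguardedly}. Suppose $\frakA$ is not fact-saturated with respect to $\Sigma'_k$. Since every TGD in $\Sigma'_k$ is frontier-guarded, the contrapositive of Lemma \ref{lem:boundedbaseguardedly} yields a fact $F$ that is entailed by $\frakA\wedge\Sigma'_k$, does not belong to $\frakA$, and whose non-constant elements form a guarded subset $Y$ of $\frakA$. Because every relation symbol in the extended signature (the original relations of $\Sigma$ together with the guard-extension predicates $R_S$ and query-extension predicates $R_q$) has arity at most $k$, the set $Y$ has at most $k$ non-constant members, so $\frakA_Y$ falls within the hypothesis of the lemma.

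Next I would exploit the tentacle structure of $\chase_{\Sigma'_k}(\frakA)$. By Lemma \ref{lem:chasesquid} this chase is a squid-extension of $\frakA$, and since $F$ lies in the chase but not in $\frakA$, it must belong to some tentacle $\frakB_{Y_0}$ associated with a guarded subset $Y_0$ of $\frakA$, with the $\adom(\frakA)$-elements of $F$ contained in $Y_0$. I would focus on the single rule firing that added $F$ to that tentacle: the rule has at most $k$ variables, its frontier is routed through the tentacle's guard (hence into $Y_0$), and its body-only variables map either to elements of $\adom(\frakA)\cup C_\frakA$ or to fresh elements in various tentacles. I would then define $X'\subseteq\adom(\frakA)\cup C_\frakA$ as the union of $Y_0$, the $\adom(\frakA)$-images of all body variables of the firing, and the guarded roots of the tentacles supplying fresh body witnesses. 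The entire firing can then be replayed inside $\chase_{\Sigma'_k}(\frakA_{X'})$, so $\frakA_{X'}\wedge\Sigma'_k\models F$; equivalently, one may package the derivation as a compressed frontier-guarded TGD of $\leq k$ variables which, being a consequence of $\Sigma\cup\{R_q\leftrightarrow q\}$, already belongs to $\Sigma'_k$ and fires directly on $\frakA_{X'}$. Once $|X'|\leq k$ is verified, the hypothesis of the lemma applied to $\frakA_{X'}$ gives $F\in\frakA_{X'}\subseteq\frakA$, contradicting $F\notin\frakA$.

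The hard part will be bounding $|X'|$ by $k$. A single frontier-guarded firing can have body-only atoms touching fresh elements drawn from several distinct tentacles, and a naive count of their guarded-set roots can inflate $|X'|$ well above $k$. Two constraints should rescue the bound: first, the rule has only $k$ variables in total, so only $\leq k$ elements of $\chase_{\Sigma'_k}(\frakA)$ are touched in the firing, and hence only $\leq k$ tentacles can be probed and only $\leq k$ $\adom(\frakA)$-elements can enter $X'$ directly; second, by the squid property any $\adom(\frakA)$-element that appears together with a fresh element in a body atom is forced to lie in the root of that fresh element's tentacle, so tentacle roots and direct body images collapse rather than accumulate. Carrying out this bookkeeping carefully, together with the closure of $\Sigma'_k$ under $\leq k$-variable frontier-guarded consequences so that the compressed TGD is indeed in $\Sigma'_k$, is the delicate step.
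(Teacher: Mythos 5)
There is a genuine gap, and it sits exactly where you flag it, but it is worse than a bookkeeping problem. Your plan is to trace the last rule firing that produces the offending guarded fact $F$ in $\chase_{\Sigma'_k}(\frakA)$ and replay it inside $\chase_{\Sigma'_k}(\frakA_{X'})$ for a small $X'$ containing the tentacle roots. The replay fails because the body facts of that firing which live in side tentacles were themselves produced by earlier firings whose bodies are only \emph{frontier}-guarded: a body such as $G(x)\wedge R(x,y)\wedge R(y,z)$ with frontier $\{x\}$ may match using elements $y,z$ of $\adom(\frakA)$ that lie far outside the tentacle's root $G_i$ and outside $X'$. Consequently a tentacle of $\chase_{\Sigma'_k}(\frakA)$ rooted at $G_i$ is \emph{not} reproduced by $\chase_{\Sigma'_k}(\frakA_{G_i})$, and the true support of the derivation of $F$ is the entire recursive derivation tree, whose footprint in $\adom(\frakA)$ is unbounded --- no choice of a $\leq k$-element (or even boundedly large) $X'$ makes $\frakA_{X'}\wedge\Sigma'_k\models F$ in general. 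Your two ``rescuing constraints'' only control the elements touched by the final firing, not by the firings it depends on, so they cannot close this.

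The missing idea is the one the paper's proof is built around: it does not localize a derivation in $\chase_{\Sigma'_k}(\frakA)$ at all, but instead directly constructs the candidate witness $\frakB=\bigcup_{|X|\leq k}\chase_{\Sigma'_k}(\frakA_X)$ (which adds no new facts over $\adom(\frakA)\cup C_\frakA$ by the small-substructure saturation hypothesis) and then proves $\frakB\models\Sigma'_k$. The crux of that verification is the role of the query-extension predicates $R_q$: when a rule body spreads over several tentacles, the contribution of each side tentacle $T_i$ is compressed into a single answer-guarded CQ $q_i$ with $\leq k$ variables over the root $G_i$, the derived fact $R_{q_i}(\vec K_i)$ is guarded in $\frakA$ and hence already present in $\frakA$ by the hypothesis, and the rule is then folded into a consequence of $\Sigma'_k$ whose body lives entirely in the main tentacle. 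This compression step is what substitutes for the unbounded-support problem you run into; without it (or an equivalent induction that carries $R_q$-summaries of tentacles along the chase), neither the bound $|X'|\leq k$ nor the replay can be repaired. Your opening reduction via Lemma~\ref{lem:boundedbaseguardedly} to a guarded fact $F$ is sound, but everything after it needs to be replaced by an argument of this shape.
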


A  lemma similar to Lemma \ref{lem:boundedbasefgtgd}
 occurs in Marnette's unpublished work  \cite{marnette} (Marnette's ``bounded
depth property'').

\begin{proof}
Suppose that every
substructure $\frakA_X$ of $\frakA$ with $|X|\leq k$ is fact-saturated. Let 
$\chase_{\Sigma'_k}(\frakA_X)$ 
be the result of the chase with $\Sigma'_k$ on $\frakA_X$.
Note that by the second property of the chase mentioned at the beginning
of the section,  all the facts over $\frakA_X$ in $\chase_{\Sigma'_k}(\frakA_X)$
are entailed by $\Sigma$ and $\frakA_X$,
Since  $\frakA_X$ is fact-saturated, we deduce that $\chase_{\Sigma'_k}(\frakA_X)$ does not
contain any additional facts over the set $X$ plus the set of elements
named by constant symbols.  We now define $\frakB$ to be the
union of all these $\chase_{\Sigma'_k}(\frakA_X)$.
By
construction, $\frakB$ extends $\frakA$ and contains no new guarded facts over
$\adom(\frakA)$ and the elements named by constant symbols. 
Further, note that  $\adom(\chase_{\Sigma'_k}(\frakA_X))$
for different $X$'s overlap only on $\adom(\frakA)$ and
the elements named by constant symbols. Using Lemma \ref{lem:chasesquid}
we can see that $\frakB$ represents
a squid-extension of $\frakA$, with each tentacle contained
in one of the $\chase_{\Sigma'_k}(\frakA_X)$.

We  will show that $\frakB\models\Sigma'_k$. If we can show this, 
 it would follow that any fact over $\frakA$ entailed by $\Sigma'_k$ must already
lie in $\frakB$. And since $\frakB$ is the union of structures fact-saturated over $\frakA$, any
such fact must lie in $\frakA$.
So we would have proven that $\frakA$ is  fact-saturated, as required.

Consider a
frontier-guarded TGD $\sigma$ in $\Sigma'_k$ of the form
$\forall\textbf{x}(\phi(\textbf{x})\to
\exists\textbf{y}\psi(\textbf{x,y}))$ that is not satisfied. 
and a map $h:\{\textbf{x}\}\to \adom(\frakB)$.
We need to show that $h$ extends to a homomorphism of $\psi$.

Let $\vec n_0$ be the $h$-image of the frontier variables of $\phi$, and $\vec H$ be the 
entire $h$-image.
$\vec H$ decomposes into sets $\vec H_i$ in the different tentacles $T_i$.
The set $\vec n_0$ is a guarded set, so it must lie in one tentacle $T_0$, which
we call the ``main tentacle'', while the other $T_i$ are denoted as ``side tentacles''.

Fix a $\vec H_i$ lying in  side tentacles $T_i$  and let $G_i$ be a guarded set that forms the intersection
of $T_i$ and $\frakA$.  Let $q_i$ be a CQ formed from taking the image under $h$ of all atoms over $\vec H_i$
in $\phi$, with elements of $\vec H_i$ transformed into variables, existentially quantifying over  any
variables whose $h$-image does not lie in $G_i$. We also add on to $q_i$ an atom corresponding to the guard atom of $G_i$, existentially
quantifying away variables corresponding to element of $G_i$ not in $\vec H_i$.
Thus $q_i$ is an answer-guarded CQ with at most $k$ variables that holds of the elements $\vec K_i=\vec H_i \cap G_i$.
Since  these elements lie in tentacle $T_i$, which in turn lies inside $\chase_{\Sigma'_k}(\frakA_X)$ for some $X$, and this
latter structure satisfies $\Sigma'_k$, 
we know that $R_{q_i}$ must hold of $\vec K_i$ in $\frakB$.
Recalling that $\frakA$ is fact-saturated for small sets and that $\vec K_i$ is
a small subset of  $\frakA$, 
we see that $R_{q_i}$ must have already held of $\vec K_i$ in $\frakA$.

Let $G_0$ be a guarded set consisting of  the intersection  of the elements in the main tentacle $T_0$ and $\adom(\frakA)$.
Let $q^*_0$ be a Boolean CQ with variables for all elements of the image $\vec H$.
We will have  atoms
corresponding to each fact in the guard extension signature over $\vec h$ that lie in the image of $h$,
and the free variables will be those corresponding to elements in $\vec H$ intersected with $G_0$.
$q^*_0$ has at most $k$ variables, and it is answer-guarded, since the elements of $G_0$ will be guarded
by a guard-extension predicate.
Thus we have a query extension predicate $R_{q^*_0}$.

Let $q_0$ be a CQ with variables for all elements that lie in the intersection of $\vec H$ and the domain
of $\frakA$.
$q_0$ has atoms corresponding to facts over this set in $\frakB$ and 
also facts $R_{q_i}$ that hold on atoms in the side tentacles. The free variables, as in $q^*_0$ will
be the variables corresponding to elements of $G_0$.
$q_0$ is also answer-guarded, although it is not in the guard extension vocabulary.
The following dependency is a consequence of $\Sigma'_k$:
\[
q_0(\vec x_0) \rightarrow R_{q^*_0}(\vec x_0)
\]
Letting $\vec g_0$ be a binding of the variables corresponding to  $G_0$ with the associated elements,
we have that $R_{q^*_0}(\vec g_0)$ is entailed by $\Sigma'_k$ and $\frakA$.
Again, appealing to the fact that small subset of $\frakA$ are fact-saturated for $\Sigma'_k$, keeping
in mind that the intersection of $\vec H$ and the domain of $\frakA$ is small, we conclude that
$R_{q^*_0}$ holds of $\vec g_0$ in $\frakA$.

Consider the subquery $\phi_0$ of $\phi$ formed by removing
all atoms that are mapped by $h$ into $T_0$ 
adding the  fact $R_{q^*_0}$ on the variables of $\phi$ mapped by $h$ into $G_0$.
Letting $h'$ be the restriction of $h$ to these variables, we see that $h'$ is a homomorphism
of  $\phi_0$.
Letting $\sigma'$ be the analogous modification of $\sigma$:
\[
\forall\textbf{x}(\phi_0(\textbf{x})\to
\exists\textbf{y}\psi(\textbf{x,y}))
\] 
Then $\sigma'$ is entailed by $\Sigma'_k$. Since $T_0$ is contained in
some $\chase_{\Sigma'_k}(\frakA_X)$ that satisfies $\Sigma'_k$, $h'$ extends
to a homomorphism of $\psi$.
This clearly serves as an extension of $h$, and thus we have completed the proof of 
 Lemma \ref{lem:boundedbasefgtgd}.
\end{proof}

We are now ready to prove Theorem \ref{thm:fgcert}.

\begin{proof}[Proof of Theorem \ref{thm:fgcert}]
To show that $P_{\Sigma,Q}$ is the desire rewriting, we
start with a structure $\frakA$ and let $\frakA^+$ be the result
of running $P_{\Sigma,Q}$ on it. Since it is clear that running
$P_{\Sigma,Q}$ does not produce facts that are not entailed, it is enough to show that
if $Q$ is entailed by $\frakA_0$ and $\Sigma$, the copy of $Q$ (over the intentional
signature of $P_{\Sigma,Q}$) holds
in $\frakA^+$. Since $P_{\Sigma, Q}$ is, up to the distinction between a relation and its copy,
the same as the full rules in $\Sigma'_k$, this boils down to showing that saturating
with the full rules of $\Sigma'_k$ gives a structure fact-saturated for $\Sigma$.

To see this, let $\frakA^+$ be formed by closing $\frakA$ under all full rules in $\Sigma'_{k}$.
We claim $\frakA^+$ is fact-saturated for $\Sigma$. By
 Lemma \ref{lem:boundedbasefgtgd} it suffices to show
that given a subset $B$ of size at most $k$
of size at most $k$, the restriction of  $\frakA^+$ to $B$ is fact-saturated  for $\Sigma$.
Clearly it suffices to show that this structure is fact-saturated for $\Sigma'_k$.

By Lemma \ref{lem:boundedbaseguardedly} 
(which holds for all frontier-guarded
TGDs, and hence in particular to $\Sigma'_k$), 
it is enough to show that $B$ contains every fact  entailed by $\Sigma'_k$ 
that is over a set guarded in $\frakA^+$.
Let $\{B_1(\vec c_1) \ldots B_j(\vec c_j)\}$ be all the facts in the initial structure $\frakA$ over $B$.
Consider a fact $F(\vec c)$ with $\vec c$ contained in a guarded subset of $B$ such that
$F(\vec c)$ is entailed by $B$ under $\Sigma'_k$ but is not in $B$. But
then the rule $B_1(\vec x_1) \ldots B_j(\vec x_j) \rightarrow F(\vec x)$ is in
$\Sigma'_{k}$, and it is a full rule. The associated Datalog rule, formed by just switching to the copy predicates used
in $P_{\Sigma, Q}$, is thus
 in $P_{\Sigma, Q}$. Thus applying this rule we get that $F(\vec c)$ holds in $\frakA^+$ as required.
This completes the proof of Theorem \ref{thm:fgcert}.
\end{proof}

\myparagraph{Consequences for deciding FO-rewritability}
In \cite{bbo}, a fragment of Datalog, denoted \emph{GN-Datalog} was defined, and it was
shown that for this fragment one can decide whether a query is equivalent to a first-order
query (equivalently, as shown in \cite{bbo}, to some query obtained by unfolding the Datalog rules
a finite number of times).
Since GN-Datalog contains frontier-guarded Datalog, we can couple
the decision procedure from \cite{bbo} with the algorithm in Theorem \ref{thm:fgcert}
to obtain decidability. In fact, we can obtain the result for general conjunctive queries, not
just answer-guarded ones:

\begin{corollary} \label{cor:decidefo}
FO-rewritability of conjunctive queries $Q$ under sets of
frontier-guarded TGDs $\Sigma$ is decidable.
\end{corollary}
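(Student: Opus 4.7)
The plan is to combine Theorem~\ref{thm:fgcert} with the decidability of FO-rewritability (boundedness) for GN-Datalog established in~\cite{bbo}. The answer-guarded case proceeds by direct composition: given an answer-guarded CQ $Q$ and frontier-guarded $\Sigma$, Theorem~\ref{thm:fgcert} effectively produces a frontier-guarded Datalog program $P$ whose output on any finite structure equals the certain answers of $Q$ under $\Sigma$. Since frontier-guarded Datalog is a sub-fragment of GN-Datalog, one may then invoke the decision procedure of~\cite{bbo} to test whether $P$ is equivalent to a first-order query, which is precisely the condition that $Q$ is FO-rewritable under $\Sigma$.

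It remains to reduce the general (not necessarily answer-guarded) case to the answer-guarded case. Given a CQ $Q(\vec{x})$ with free variables $\vec{x}=x_1,\ldots,x_n$, I introduce a tuple $\vec{c}=c_1,\ldots,c_n$ of fresh constant symbols and consider the Boolean CQ $Q' := Q[\vec{x}/\vec{c}]$ in the signature extended with $\vec{c}$. Every Boolean CQ is trivially answer-guarded, so $Q'$ falls under the case above. My claim is that $Q$ is FO-rewritable under $\Sigma$ in the original signature if and only if $Q'$ is FO-rewritable under $\Sigma$ in the extended signature. Indeed, for any finite structure $\frakA$ in the extended signature in which $\vec{c}$ is interpreted as $\vec{a}$, the Boolean query $Q'$ is a certain answer on $(\frakA, \Sigma)$ precisely when $\vec{a}$ is a certain answer of $Q$ on the reduct of $\frakA$ to the original signature under $\Sigma$, since $\Sigma$ does not mention the fresh constants $\vec{c}$.

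The main point requiring care is the equivalence between FO-rewritabilities. In the forward direction, any FO-rewriting $\phi(\vec{x})$ of $Q$ immediately yields $\phi(\vec{c})$ as an FO-rewriting sentence of $Q'$. Conversely, given an FO-rewriting sentence $\phi'$ of $Q'$ in the extended signature, substituting $\vec{x}$ back for $\vec{c}$ in $\phi'$ yields an FO formula $\phi(\vec{x})$ that rewrites $Q$: for a structure $\frakA$ in the original signature and a tuple $\vec{a}$, the extended structure obtained by setting $\vec{c} \mapsto \vec{a}$ satisfies $\phi'$ if and only if $\vec{a}$ is a certain answer of $Q$ on $(\frakA, \Sigma)$, which is equivalent to $\frakA \models \phi(\vec{a})$ by construction. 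Combining the reduction with the answer-guarded case yields the desired decision procedure.
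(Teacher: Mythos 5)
Your proposal is correct and follows essentially the same route as the paper: apply Theorem~\ref{thm:fgcert} to get a frontier-guarded Datalog rewriting, invoke the GN-Datalog FO-rewritability decision procedure of the cited work, and reduce the general case to the Boolean (hence answer-guarded) case by replacing free variables with fresh constants. Your write-up is somewhat more explicit than the paper's about why FO-rewritability transfers back and forth under the constant substitution, but the argument is the same.
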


\begin{proof}
In the case where $Q$ is a boolean conjunctive query, we use the technique above:
obtain a frontier-guarded Datalog rewriting and then checking whether it is equivalent
to a first-order formula using  the result of \cite{bbo}.

Now consider the case where $Q$ is a general conjunctive query.
We can form a boolean CQ $Q^*$ by changing
the free-variables $x_1 \ldots x_n$ of $Q$ to constants $c_1 \ldots c_n$. Theorem \ref{thm:fgcert} implies
that we can decide whether the certain answers to $Q^*$ with respect to $\Sigma$ are first-order definable.
But the certain answers of $Q^*$ with respect to $\Sigma$ are first-order definable
if and only if the certain answers to $Q$ with respect to $\Sigma$ are first-order definable:
we can change a first-order definition of one to a first-order definition of the other
by just replacing constants with free variables or vice versa.
\end{proof}

\section{Related Work and Conclusions} \label{sec:conc}

We have investigated various problems that involve rewriting of $\gnfo$ formulas in different contexts,
building on the decidability results for $\gnfo$
established in \cite{BtCS15jacm}, and the complexity results for open-
and closed-world querying established in \cite{bbo}.

Although we did not discuss the exact
complexity of the decision problem for FO-rewritability of certain answers under frontier-guarded
TGDs,   we believe that an elementary bound can be extracted from analysis of
\cite{bbo}. 
Prior to that work, we know of no result on  deciding first-order rewritability in the setting
of general relational languages. However, for description logics, some positive results were obtained
by Bienvenue, Lutz, and Wolter \cite{deciderewrite}.
In \cite{pods13BtCLW}, it was shown that certain answers w.r.t.~a $\gnfo$ sentence
can be expressed in frontier-guarded \emph{disjunctive} Datalog. Unlike our
result for frontier-guarded TGDs, however, this characterization is not known
to imply decidability of first-order rewritability or even Datalog-rewritability.
Weakly-guarded TGDs \cite{CGK08kr} are another member of the Datalog$^{\pm}$ family
that has been shown to have attractive properties for the complexity of open-world query answering. One can show, however, that
they do not share with $\fgtgd$'s the decidability of FO-rewritability.

Here we have considered syntactically capturing restrictions of $\gnf$, and show that the corresponding target classes
for rewritings are natural. 
For description logics, some characterizations with a similar flavor have been proven by Lutz, Piro, and Wolter \cite{dlsem}.
The Unary Negation  Fragment is another fragment of FO containing
many modal and description logics which possesses the
Craig Interpolation Property and (hence) the Projective Beth
Definabiity Property   \cite{tCS11stacs}.
Interpolation and implicit definability have  also been  heavily studied within   the description logic community
\cite{lwinterpol,balderinterpol}.
Unfortunately, having the Beth Definability Property or the
Craig Interpolation Property for a stronger logic does not imply it for a weaker logic, or vice
versa.

Recently, in follow-up work~\cite{csllics14},  tight bounds on the complexity were 
found for a number of problems
considered here, including interpolation and preservation results.


\myparagraph{Acknowledgements.}
This paper is an expanded version of the conference abstract \cite{mfcs14}.
Benedikt was supported by EPSRC grant EP/H017690/1, and
ten Cate was supported by NSF Grants IIS-0905276
  IIS-1217869.
B\'ar\'any's work was done while affiliated with TU Darmstadt.

The authors gratefully acknowledge their debt to Martin Otto for enlightening
discussions. We want to thank
Maarten Marx for
helpful discussions and help in verifying the counterexamples of Section \ref{sec:interpol}. We also thank the anonymous reviewers of the Journal of Symbolic
Logic for their patient reading of the manuscript and helpful corrections.

\bibliographystyle{asl}
\bibliography{gnfo}

@inproceedings{rainworm,
 author = {Gogacz, Tomasz and Marcinkowski, Jerzy},
 title = {Red Spider Meets a Rainworm: Conjunctive Query Finite Determinacy Is Undecidable},
 booktitle = {Proceedings of the 35th ACM SIGMOD-SIGACT-SIGAI Symposium on Principles of Database Systems},
 series = {PODS '16},
 year = {2016},
 isbn = {978-1-4503-4191-2},
 location = {San Francisco, California, USA},
 pages = {121--134},
 numpages = {14},
 publisher = {ACM},
 address = {New York, NY, USA},
}

@InProceedings{onet,
  author    = {A. Onet},
  title     = {The Chase Procedure and its Applications in Data Exchange},
  booktitle = {DEIS},
  year      = {2013},
  pages     = {1-37},
}

@inproceedings{redspider,
 author = {Gogacz, Tomasz and Marcinkowski, Jerzy},
 title = {The Hunt for a Red Spider: Conjunctive Query Determinacy Is Undecidable},
 booktitle = {Proceedings of the 2015 30th Annual ACM/IEEE Symposium on Logic in Computer Science (LICS)},
 year = {2015},
 pages = {281--292},
 publisher = {IEEE Computer Society},
 address = {Washington, DC, USA},
}

@inproceedings{pods13BtCLW,
 author = {Bienvenu, Meghyn and ten Cate, Balder and Lutz, Carsten and Wolter, Frank},
 title = {{Ontology-based Data Access: A Study Through {Disjunctive Datalog}, {CSP}, and {MMSNP}}},
 booktitle = {{Proceedings of the 32nd Symposium on Principles of Database Systems}},
 series = {PODS '13},
 year = {2013},
 pages = {213--224},
 numpages = {12},
 publisher = {ACM},
 address = {New York, NY, USA}
}

@inproceedings{mfcs14,
  author    = {Vince B{\'{a}}r{\'{a}}ny and
               Michael Benedikt and
               Balder ten Cate},
  title     = {Rewriting Guarded Negation Queries},
  booktitle = {{{Mathematical Foundations of Computer Science} 2013 - {38th International Symposium}, {MFCS} 2013, {Klosterneuburg}, {Austria}, {August} 26-30, 2013.  Proceedings}},
  pages     = {98--110},
  editor    = {Krishnendu Chatterjee and
               Jir{\'{\i}} Sgall},
  series    = {Lecture Notes in Computer Science},
  volume    = {8087},
  publisher = {Springer},
  year      = {2013},
}

@article{jsl,
  author    = {Vince B{\'{a}}r{\'{a}}ny and
               Michael Benedikt and
               Balder ten Cate},
  title     = {Some Model Theory of Guarded Negation},
  journal = {Journal of Symbolic Logic}
  volume = 83,
  issue = 4,
  pages = 1307-1344,
  publisher = {Cambridge University Press},
  year      = {2018},
}

@inproceedings{csllics14,
  author    = {Michael Benedikt and
               Balder ten Cate and
               Michael {Vanden Boom}},
  title     = {Effective interpolation and preservation in guarded logics},
  booktitle = {{Joint Meeting of the Twenty-Third {EACSL} Annual Conference on Computer
               Science Logic {(CSL)} and the Twenty-Ninth Annual {ACM/IEEE} Symposium
               on Logic in Computer Science (LICS), {CSL-LICS} '14, Vienna, Austria,
               July 14 - 18, 2014}},
  pages     = {13:1--13:10},
  editor    = {Thomas A. Henzinger and Dale Miller},
  publisher = {{ACM}},
  year      = {2014}
}

@inproceedings{dlsem,
  author    = {Carsten Lutz and
               Robert Piro and
               Frank Wolter},
  title     = {{Description Logic TBoxes: Model-Theoretic Characterizations
               and Rewritability}},
  editor    = {Toby Walsh},
  booktitle     = {{IJCAI} 2011, {Proceedings of the 22nd International Joint Conference
               on Artificial Intelligence, Barcelona, Catalonia, Spain, July 16-22},
               2011},
  pages     = {983--988},
  publisher = {{IJCAI/AAAI}},
  year      = {2011}
}

@book{MarxVenema,
   publisher = {Kluwer},
   author =    {Marx, Maarten and Venema, Yde},
   title =     {{Multidimensional Modal Logic}},
   year =      1997,
}

@article{FKMP05,
  Author = {Fagin, Ronald and Kolaitis, Phokion G. and Miller, Renee J. and Popa, Lucian},
  Title = {{Data Exchange: Semantics and Query Answering}},
  Journal = {Theoretical Computer Science},
  Volume = 336,
  Number = 1,
  Pages = {89--124},
  Year = 2005
}

@phdthesis{hooglandthesis,
  author = {Hoogland, Eva},
  title  = {Definability and interpolation: model-theoretic investigations},
  school = {University of Amsterdam},
  year  = 2000
}

@article{Fagin82,
 author = {Fagin, Ronald},
 title = {Horn clauses and database dependencies},
 journal = {Journal of the ACM},
 volume = {29},
 number = {4},
 year = {1982},
 pages = {952--985}
}

@inproceedings{CM77,
  author    = {A.K.~Chandra and P.M.~Merlin}, 
  title     = {Optimal implementation of conjunctive queries in relational databases},
  booktitle = {9th {ACM} {Symposium} ~on {Theory} of {Computing}}, 
  pages     = {77--90}, 
  year      = {1977}
}

@inproceedings{Yannakakis81,
 author    = {Michalis Yannakakis},
 title     = {{Algorithms for Acyclic Database Schemes}},
 booktitle = {{Proceedings of the {Seventh International Conference on Very Large Data Bases - Volume 7}}},
 series    = {VLDB '81},
 year      = {1981},
 location  = {Cannes, France},
 pages     = {82--94},
 numpages  = {13},
 url       = {http://dl.acm.org/citation.cfm?id=1286831.1286840},
 acmid     = {1286840},
 publisher = {VLDB Endowment},
}

@article{GLS03,
 author  = {Gottlob, Georg and Leone, Nicole and Scarcello, Francesco},
 title   = {Robbers, marshals, and guards: game theoretic and logical characterizations of hypertree width},
 journal = {Journal of Computer and Systems Sciences},
 volume  = {66},
 number  = {4},
 year    = {2003},
 pages   = {775--808}
}

@TechReport{marnette,
  author = 	 {Marnette, Bruno},
  title = 	 {{Resolution and Datalog Rewriting Under Value Invention and
Equality Constraints}},
  year = 	 {2011},
  note = 	 {\url{http://arxiv.org/abs/1212.0254}},
}

@inproceedings{moshefinvar,
 author = {Vardi, Moshe Y.},
 title = {{On the Complexity of Bounded-variable Queries (Extended Abstract)}},
 booktitle = {{Proceedings of the Fourteenth ACM SIGACT-SIGMOD-SIGART Symposium on Principles of Database Systems}},
 series = {PODS '95},
 year = {1995},
 location = {San Jose, California, USA},
 pages = {266--276},
 numpages = {11},
 publisher = {ACM},
 address = {New York, NY, USA}
}

@book{AHV,
   author    = {Serge Abiteboul and Richard Hull and Victor Vianu},
   title     = {{Foundations of Databases}},
   year      = {1995},
   publisher = {{Addison-Wesley}}
}

@book{dl,
 editor = {Baader, Franz and Calvanese, Diego and McGuinness, Deborah L. and Nardi, Daniele
 and Patel-Schneider, Peter F.},
 title = {The description logic handbook},
 year = {2003},
 publisher = {{Cambridge University Press}}
}

@article{bbo,
 author = {B\'{a}r\'{a}ny, Vince and ten Cate, Balder and Otto, Martin},
 title = {Queries with Guarded Negation},
 journal = {{Procedings of the VLDB Endowment}},
 issue_date = {July 2012},
 volume = {5},
 number = {11},
 month = jul,
 year = {2012},
 issn = {2150-8097},
 pages = {1328--1339},
 numpages = {12},
 url = {http://dx.doi.org/10.14778/2350229.2350250},
 doi = {10.14778/2350229.2350250},
 acmid = {2350250},
 publisher = {{VLDB Endowment}},
}

@book{EF99,
  author = {Heinz-Dieter Ebbinghaus and J\"org Flum},
  title  = {{Finite Model Theory}},
  year   = 1999,
  publisher = {{Springer-Verlag}}
}

@article{FFG02,
  author  = {J\"org Flum and Markus Frick and Martin Grohe}, 
  title   = {Query evaluation via tree-decompositions},
  journal = {{Journal of the ACM}},
  volume  = {49},
  number  = {6},
  pages   = {716--752},
  year    = {2002}
}

@inproceedings{dataint,
 author = {Lenzerini, Maurizio},
 title = {{Data Integration: A Theoretical Perspective}},
 booktitle = {{Proceedings of the Twenty-first ACM SIGMOD-SIGACT-SIGART Symposium on Principles of Database Systems}},
 series = {PODS '02},
 year = {2002},
 location = {Madison, Wisconsin},
 pages = {233--246},
 publisher = {ACM},
 address = {New York, NY, USA}
}

@inproceedings{bagetconf,
  author    = {Jean{-}Fran{\c{c}}ois Baget and
               Marie{-}Laure Mugnier and
               Sebastian Rudolph and
               Micha{\"{e}}l Thomazo},
  title     = {{Walking the Complexity Lines for Generalized Guarded Existential Rules}},
  booktitle = {{{IJCAI} 2011, Proceedings of the 22nd International Joint Conference on Artificial Intelligence, Barcelona, Catalonia, Spain, July 16-22, 2011}},
  editor    = {Toby Walsh},
  publisher = {{IJCAI/AAAI}},
  pages     = {712--717},
  year      = {2011},
  isbn      = {978-1-57735-516-8},
  url       = {http://ijcai.org/papers11/Papers/IJCAI11-126.pdf},
}

@misc{bagettr,
  author    = {Jean-Fran\c{c}ois Baget and
               Marie-Laure Mugnier and
               Sebastian Rudolph and
               Micha{\"e}l Thomazo},
  title     = {{Complexity Boundaries for Generalized Guarded Existential Rules}},
  year      = {2011},
  note      = {{Research Report LIRMM 11006}}
}

@inproceedings{baget2010,
  author    = {Jean{-}Fran{\c{c}}ois Baget and
               Michel Lecl{\`{e}}re and
               Marie{-}Laure Mugnier},
  title     = {{Walking the Decidability Line for Rules with Existential Variables}},
  booktitle = {{Principles of Knowledge Representation and Reasoning: Proceedings
               of the Twelfth International Conference, {KR} 2010, Toronto, Ontario,
               Canada, May 9-13, 2010}},
  editor    = {Fangzhen Lin and
               Ulrike Sattler and
               Miroslaw Truszczynski},
  publisher = {{AAAI} Press},
  year      = {2010},
  url       = {http://aaai.org/ocs/index.php/KR/KR2010/paper/view/1216}
}

@book{ChangKeisler,
 author		= {Chang, C. C. and Keisler, H.J.},
 title		= {{Model Theory}},
 publisher	= {{North-Holland}},
 year={1990}
}

@Article{AvBN98JPL,
   author   =  {Hajnal Andr\'eka and Johan van Benthem and Istv\'an N\'emeti},
   title    =  {Modal languages and bounded fragments of predicate logic}, 
   journal  =  {{Journal of  Philosophical Logic}},
   volume   =  27,
   year     =  1998,
   pages    =  {217-274},
}

@Article{Gr99JSL,
   author   =  {Erich Gr{\"a}del},
   title    =  {On the Restraining Power of Guards}, 
   journal  =  {{Journal of Symbolic Logic}},
   volume   =  64,
   number   =  4,
   year     =  1999,
   pages    =  {1719--1742},
}

@article{BGO14lmcs,
  author    =  {Vince B\'ar\'any and Georg Gottlob and Martin Otto},
  title     =  {Querying the Guarded Fragment},
  journal   =  {Logical Methods in Computer Science},
  volume    =  {10},
  number    =  {2},
  year      =  {2014}
}

@article{Otto12jacm,
  author    =  {Martin Otto},
  title     =  {Highly Acyclic Groups, Hypergraph Covers and the Guarded Fragment}, 
  journal   =  {Journal of the {ACM}},
  volume    =  {59},
  number    =  {1},
  articleno =  {5},
  pages     =  {5:1--5:40},
  year      =  2012,
}

@inproceedings{HMO,
  author    = {Eva Hoogland and Maarten Marx and Martin Otto},
  title     = {Beth Definability for the Guarded Fragment},
  booktitle = {{Logic Programming and Automated Reasoning, 6th International Conference,
               LPAR'99, Tbilisi, Georgia, September 6-10, 1999, Proceedings}},
  pages     = {273--285},
  editor    = {Harald Ganzinger and
               David A. McAllester and
               Andrei Voronkov},
  series    = {Lecture Notes in Computer Science},
  volume    = {1705},
  publisher = {Springer},
  year      = {1999},
  isbn      = {3-540-66492-0},
}

@inproceedings{BtCS11icalp,
  author    = {Vince B{\'a}r{\'a}ny and
               Balder ten Cate and
               Luc Segoufin},
  title     = {Guarded Negation},
  booktitle = {{Automata, Languages and Programming - 38th International Colloquium},
               {ICALP} 2011, {Zurich}, {Switzerland}, {July} 4-8, 2011, Proceedings, Part
               {II}},
  pages     = {356--367},
  editor    = {Luca Aceto and
               Monika Henzinger and
               Jir{\'{\i}} Sgall},
  series    = {Lecture Notes in Computer Science},
  volume    = {6756},
  publisher = {Springer},
  year      = {2011}
}

@article{BtCS15jacm,
 author = {B\'{a}r\'{a}ny, Vince and Cate, Balder Ten and Segoufin, Luc},
 title = {Guarded Negation},
 journal = {Journal of the ~ACM},
 issue_date = {June 2015},
 volume = {62},
 number = {3},
 month = jun,
 year = {2015},
 issn = {0004-5411},
 pages = {22:1--22:26},
 articleno = {22},
 numpages = {26},
 url = {http://doi.acm.org/10.1145/2701414},
 doi = {10.1145/2701414},
 acmid = {2701414},
 publisher = {ACM},
 address = {New York, NY, USA},
 keywords = {First-order logic, decidability, fixpoint logic, guarded fragment, negation, satisfiability}
}

@inproceedings{tCS11stacs,
  author    = {Balder ten Cate and Luc Segoufin},
  title     = {Unary Negation},
  booktitle = {{28th International Symposium on Theoretical Aspects of Computer Science,
               {STACS} 2011, March 10-12, 2011, Dortmund, Germany}},
  pages     = {344--355},
  editor    = {Thomas Schwentick and
               Christoph D{\"{u}}rr},
  series    = {LIPIcs},
  volume    = {9},
  publisher = {Schloss Dagstuhl - Leibniz-Zentrum fuer Informatik},
  year      = {2011},
  isbn      = {978-3-939897-25-5}
}

@article{NSV10TDBS,
  author    = {Alan Nash and Luc Segoufin and Victor Vianu},
  title     = {Views and queries: Determinacy and rewriting},
  journal   = {{ACM} Transactions on Database Systems},
  volume    = {35},
  number    = {3},
 pages = {21:1--21:41},
  year      = {2010}
}

@inproceedings{Marx07pods,
  author    = {Maarten Marx},
  title     = {Queries determined by views: pack your views},
  booktitle = {{Proceedings of the Twenty-Sixth {ACM} {SIGACT-SIGMOD-SIGART} Symposium
               on Principles of Database Systems, June 11-13, 2007, Beijing, China}},
  pages     = {23--30},
  editor    = {Leonid Libkin},
  publisher = {{ACM}},
  year      = {2007}
}

@article{Otto13apal,
  author    = {M. Otto},
  title     = {Expressive completeness through logically tractable models},
  journal   = {Annals of Pure and Applied Logic},
  pages     = {1418--1453},
  year      = {2013}
}

@incollection{GO14survey,
     year = {2014},
     booktitle = {{Johan van Benthem on Logic and Information Dynamics}},
     volume = {5},
     series = {Outstanding Contributions to Logic},
     editor = {Alexandru Baltag and Sonja Smets},
     doi = {10.1007/978-3-319-06025-5_1},
     title = {The Freedoms of (Guarded) Bisimulation},
     url = {http://logic.rwth-aachen.de/pub/graedel/FreedomBisimulation.pdf},
     publisher = {Springer},
     pages = {3--31},
     author = {Erich Gr{\"a}del and Martin Otto}
}

@inproceedings{CGK08kr,
  Author    = {Andrea Cal\`{\i} and Georg Gottlob and Michael Kifer},
  Title     = {Taming the infinite chase: Query answering under expressive relational constraints},
  booktitle = {Principles of Knowledge Representation and Reasoning: Proceedings
               of the {Eleventh International Conference}, {KR} 2008, {Sydney}, {Australia},
               {September} 16-19, 2008},
  pages     = {70--80},
  editor    = {Gerhard Brewka and
               J{\'{e}}r{\^{o}}me Lang},
  publisher = {{AAAI} Press},
  Year      = 2008
}

@inproceedings{CGL09pods,
  author    = {Andrea Cal\`{\i} and
               Georg Gottlob and
               Thomas Lukasiewicz},
  title     = {A general datalog-based framework for tractable query answering
               over ontologies},
  booktitle = {{Proceedings of the Twenty-Eigth {ACM} {SIGMOD-SIGACT-SIGART} Symposium
               on Principles of Database Systems, {PODS} 2009, June 19 - July 1,
               2009, {Providence}, {Rhode Island}, {USA}}},
  pages     = {77--86},
  editor    = {Jan Paredaens and
               Jianwen Su},
  publisher = {{ACM}},
  year      = {2009}
 }

@article{tencate:JSL05,
  author    = {Balder ten Cate},
  title     = {Interpolation for extended modal languages},
  journal   = {{Journal of Symbolic Logic}},
  volume    = {70},
  number    = {1},
  year      = {2005},
  pages     = {223--234}
}

@inproceedings{deciderewrite,
  author    = {Meghyn Bienvenu and
               Carsten Lutz and
               Frank Wolter},
  title     = {Deciding FO-Rewritability in {EL}},
  booktitle = {{Proceedings of the 2012 International Workshop on Description Logics,
               DL-2012, Rome, Italy, June 7-10, 2012}},
  editor    = {Yevgeny Kazakov and
               Domenico Lembo and
               Frank Wolter},
  series    = {{CEUR} Workshop Proceedings},
  volume    = {846},
  publisher = {CEUR-WS.org},
  year      = {2012},
}

@inproceedings{lwinterpol,
  author    = {Carsten Lutz and
               Frank Wolter},
  title     = {Foundations for Uniform Interpolation and Forgetting in
               Expressive Description Logics},
  booktitle = {{{IJCAI} 2011, Proceedings of the 22nd International Joint Conference
               on Artificial Intelligence, Barcelona, Catalonia, Spain, July 16-22, 2011}},
  pages     = {989--995},
  editor    = {Toby Walsh},
  publisher = {{IJCAI/AAAI}},
  year      = {2011}
}

@inproceedings{balderinterpol,
  author    = {Balder ten Cate and
               Enrico Franconi and
               Inan\c{c}  Seylan},
  title     = {Beth Definability in Expressive Description Logics},
  booktitle = {{{IJCAI} 2011, Proceedings of the 22nd International Joint Conference on Artificial Intelligence, Barcelona, Catalonia, Spain, July 16-22, 2011}},
  pages     = {1099--1106},
  editor    = {Toby Walsh},
  publisher = {{IJCAI/AAAI}},
  year      = {2011}
}

@article{ross,
  author    = {Benjamin Rossman},
  title     = {Homomorphism preservation theorems},
  journal   = {{Journal of the ACM}},
  volume    = {55},
  number    = {3},
 pages = {15:1--15:53},
  year      = {2008}
}

@book{vdb,
	author = {Johan van Benthem},
	title = {Modal Logic and Classical Logic},
	publisher = {{Bibliopolis, Napoli}},
	year = {1985}
}

@article{rosen,
  author    = {Eric Rosen},
  title     = {Modal Logic over Finite Structures},
  journal   = {Journal of Logic Language and Information},
  volume    = {6},
  number    = {4},
  year      = {1997},
  pages     = {427-439}
}

@article{otto04APAL,
  author    = {Martin Otto},
  title     = {Modal and guarded characterisation theorems over finite transition systems},
  journal   = {Annals of Pure and Applied Logic}, 
  volume    = {130}, 
  year      = {2004}, 
  pages     = {173-205}
}

@ARTICLE{craig57beth,
  author = {William Craig},
  title = {Three Uses of the {Herbrand-Gentzen} Theorem in Relating Model Theory
        and Proof Theory},
  journal = {Journal of Symbolic Logic},
  year = {1957},
  volume = {22},
  pages = {269--285},
  number = {3}
}

@article{beth,
author={E. W. Beth},
title={On {Padoa's} Method in the Theory of Definitions},
journal={Indagationes Mathematicae},
volume= 15,
pages = {330 -- 339},
year=1953
}

\end{document}